\documentclass[aps,prl,twocolumn,floatfix,superscriptaddress,footinbib]{revtex4-2}
\usepackage{graphicx}
\usepackage{amsmath,amssymb}
\usepackage{physics}
\usepackage[dvipsnames]{xcolor}
\usepackage{hyperref}
\hypersetup{colorlinks,linkcolor=blue,citecolor=blue,urlcolor=blue}
\usepackage{float}
\def\Bc{{\beta_{\mathrm{C}}}}
\def\Bq{{\beta_{\mathrm{Q}}}}
\newcommand{\BqGS}{{\beta_{\mathrm{Q}}^{\min}}}
\newcommand{\LDSset}{\mathcal{S}_{\mathrm{LDS}}}

\usepackage{bm}
\usepackage{amsthm}
\usepackage{cleveref}
\newtheorem{theorem}{Theorem}
\newtheorem{cor}{Corollary}
\crefname{cor}{Corollary}{Corollaries}
\newcommand{\avgD}[1]{\langle #1 \rangle_{D_N^{\,N/2}}}
\newcommand{\avgDs}[1]{\langle #1 \rangle_{D_6^{\,3}}}
\newcommand{\pexp}{p_{\mathrm{exp}}}
\begin{document}

\title{Experimental certification of multipartite Bell correlations using only few-body symmetric correlations}

\newcommand{\AffCQT}{Centre for Quantum Technologies, National University of Singapore, 3 Science Drive 2, Singapore 117543, Singapore}
\newcommand{\AffDoP}{Department of Physics, National University of Singapore,  2 Science Drive 3, Singapore 117542, Singapore}
\newcommand{\AffLeiden}{Instituut-Lorentz, Universiteit Leiden, P.O. Box 9506, 2300 RA Leiden, The Netherlands}
\newcommand{\AffAqa}{\(\langle \text{aQa}^{\text{L}} \rangle\) Applied Quantum Algorithms Leiden, The Netherlands}
\newcommand{\AffLPS}{Advanced Institute of Precision Spectroscopy, East China Normal University, Shanghai 200241, China}
\newcommand{\AffHNL}{Shanghai Branch, Hefei National Laboratory, Shanghai 201315, China}
\newcommand{\AffCQ}{Chongqing Key Laboratory of Precision Optics, Chongqing Institute of East China Normal University, Chongqing 401121, China}
\newcommand{\AffSX}{\mbox{Collaborative Innovation Center of Extreme Optics,\! Shanxi University,\! Taiyuan,\! Shanxi~030006,\! China}}
\newcommand{\AffQSG}{Quantum Science Center of Guangdong–Hong Kong–Macao Greater Bay Area (Guangdong), Shenzhen 518045,  China}
\newcommand{\AffIQCS}{Institute of Quantum Computing and Software, School of Computer Science and Engineering, Sun Yat-sen University, Guangzhou 510006, China}

\author{Jiacheng Sun}
\thanks{These authors contributed equally to this work.}
\affiliation{\AffLPS}

\author{Mengyao Hu}
\thanks{These authors contributed equally to this work.}
\affiliation{\AffCQT}

\author{Pawe\l{} Cie\'{s}li\'{n}ski}
\affiliation{\AffCQT}

\author{Jinfu Chen}
\affiliation{\AffAqa}
\affiliation{\AffLeiden}

\author{Fei Shi}
\affiliation{\AffIQCS}

\author{Bowen Wang}
\affiliation{\AffLPS}

\author{Hongyu Yang}
\affiliation{\AffLPS}

\author{Jizhou Wu}
\affiliation{\AffQSG}

\author{Jordi Tura}
\email{tura@lorentz.leidenuniv.nl}
\affiliation{\AffAqa}
\affiliation{\AffLeiden}

\author{Valerio Scarani}
\email{physv@nus.edu.sg}
\affiliation{\AffCQT}
\affiliation{\AffDoP}

\author{Dian Wu}
\email{dwu@lps.ecnu.edu.cn}
\affiliation{\AffLPS}
\affiliation{\AffHNL}

\author{Jian Wu}
\email{jwu@phy.ecnu.edu.cn}
\affiliation{\AffLPS}
\affiliation{\AffCQ}
\affiliation{\AffSX}

\begin{abstract}

Certifying Bell correlations in increasingly larger multipartite quantum systems is challenging because conventional Bell inequalities rely on many-body correlations, which become experimentally demanding to estimate as system size grows. Here we demonstrate that multipartite Bell correlations can be certified using few-body permutation-invariant (PI) correlations with a fixed maximal correlation order. We experimentally test few-body PI Bell correlation witnesses tailored for six- and eight-photon postselected Dicke states. The addition of a few four-body correlators is found to significantly enhance the noise tolerance compared with previously studied two-body witnesses: for instance, for six qubits, the selected witness tolerates a white-noise fraction of $p_c\approx0.301$, compared with $p_c\approx0.048$ for the best two-body PI inequality with the same two settings per party. For six and eight photons respectively, we observe violations of the witness bounds by $32$ and $8.9$ standard deviations, arising from postselected polarisation states with fidelities of $0.933(4)$ and $0.916(11)$ to the corresponding half-excitation Dicke states. The witnesses for the six- and eight-photon cases use the same symmetric two- and four-body correlators, just with different coefficients. These results demonstrate an experimentally accessible approach for certifying multipartite Bell correlations in larger systems without requiring full-body correlation measurements.
\end{abstract}

\maketitle

\textit{Introduction.}---
Bell inequalities are fundamental tools for detecting quantum nonlocality, the fact that correlations between spatially separated systems cannot, in general, be reproduced by any local hidden-variable (LHV) model~\cite{Brunner_2014, bell_einstein_1964, clauser_proposed_1969}. Loophole-free Bell tests in bipartite systems~\cite{Hensen2015,Giustina2015,Shalm2015,Rosenfeld2017,Rauch2018} have established nonlocality as a feature of nature and as a resource for device-independent quantum information~\cite{Ekert1991,Acin_2007,Pironio_2009,Pironio_2010,Colbeck_2011,Mayers_2003,Pawlowski_review_2025}.
Extending Bell scenarios to multipartite systems, however, remains challenging. From a theoretical perspective, determining the extremal predictions of LHV theories generally becomes exponentially hard with the number of parties. Although important families of multipartite inequalities, such as the Mermin-Ardehali-Belinskii-Klyshko (MABK) \cite{Mermin_1990, Ardehali_1992, Belinskii1993} inequalities, were constructed, they rely on full-body correlators, whose measurement becomes increasingly demanding with growing system size. It is therefore important to identify multipartite Bell signatures whose required correlation order remains experimentally accessible as the number of parties increases.

In multipartite systems, Bell inequalities can be employed as \emph{Bell correlation witnesses}~\cite{schmied_bell_2016,engelsen_bell_2017,cieslinski_no-cost_2025}. Their LHV bounds serve as a threshold: correlations that violate it cannot be reproduced by any LHV model and are therefore Bell correlations, which imply entanglement, whereas the converse does not hold~\cite{acin_violation_2003,Brunner_2014}.
Such witnesses have revealed Bell correlations in atomic ensembles~\cite{schmied_bell_2016,engelsen_bell_2017} and a superconducting processor~\cite{wang_probing_2025}, and they are particularly economical with few-body correlators.
Which few-body inequality to use depends on the target state. In this work, it is chosen as the half-excitation Dicke state $\ket{D_N^{\,N/2}}$, a long-standing benchmark for multiphoton experiments~\cite{kiesel_experimental_2007,prevedel_experimental_2009,wieczorek_experimental_2009,krischek_useful_2011,schwemmer_experimental_2014}. Symmetry of the Dicke state makes permutation-invariant (PI) Bell inequalities~\cite{tura_detecting_2014,tura_nonlocality_2015,cieslinski_unmasking_2023, Munn2026} with few-body correlators a natural choice. Particle exchange symmetry reduces the determination of the LHV bound to polynomially many strategies~\cite{tura_detecting_2014,tura_nonlocality_2015,fadel_bounding_2017,aloy_deriving_2024}, and few-body correlators avoid the need to access increasingly high-order joint correlations as the number of parties grows~\cite{guo_detecting_2023,hu_tropical_2026,chen_optimizing_2025,chen_optimizing_2026}.

Here, we experimentally demonstrate PI Bell correlation witnesses containing only two- and four-body correlators, enabling, to our knowledge, the first certification of Bell correlations in an eight-photon system within this framework. Our setup generates high-fidelity polarization-entangled Dicke states with half excitations for $N=6$ and $N=8$ photons in the case where each photon is found in a different output station.  Owing to no-signaling and to the symmetrization of the correlators in the inequality, one $N$-fold coincidence event (a single sample for the estimation of an $N$-body correlator) provides up to $\binom{N}{2}$ samples of a two-body PI correlator and $\binom{N}{4}$ samples of a four-body one.

We identify experimentally accessible Bell correlation witnesses with two measurement settings per party and optimize their parameters for enhanced robustness against white-noise admixture. Since two-body PI inequalities with two settings per party tolerate at most $5\%$ white noise for $N=6$ and $8$~\cite{tura_nonlocality_2015}, our witnesses use also four-body correlators. The obtained data certify Bell correlations in both systems, in agreement with the theoretical predictions, and show that the violations survive significant deviations of the measurement directions from their optimal values. Our results demonstrate that PI few-body Bell correlation witnesses provide a practical framework for certifying multipartite Bell correlations in multiphoton entangled states.

\textit{PI Bell inequalities as correlation witnesses for Dicke states.}---
Consider a Bell scenario with $N$ parties, where each of them is performing $m$ measurements $\{A_{x}^{(i)}\}_{x=0}^{m-1}$ with two $\pm1$ outcomes. Using the standard notation, we denote it as $(N,m,2)$. A general PI Bell inequality involving
correlators up to the $r$th order takes the form
\begin{equation}
I(\boldsymbol{\alpha})
= \sum_{r=1}^k \frac{1}{r!}\sum_{x_1,\dots,x_r=0}^{m-1}
\alpha_{x_1\cdots x_r}\,\mathcal{S}_{x_1\cdots x_r}
\;\geq\; \Bc(\boldsymbol{\alpha}),
\label{eq:PIansatz}
\end{equation}
where
$\mathcal{S}_{x_1\cdots x_r}=\sum_{(i_1,\cdots,i_r)}
\langle A_{x_1}^{(i_1)}\cdots A_{x_r}^{(i_r)}\rangle$
sums over all $r$-tuples of pairwise different parties, the coefficients
$\alpha_{x_1\cdots x_r}$ are symmetric under index exchange, and
$\Bc(\boldsymbol{\alpha})$
corresponds to the classical bound. Here $\Bc(\boldsymbol{\alpha})$ is evaluated over the vertices of the PI local
polytope $\mathcal{L}_\Pi$ being the projection of the full local polytope $\mathcal{L}$
onto the symmetrized correlators~\cite{tura_detecting_2014,tura_nonlocality_2015,fadel_bounding_2017,aloy_device-independent_2019,guo_detecting_2023,aloy_deriving_2024}. The vertices of the full $\mathcal{L}$ are the local deterministic strategies, whose number scales as $2^{mN}$. By contrast, the number of vertices of  $\mathcal{L}_\Pi$ grows only polynomially in $N$, thus allowing one to probe greater system sizes.

Throughout this work, Eq.~\eqref{eq:PIansatz} is used as a \emph{Bell correlation witness} rather than as a Bell test. For a state $\rho$ measured with fixed local observables, $I=\mathrm{Tr}[\rho\,\hat{I}(\boldsymbol{\theta})]$, where the Bell operator $\hat{I}(\boldsymbol{\theta})$ is obtained from Eq.~\eqref{eq:PIansatz} by replacing every correlator with the corresponding product of observables. If $I<\Bc$, the correlations of $\rho$ cannot be reproduced by any LHV model: $\rho$ is Bell correlated~\cite{schmied_bell_2016} and, in particular, entangled~\cite{acin_violation_2003}. 
Since the data are postselected on $N$-fold coincidences, the violation characterizes the postselected ensemble and does not constitute a loophole-free or device-independent Bell test~\cite{Larsson2014}.

To design the witness, we need a target state and a preferred model of noise (the Dicke states and the white noise) needed to estimate the robustness. Moreover, we use identical measurement settings for every party, for the sake of a PI witness. Importantly, these are guidelines for the design, not assumptions of the certification. They are used to choose the coefficients $\boldsymbol{\alpha}$ and the angles $\boldsymbol{\theta}$ that are expected to produce a robust violation. 
Once these parameters have been fixed, the LHV bound $\Bc$ remains valid for arbitrary states and arbitrary local measurements, and the witness is evaluated directly from the measured correlators. 
The certificate rests on two assumptions. (i) Because the correlators are assembled from $\binom{N}{2}$ measurement configurations recorded one after another, the same state must be prepared, and each setting must correspond to the same local measurement, in all of them (see Fig.~S4 of the Supplemental Material for the stability of the source). (ii) The postselected $N$-fold coincidences must be a fair sample of the emitted states. 
Under these assumptions, deviations of the actual state or measurements from the design model can reduce the observed violation but cannot lead to a false violation.

In this work, we target the half-excitation Dicke states
\begin{equation*}
    \ket{D_N^{\,N/2}} = \binom{N}{N/2}^{-1/2} \sum_i\mathcal{P}_i (|\underbrace{1\cdots 1}_{N/2} 0\cdots 0 \rangle ),
\end{equation*}
for $N=6$ and $N=8$ qubits, where the sum runs over the $\binom{N}{N/2}$ distinct qubit permutations
$\mathcal{P}_i$. Every party $i$ measures the same observables
$A_x^{(i)}=\cos\theta_x\,\sigma_z^{(i)}+\sin\theta_x\,\sigma_x^{(i)}$, $x=0,\dots,m-1$, with angles
$\boldsymbol{\theta}=(\theta_0,\dots,\theta_{m-1})$ shared by all parties.
Then the quantum value on the target state is
$\Bq(\boldsymbol{\alpha})=\min_{\boldsymbol{\theta}}
\bra{D_N^{\,N/2}}\hat{I}(\boldsymbol{\theta})\ket{D_N^{\,N/2}}$.
Since the state is symmetric, $\hat{I}(\boldsymbol{\theta})$ can be evaluated in the $(N{+}1)$-dimensional symmetric
subspace~\cite{tura_nonlocality_2015,hassani_many-body_2025}.
Moreover, $\ket{D_N^{\,N/2}}$ is a common eigenstate of
$\Pi_z=\bigotimes_i\sigma_z^{(i)}$ and $\Pi_x=\bigotimes_i\sigma_x^{(i)}$, which together force every odd-body correlator built from measurements in the $xz$ plane to
vanish on the state, so only even-body terms in
\eqref{eq:PIansatz} contribute to the possible violation (see Supplemental Material, Sec.~SI\,C).

\textit{Noise-robust witnesses with two- and four-body correlators.}---To quantify the noise robustness of a witness, we use the depolarized state 
$\rho(p)=(1-p)\ketbra{D_N^{\,N/2}}{D_N^{\,N/2}}+p\,\openone/2^N$, for which the critical noise threshold
\begin{equation}
p_c(\boldsymbol{\alpha})
= 1 - \frac{\Bc(\boldsymbol{\alpha})}{\Bq(\boldsymbol{\alpha})}
\label{eq:pcrit}
\end{equation}
sets the certification condition $p_\mathrm{exp}<p_c$, where
$p_\mathrm{exp}=(1-F)/(1-2^{-N})$ is the white-noise fraction
equivalent to an experimental fidelity $F$.
Equivalently, the witness certifies the state whenever $F$ exceeds the critical fidelity $F_c=1-p_c(1-2^{-N})$.
The states prepared in this work, with $F=0.933(4)$ ($N=6$) and $F=0.916(11)$ ($N=8$), correspond to $p_\mathrm{exp}\approx0.068$ and $0.084$, respectively.

Among two-body PI inequalities with two settings per party, the largest threshold for $\ket{D_6^{\,3}}$ is attained by $\mathcal{I}_{\mathrm{T}}= 15\,\mathcal{S}_{00} + 6\,\mathcal{S}_{01} - \mathcal{S}_{11}\geq -120$~\cite{tura_detecting_2014,tura_nonlocality_2015}, which tolerates only $p_c\approx 0.048$ ($F_c\approx0.95$). 
No two-body PI inequality with two settings certifies the prepared state, and a third setting raises the threshold only to $p_c\approx 0.093$ (see Supplemental Material below). 
We therefore extend the PI ansatz to four-body correlators while keeping two settings per party. By the parity selection rule, the resulting family in the $(6,2,2)$ scenario has eight independent coefficients (End Matter), and a see-saw optimization of $p_c$ over them in the symmetric subspace reaches $p_c\approx 0.46$ (see Supplemental Material below). The optimum, however, places large weights on all eight correlators, which would amplify statistical uncertainties in a finite-count experiment. We therefore select sparse members involving only $\mathcal{S}_{00}$, $\mathcal{S}_{11}$, $\mathcal{S}_{0011}$, and $\mathcal{S}_{1111}$, all of which are obtained from the $\binom{N}{2}$ configurations with two photons in setting $0$ and $N-2$ in setting $1$. The witness for $N=6$ reads
\begin{equation}
\mathcal{I}_1
= \mathcal{S}_{00}
+ \mathcal{S}_{11}
- \frac{1}{4}\,\mathcal{S}_{0011}
+ \frac{1}{6}\,\mathcal{S}_{1111}
\;\geq\; -18,
\label{eq:BI_24body_sparse}
\end{equation}
with $\Bq\approx-25.764$ at $\boldsymbol{\theta}\approx(3.550,\,5.730)$, $p_c\approx0.301$, and $F_c\approx0.70$, far above $p_\mathrm{exp}\approx0.068$. A second member, $\mathcal{I}_2=15\,\mathcal{S}_{00} - 6\,\mathcal{S}_{11} - \tfrac{5}{4}\,\mathcal{S}_{0011} + \tfrac{1}{8}\,\mathcal{S}_{1111} \geq -135$, has the same correlator content but $p_c\approx0.145$ ($F_c\approx0.86$). The two provide certificates of contrasting robustness, which we probe with states of deliberately reduced fidelity. Both are facets of PI local polytopes (see Supplemental Material below). 
Their quantum values, optimal angles, thresholds, and measured values are listed in Table~\ref{tab:witnesses} (End Matter).

The construction extends to larger $N$: for $N=8$, the family optimum reaches $p_c\approx0.38$, and the sparse member $\mathcal{I}_3=\tfrac{3}{2}\,\mathcal{S}_{00} + \tfrac{9}{2}\,\mathcal{S}_{11} - \tfrac{1}{4}\,\mathcal{S}_{0011} + \tfrac{1}{4}\,\mathcal{S}_{1111} \geq -48$ (Table~\ref{tab:witnesses}), built from the same four correlators, has $p_c\approx0.226$ ($F_c\approx0.78$), far above $p_\mathrm{exp}\approx0.084$. Up to $N=10$, the four-body family retains $p_c\approx0.37$ where the two-body families drop below $0.1$ (End Matter, Fig.~\ref{fig:p_crit_vs_n}).

\begin{figure}
\centering
\includegraphics[width=0.43\textwidth]{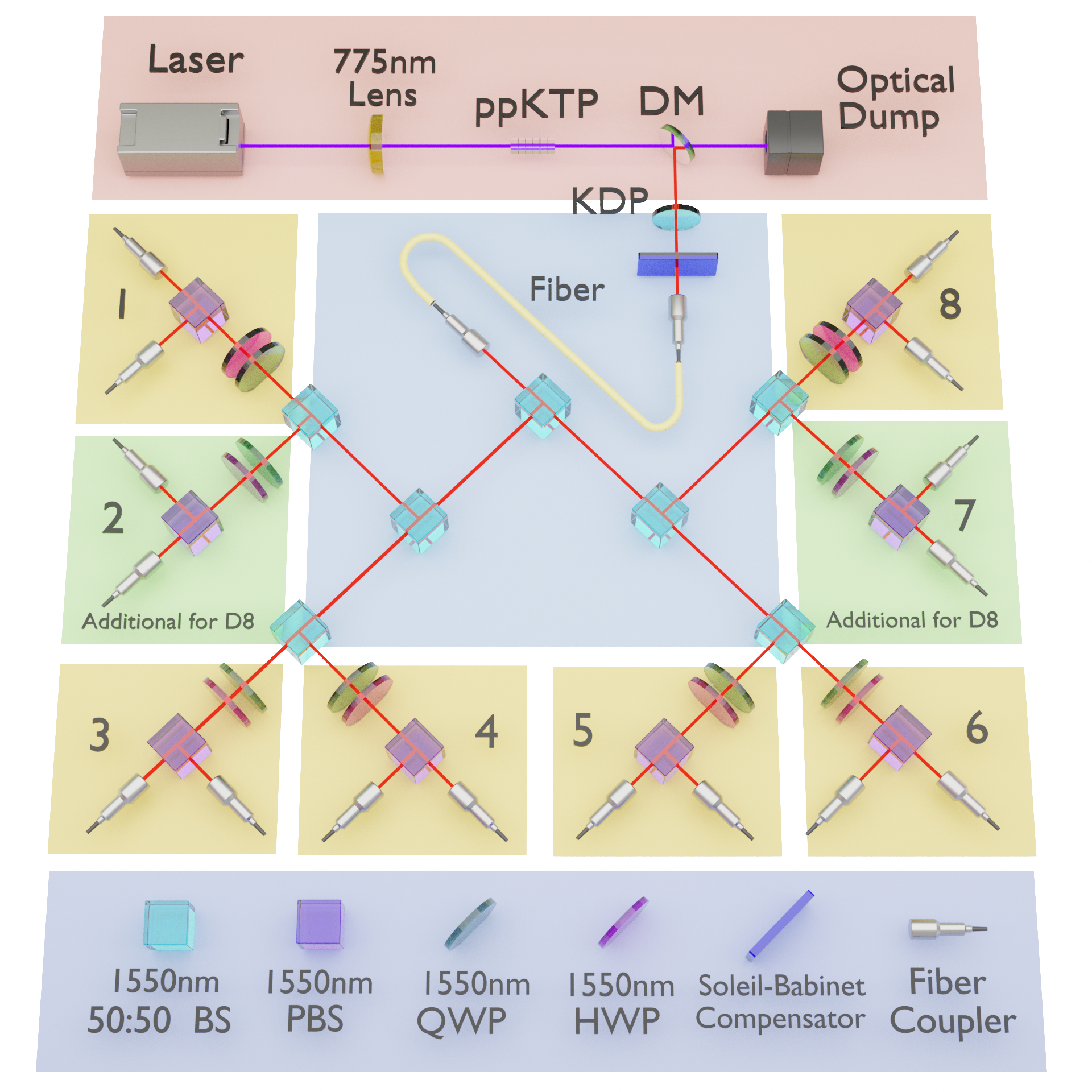}
\caption{Experimental setup. Pulses from a mode-locked Ti:sapphire laser (center wavelength 775 nm, repetition rate 80 MHz) are focused onto a periodically poled $\mathrm{KTiOPO_{4}}$ (ppKTP) crystal using a lens, producing photon pairs via type-II spontaneous parametric down-conversion (SPDC). Residual pump light is removed with a dichroic mirror (DM). The temporal walk-off is compensated by a potassium dihydrogen phosphate (KDP) crystal and accurately adjusted by a pair of Soleil-Babinet compensators. Each output mode is analyzed by a standard QWP-HWP-PBS module implementing the required local projective settings. Modes 2 and 7, including two additional BSs, are used in $\ket{D_8^{\,4}}$ preparation. Photons are coupled into single-mode fibers (SMFs) and detected with superconducting nanowire single-photon detectors (SNSPDs, $\approx90\%$ efficiency). Sixfold coincidences identify $\ket{D_6^{\,3}}$ and eightfold coincidences identify $\ket{D_8^{\,4}}$, which are recorded with a multichannel coincidence counter.}
\label{p1}
\end{figure}

\begin{figure*}
\centering
\includegraphics[width=0.9\textwidth]{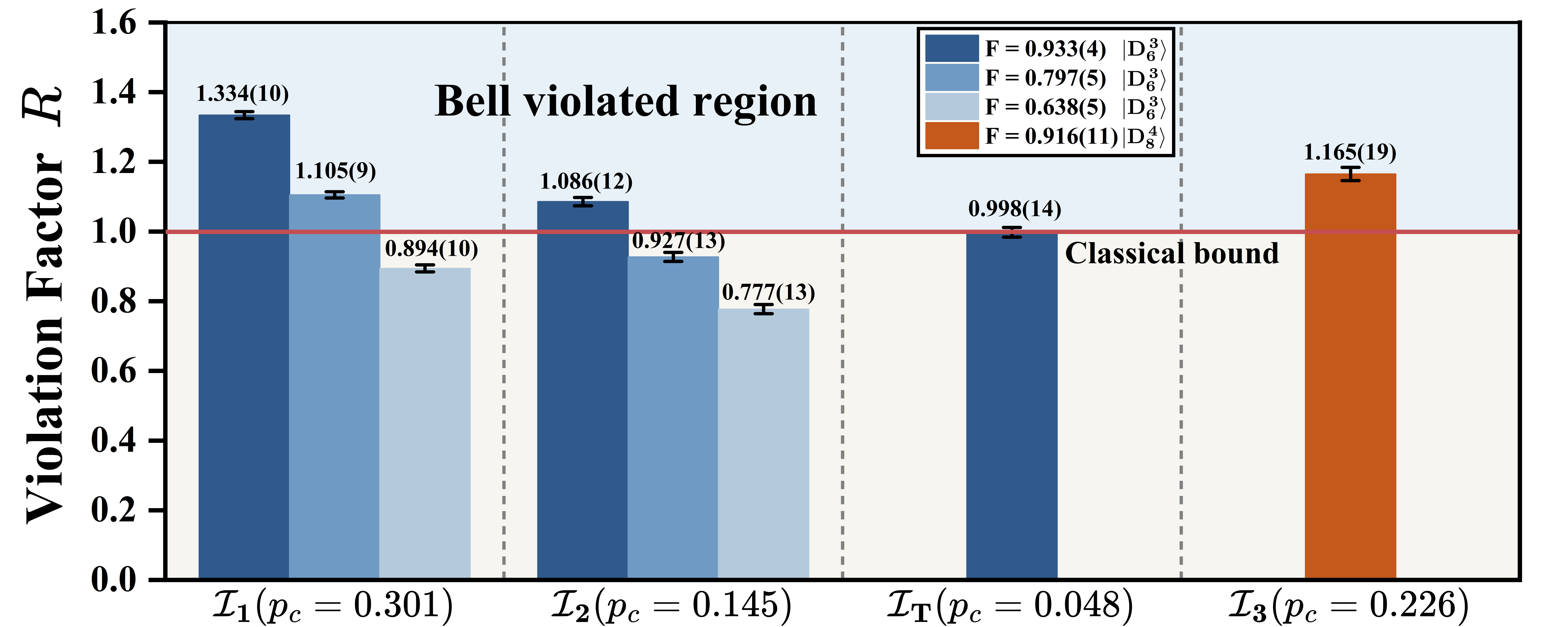}
\caption{
Certification of Bell correlations with few-body witnesses at different state fidelities.
Bars show the violation factor $R=I/\Bc$ for $\mathcal{I}_{1}$, $\mathcal{I}_{2}$, and $\mathcal{I}_{\mathrm{T}}$ evaluated on $\ket{D_6^{\,3}}$ states prepared with fidelities $F = 0.933(4)$, $0.797(5)$, and $0.638(5)$, as well as for $\mathcal{I}_{3}$ evaluated on $\ket{D_8^{\,4}}$ with $F=0.916(11)$.
All violation factors are normalized to the classical bound(red horizontal line), with the shaded regions indicating Bell violation. The predicted critical white-noise fractions $p_{c}$ are indicated below the corresponding inequalities where applicable.  
Because several expectation values are extracted from the same multiphoton coincidence data, their statistical correlations are retained using a Poisson parametric bootstrap (Supplemental Material).
Error bars denote one standard deviation obtained from Poisson parametric-bootstrap realizations of the raw coincidence counts.
}
\label{p2}
\end{figure*}

\textit{Experimental realization.}---
The optical setup used for the Dicke states preparation is shown in FIG.~\ref{p1}. The resulting state is entangled in the polarization basis $\{|H\rangle, |V\rangle \}$, corresponding to $\{|0\rangle, |1\rangle \}$ in the computational basis. To prepare the six-photon Dicke state $\ket{D_6^{\,3}}$, we generate multiple pairs of $\ket{HV}$ photons via a higher-order type-II SPDC process. Photons are then distributed into six spatial modes by five $50:50$ beam splitters (BSs). Detection of one photon in each mode identifies the postselected six-photon Dicke state~\cite{wieczorek_experimental_2009}. 

Using an exact projector decomposition protocol, we characterize the fidelity of $|D_6^{\,3}\rangle$ with $21$ polarization settings~\cite{toth_practical_2009}. By adjusting the walk-off compensation and the temperature of the ppKTP crystal, we prepare $\ket{D_6^{\,3}}$ states with different fidelities of $0.933(4)$, $0.797(5)$, and $0.638(5)$, with a sixfold coincidence count rate of $6.8$ Hz. Preparation of the $N$-photon resource relies on $N$-fold coincidence postselection, and the witnesses are evaluated from the same sixfold or eightfold records using only two- and four-body polarization correlators. The advantage demonstrated here is therefore a reduction of the required correlation order, not of the photon-number detection requirements.
To evaluate the witnesses in the $(6,2,2)$ scenario, we separately measure the $15$ permutation-related measurement configurations required for $\mathcal{I}_1$ and $\mathcal{I}_2$ (Table~\ref{tab:witnesses}).
The experimental witness values $I$ are constructed from the two-body ($m_{00}$ and $m_{11}$) and four-body ($m_{0011}$ and $m_{1111}$) marginal correlators extracted from sixfold coincidence events (Supplemental Material).
To compare witnesses with different numerical scales, we use the violation factor $R=I/\Bc$. Since $I$ and $\Bc$ are negative, $R>1$ signals a violation.
Fig.~\ref{p2} compares the measured violation factors at the three fidelities. Note that the underlying PI correlators are compared with the ideal-state predictions in Fig.~\ref{fig:correlators} (End Matter).

We experimentally compare the certification performance of $\mathcal{I}_{1}$ and $\mathcal{I}_{2}$ (Table~\ref{tab:witnesses}), with the reference two-body PI inequality $\mathcal{I}_{\mathrm{T}}$~\cite{tura_detecting_2014, tura_nonlocality_2015}. As shown in Fig.~\ref{p2}, for $|D_6^{\,3}\rangle$ with $F=0.933(4)$, $\mathcal{I}_{1}$ yields a violation factor of $R=1.334(10)$, whereas $\mathcal{I}_{\mathrm{T}}$ gives $R=0.998(14)$ and remains statistically compatible with its classical bound.
At the lower fidelity of $F=0.797(5)$, $\mathcal{I}_{1}$ still detects Bell correlations, with $R=1.105(9)$, while the second four-body witness $\mathcal{I}_{2}$ gives $R=0.927(13)$ and no longer violates its bound. Neither $\mathcal{I}_{1}$ nor $\mathcal{I}_{2}$ detects a violation at $F=0.638(5)$, in line with the predicted hierarchy $p_c(\mathcal{I}_1)>p_c(\mathcal{I}_2)>p_c(\mathcal{I}_{\mathrm T})$.

We further demonstrate the robustness of the PI witness $\mathcal{I}_{1}$ against collective angular misalignments of the local measurement bases in the $xz$ plane.
Theoretically, we map the theoretical violation landscape for an ideal six-photon Dicke state around the optimal angles $(\theta_{0},\theta_{1})=(3.550,5.730)$ and experimentally probe six displaced points ($P_{1}$--$P_{6}$) using $\ket{D_6^{\,3}}$ with fidelity $F=0.933(4)$. 
At each point, the same pair of perturbed measurement angles is applied to all six photons.
The positions of these points and the corresponding measured violation factors are shown in Fig.~\ref{fig:tolerance}.
All six displaced settings retain Bell violations of $6.8$--$14$ standard deviations, demonstrating that the PI witness preserves its certification capability under measurement angular errors. 
Because $\Bc$ holds for arbitrary local measurements, these violations are valid certificates irrespective of the calibration of the settings, under assumptions (i) and (ii) above. The design model only determines how large the violation is.

To test the witnesses at a larger system size, we extend the experiment to the eight-photon Dicke state $\ket{D_8^{\,4}}$ by adding two 50:50 BSs to the setup in Fig.~\ref{p1}. The state characterization and witness evaluation are conditioned on eightfold coincidences, recorded at a rate of $0.031$~Hz. Using an exact decomposition of the target-state projector with $29$ polarization settings (Supplemental Material), we measure a fidelity of $F=0.916(11)$. 
Crucially, this extension preserves the sparse measurement structure of the witnesses. Despite two available local settings per photon, only $\binom{N}{2}$ permutation-related joint configurations are required, instead of $2^N$ combinations in the complete two-setting measurement space.
From the $28$ permutation-related configurations of $\mathcal{I}_3$ (Table~\ref{tab:witnesses}) we obtain $\mathcal{I}_3=-55.9(9)$, i.e., $R=1.165(19)$, a violation of the classical bound $\Bc=-48$ by $8.9$ standard deviations. 
This result demonstrates that the increase from six to eight photons does not require a higher experimentally accessed correlation order: the same four PI correlators remain sufficient to certify Bell correlations in both systems.

\begin{figure*}
\centering
\includegraphics[width=2\columnwidth]{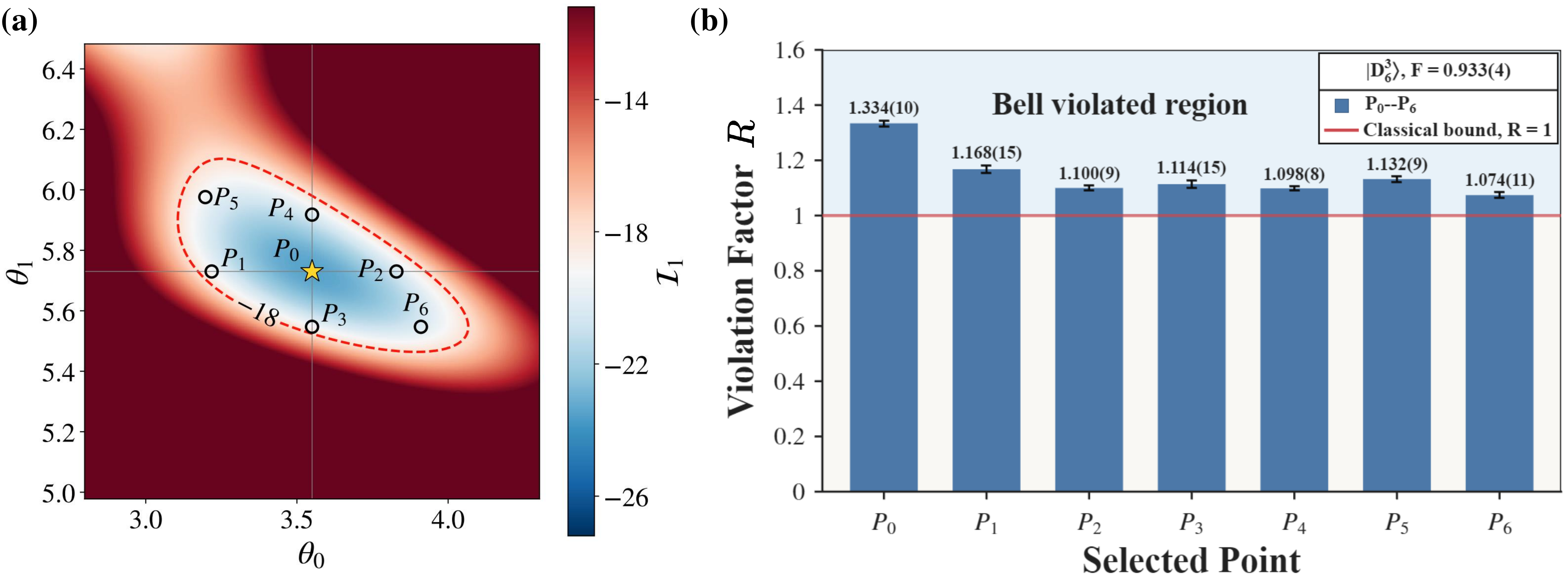}
\caption{Tolerance of $\mathcal{I}_1$ to collective misalignment of the measurement settings. (a) Predicted value $\mathcal{I}_1(\theta_0,\theta_1)$ for ideal $\ket{D_6^{\,3}}$ ($F=1$), as a function of the shared angles $(\theta_0,\theta_1)$. The red dashed contour marks the classical bound $\mathcal{I}_1=-18$ and Bell correlations are certified inside. The points $P_0-P_6$ denote experimental tested settings. (b) Coordinates and measured violation factors $R$ of the selected settings ($P_0-P_6$). Corresponding coordinates are listed in Table.\ref{tab:selected_points} (End Matter). Error bars denote one standard deviation obtained from Poisson parametric-bootstrap realizations of the raw coincidence counts.}
\label{fig:tolerance}
\end{figure*}

\textit{Discussion and outlook.}---
We have prepared high-fidelity six- and eight-photon Dicke states with fidelities of $0.933(4)$ and $0.916(11)$, respectively. The six-photon fidelity substantially exceeds the $0.654(24)$ reported in Ref.~\cite{wieczorek_experimental_2009}. These high-quality resources enable experimental certification of Bell correlations using PI witnesses containing only two- and four-body correlators. Measurements across different six-photon preparations distinguish witnesses with different noise tolerances, and the more noise-tolerant witness retains significant violations under common angular offsets of the local measurement bases. The eight-photon test extends the certification while retaining the same four symmetric correlators and maximal correlator order. These results establish that, for the experimentally prepared six- and eight-photon Dicke resources, multipartite Bell correlations can be certified without increasing the accessed correlation order, providing an experimentally practical alternative to full-body Bell measurements. 
More broadly, few-body PI Bell correlation witnesses offer a
correlation-order-scalable approach to probing nonclassical correlations
in increasingly large multipartite systems.

\vspace{.2cm}\noindent
\textit{Acknowledgements.}
This work was supported by Quantum Science and Technology-National Science and Technology Major Project (Grant No. 2025ZD0301000) ,  by the National Natural Science Foundation of China under the General Program, Grant No. 12474487, Shanghai Pilot Program for Basic Research
(TQ20240204), the Science and Technology Commission of Shanghai Municipality (231c1402900), and Shanghai Science and Technology Innovation Action Plan (Grant No.~24LZ1400200).
M.H., P.C. and V.S. are supported by the National Research Foundation, Singapore through the National Quantum Office, hosted in A*STAR, under its Centre for Quantum Technologies Funding Initiative (S24Q2d0009).
J.F.C. and J.T. acknowledge the support received from the European Union's Horizon Europe research and innovation programme through the ERC StG FINE-TEA-SQUAD (Grant No.~101040729).
This publication is part of the `Quantum Inspire - the Dutch Quantum Computer in the Cloud' project (with project number [NWA.1292.19.194]) of the NWA research program `Research on Routes by Consortia (ORC)', which is funded by the Netherlands Organization for Scientific Research (NWO).  The views and opinions expressed here are solely those of the authors and do not necessarily reflect those of the funding institutions. None of the funding institutions can be held responsible for them.

\vspace{.2cm}\noindent
\textit{Data and Code availability.}---Code and data will be released upon publication.
\bibliography{references}

\onecolumngrid
\begin{center}
    \textbf{\large End Matter}
\end{center}

\begin{table}[H]
\caption{PI Bell correlation witnesses evaluated in this work. For each witness, we list its explicit form with the classical bound $\Bc$, computed exactly on the vertices of their corresponding PI local polytopes. The quantum value $\Bq=\min_{\boldsymbol{\theta}}\bra{D_N^{\,N/2}}\hat{I}(\boldsymbol{\theta})\ket{D_N^{\,N/2}}$ on the ideal target state and the shared measurement angles $(\theta_0,\theta_1)$, in radians, at which it is attained. The critical white-noise fraction $p_c=1-\Bc/\Bq$, corresponding to the critical fidelity $F_c=1-p_c(1-2^{-N})$ quoted in the text. We also list the measured value $I$, violation factor $R=I/\Bc$, and significance $(|I|-|\Bc|)/\sigma_I$ at the highest preparation fidelity, $F=0.933(4)$ for $N=6$ and $F=0.916(11)$ for $N=8$. Numbers in parentheses are one-standard-deviation uncertainties in the last digits, from a Poisson parametric bootstrap of the raw coincidence counts (see Supplemental Material below). $\mathcal{I}_{\mathrm{T}}$ is the facet of its corresponding PI local polytope with two-body correlators~\cite{tura_detecting_2014,tura_nonlocality_2015}.  $\mathcal{I}_1$, $\mathcal{I}_2$, and $\mathcal{I}_3$ are facets of their corresponding PI local polytopes with two- and four-body correlators (see Supplemental Material below).}
\label{tab:witnesses}
\setlength{\tabcolsep}{3pt}
\begin{ruledtabular}
\begin{tabular}{llcccccc}
\multicolumn{2}{l}{Witness} & $\Bq$ & $(\theta_0,\theta_1)$ & $p_c$ & Measured $I$ & $R$ & Significance \\
\hline
\multicolumn{8}{l}{$(6,2,2)$ scenario, target $\ket{D_6^{\,3}}$, $p_\mathrm{exp}\approx0.068$:} \\
$\mathcal{I}_{\mathrm{T}}$ & $15\,\mathcal{S}_{00} + 6\,\mathcal{S}_{01} - \mathcal{S}_{11} \geq -120$ & $-126.041$ & $(3.264,\,2.165)$ & $0.048$ & $-119.74(166)$ & $0.998(14)$ & none \\
$\mathcal{I}_1$ & $\mathcal{S}_{00} + \mathcal{S}_{11} - \tfrac{1}{4}\,\mathcal{S}_{0011} + \tfrac{1}{6}\,\mathcal{S}_{1111} \geq -18$ & $-25.764$ & $(3.550,\,5.730)$ & $0.301$ & $-24.012(186)$ & $1.334(10)$ & $32\sigma$ \\
$\mathcal{I}_2$ & $15\,\mathcal{S}_{00} - 6\,\mathcal{S}_{11} - \tfrac{5}{4}\,\mathcal{S}_{0011} + \tfrac{1}{8}\,\mathcal{S}_{1111} \geq -135$ & $-157.907$ & $(2.864,\,0.749)$ & $0.145$ & $-146.58(168)$ & $1.086(12)$ & $6.9\sigma$ \\
\hline
\multicolumn{8}{l}{$(8,2,2)$ scenario, target $\ket{D_8^{\,4}}$, $p_\mathrm{exp}\approx0.084$:} \\
$\mathcal{I}_3$ & $\tfrac{3}{2}\,\mathcal{S}_{00} + \tfrac{9}{2}\,\mathcal{S}_{11} - \tfrac{1}{4}\,\mathcal{S}_{0011} + \tfrac{1}{4}\,\mathcal{S}_{1111} \geq -48$ & $-62.014$ & $(0.369,\,5.844)$ & $0.226$ & $-55.905(889)$ & $1.165(19)$ & $8.9\sigma$ \\
\end{tabular}
\end{ruledtabular}
\end{table}

\twocolumngrid
\raggedbottom

\textit{Witnesses used in this work.}---Table~\ref{tab:witnesses} collects the four PI Bell correlation witnesses evaluated in the experiment. $\mathcal{I}_1$, $\mathcal{I}_2$, and $\mathcal{I}_3$ are sparse members of the two- and four-body PI family of Eq.~\eqref{eq:PIansatz} with two settings, whose eight coefficients $\boldsymbol{\alpha}=(\alpha_{00},\alpha_{01},\alpha_{11};\alpha_{0000},\alpha_{0001},\alpha_{0011},\alpha_{0111},\alpha_{1111})$ multiply $\mathcal{S}_{00}/2$, $\mathcal{S}_{01}$, $\mathcal{S}_{11}/2$, $\mathcal{S}_{0000}/4!$, $\mathcal{S}_{0001}/3!$, $\mathcal{S}_{0011}/(2!)^2$, $\mathcal{S}_{0111}/3!$, and $\mathcal{S}_{1111}/4!$, respectively. In the three selected witnesses, the two-body correlator $\mathcal{S}_{01}$ and the four-body correlators $\mathcal{S}_{0000}$, $\mathcal{S}_{0001}$, and $\mathcal{S}_{0111}$ carry vanishing weight, so that each witness is evaluated from the $\binom{N}{2}$ configurations with two photons in setting $0$ and $N-2$ photons in setting $1$. The optimal member of the family reaches $p_c\approx0.46$ for $N=6$ and $0.38$ for $N=8$ but places large weights on all eight correlators (see Supplemental Material below).

\textit{Noise robustness versus system size.}---
To assess how the advantage of four-body correlators evolves with
system size, Fig.~\ref{fig:p_crit_vs_n} shows the largest critical
noise threshold $p_c$ we found within each PI family for the
half-excitation Dicke states $\ket{D_N^{\,N/2}}$ up to $N=10$.
The two- and four-body family with two measurement settings
reaches $p_c\approx0.53$, $0.46$, $0.38$, and $0.37$ for
$N=4$, $6$, $8$, and $10$, whereas the two-body families with
two and three measurement settings stay below $0.1$ for $N\geq6$.
The slow decrease of the four-body threshold indicates that the fixed-order strategy remains applicable beyond the sizes demonstrated here. 

\begin{figure}[H]
    \centering
    \includegraphics[width=\columnwidth]{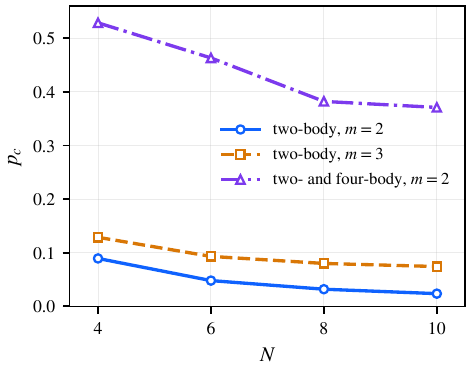}
    \caption{
    Optimal critical noise threshold $p_c$ of PI Bell inequalities
    for the half-excitation Dicke states $\ket{D_N^{\,N/2}}$,
    $N=4$--$10$, obtained by a see-saw optimization alternating a
    linear program over the vertices of $\mathcal{L}_\Pi$ with
    measurement-angle optimization in the symmetric subspace.
    The three families contain two-body correlators with $m=2$ or
    $m=3$ settings, or two- and four-body correlators with $m=2$.
    }
    \label{fig:p_crit_vs_n}
\end{figure}

\textit{Measurement-angle tolerance.}---
Table~\ref{tab:selected_points} lists the coordinates of the
measurement settings $P_0$--$P_6$ in the angle-tolerance test of
Fig.~\ref{fig:tolerance}, together with the measured violation factors of
$\mathcal{I}_1$. All six displaced settings violate the classical
bound $\Bc=-18$ by $6.8\sigma$ to $14\sigma$, confirming that the
violation does not rely on finely tuned measurement directions.

\begin{table}[H]
\caption{
Coordinates and measured violation factors of the selected settings
for $\mathcal{I}_1$ at $F=0.933(4)$. In the
$(\theta_0,\theta_1)$ plane, $P_0$ denotes the optimal setting and
$P_1$--$P_6$ denote different basis deviations. Error bars represent
one standard deviation obtained from Poisson parametric bootstrapping
of the raw sixfold coincidence counts.
}
\label{tab:selected_points}
\centering
\begin{ruledtabular}
\begin{tabular}{cccc}
Setting
& $(\theta_0,\theta_1)$ (rad)
& $R$
& Significance \\
\hline
$P_0$ & $(3.550,\;5.730)$ & $1.334 \pm 0.010$ & $32\sigma$ \\
$P_1$ & $(3.218,\;5.730)$ & $1.168 \pm 0.015$ & $12\sigma$ \\
$P_2$ & $(3.829,\;5.730)$ & $1.100 \pm 0.009$ & $12\sigma$ \\
$P_3$ & $(3.550,\;5.547)$ & $1.114 \pm 0.015$ & $7.8\sigma$ \\
$P_4$ & $(3.550,\;5.918)$ & $1.098 \pm 0.008$ & $13\sigma$ \\
$P_5$ & $(3.197,\;5.976)$ & $1.132 \pm 0.009$ & $14\sigma$ \\
$P_6$ & $(3.910,\;5.547)$ & $1.074 \pm 0.011$ & $6.8\sigma$ \\
\end{tabular}
\end{ruledtabular}
\end{table}

\textit{Measured PI correlators.}---Figure~\ref{fig:correlators} compares the two- and four-body PI correlators entering $\mathcal{I}_1$, $\mathcal{I}_2$, $\mathcal{I}_3$, and the displaced settings $P_5$ and $P_6$ with the predictions for the ideal Dicke states as a function of the measurement angle.

\begin{figure}[H]
\centering
\includegraphics[width=0.43\textwidth]{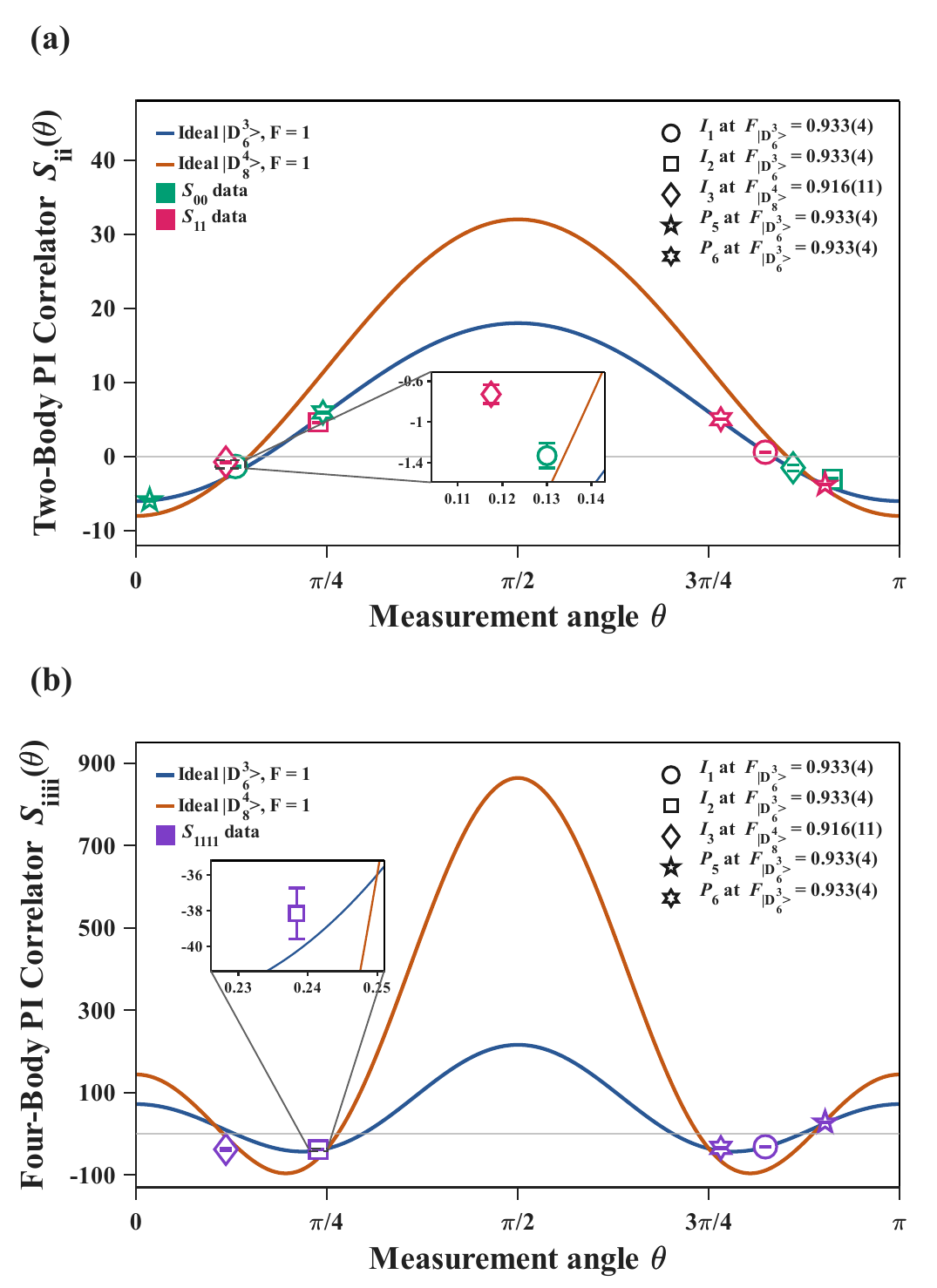}
\caption{Angular dependence of (a) the two-body PI correlators $\mathcal{S}_{xx}$, $x\in\{0,1\}$, and (b) the four-body PI correlator $\mathcal{S}_{1111}$. Here $\theta$ specifies the measurement direction $A(\theta)=\cos\theta\,\sigma_z+\sin\theta\,\sigma_x$ in the $xz$ plane. The solid blue and orange curves show the predictions for the ideal states $\ket{D_6^{\,3}}$ and $\ket{D_8^{\,4}}$, respectively. Hollow symbols denote the measured correlator sums obtained from the measurement settings of $\mathcal{I}_1$, $\mathcal{I}_2$, and $\mathcal{I}_3$ and from the displaced settings $P_5$ and $P_6$. Colors distinguish the measured correlators, while marker shapes identify the corresponding measurement sets.
Because several expectation values are extracted from the same multiphoton coincidence data, their statistical correlations are retained using a Poisson parametric bootstrap (see Supplemental Material). Error bars denote one standard deviation obtained from Poisson parametric-bootstrap realizations of the raw coincidence counts.}
\label{fig:correlators}
\end{figure}
\clearpage
\onecolumngrid
\setcounter{section}{0}
\setcounter{subsection}{0}
\setcounter{subsubsection}{0}
\setcounter{equation}{0}
\setcounter{figure}{0}
\setcounter{table}{0}
\setcounter{secnumdepth}{3}
\renewcommand{\thesection}{S\Roman{section}}
\renewcommand{\theequation}{S\arabic{equation}}
\renewcommand{\thefigure}{S\arabic{figure}}
\renewcommand{\thetable}{S\Roman{table}}
\renewcommand{\theHsection}{supp.\arabic{section}}
\renewcommand{\theHsubsection}{supp.\arabic{section}.\arabic{subsection}}
\renewcommand{\theHsubsubsection}{supp.\arabic{section}.\arabic{subsection}.\arabic{subsubsection}}
\renewcommand{\theHequation}{supp.\arabic{equation}}
\renewcommand{\theHfigure}{supp.\arabic{figure}}
\renewcommand{\theHtable}{supp.\arabic{table}}

\begin{center}
{\large\bfseries Supplemental Material}\\[0.3em]
{\bfseries Experimental certification of multipartite Bell correlations using only few-body symmetric correlations}
\end{center}

\tableofcontents
\section{Permutation-invariant Bell Inequalities}
\label{supp:sec:pi_bell}
This section introduces the theory of permutation-invariant (PI) Bell inequalities used in this work: the Bell scenario, the PI ansatz and its classical bound, the parity selection rule obeyed by the target Dicke states and the evaluation of the quantum value with the target Dicke state fixed. We also show that two-body PI Bell inequalities cannot certify
the experimentally prepared states.
In the main text, these inequalities are used as \emph{Bell correlation witnesses}: a violation of the classical bound by the measured, postselected correlations certifies that the prepared state is of Bell correlation. Since the classical bound holds for arbitrary local measurements, this conclusion requires no calibration of the local settings. On the other hand, the data are postselected on $N$-fold coincidences and no loophole is closed, so no loophole-free or device-independent statement is made.

\subsection{Multipartite Bell scenarios}
\label{supp:sec:scenarios}

Consider $N$ spatially separated parties $A_1,\dots,A_N$ sharing an
$N$-partite resource. Each party $A_i$ chooses one of $m$ 
measurements, labelled $x_i\in\{0,\dots,m-1\}$, with outcomes
$a_i=\pm 1$. We call this the Bell scenario $(N,m,2)$. The experiment
can be described by the behavior $\{P(\mathbf{a}|\mathbf{x})\}$.
By non-signaling, marginal distributions on any subset of parties are well
defined.

Correlations that admit a local hidden variable (LHV) model form a polytope $\mathcal{L}$ whose vertices are the local
deterministic strategies, in which party $A_i$ outputs a preassigned
value $s^{(i)}_{x}=\pm 1$ for every setting $x$.
A
single party thus has $2^{m}$ deterministic strategies, collected in
the set $\LDSset$. Quantum correlations arise from Born's rule,
$P(\mathbf{a}|\mathbf{x})
=\Tr[\rho_N\,\mathcal{M}^{a_1}_{1,x_1}\otimes\cdots\otimes
\mathcal{M}^{a_N}_{N,x_N}]$, and form a convex set
$\mathcal{Q}$.

For dichotomic outcomes it is convenient to work with correlators. 
For
any subset of parties $\{i_1,\dots,i_r\}$ and settings
$x_1,\dots,x_r$,
\begin{equation}\label{eq:correlator_def}
\expval*{A_{x_1}^{(i_1)}\cdots A_{x_r}^{(i_r)}}
=\sum_{\mathbf{a}} a_{i_1}\cdots a_{i_r}\,P(\mathbf{a}|\mathbf{x}),
\end{equation}
the expectation value of the product of the corresponding outcomes. On
a local deterministic strategy (LDS) every correlator factorizes,
$\expval*{A_{x_1}^{(i_1)}\cdots A_{x_r}^{(i_r)}}
=s^{(i_1)}_{x_1}\cdots s^{(i_r)}_{x_r}$.

A linear Bell inequality is a bound $\mathcal{I}\geq\Bc$ satisfied by
all of $\mathcal{L}$, where $\mathcal{I}$ is a linear combination of
correlators and the classical bound $\Bc$ is attained at a vertex. The
tight inequalities are the facets of $\mathcal{L}$. A canonical
example is the CHSH inequality
$\expval{A_0 B_0}+\expval{A_0 B_1}+\expval{A_1 B_0}
-\expval{A_1 B_1}\geq -2$ for $N=m=2$. The corresponding quantum
optimum $\Bq=\inf_{\mathcal{Q}}\mathcal{I}$ is the Tsirelson bound~\cite{cirelson_quantum_1980}.

\subsection{Permutation-invariant Bell inequalities with few-body correlators}
\label{supp:sec:pi_fewbody}

A Bell expression is permutation-invariant (PI) if it is unchanged
under any relabeling of the parties. It can then be written in terms of
the symmetrized correlators
\begin{equation}\label{eq:Scorr_supp}
\mathcal{S}_{x_1\cdots x_r}
=\sum_{(i_1,\dots,i_r)}
\expval*{A_{x_1}^{(i_1)}\cdots A_{x_r}^{(i_r)}},
\end{equation}
where the sum runs over all $r$-tuples of pairwise different parties.
A
general PI Bell inequality with correlators up to $k$-body order reads
\begin{equation}\label{eq:PIansatz_supp}
I(\boldsymbol{\alpha})
= \sum_{r=1}^{k}\frac{1}{r!}\sum_{x_1,\dots,x_r=0}^{m-1}
\alpha_{x_1\cdots x_r}\,\mathcal{S}_{x_1\cdots x_r}
\;\geq\;\Bc(\boldsymbol{\alpha}),
\end{equation}
with coefficients symmetric in their indices; this is Eq.~(1) of the
main text. 
For example, for $m=2$ settings and up to two-body correlators, $k=2$, it
reduces to the five-parameter family of PI Bell inequalities~\cite{tura_detecting_2014},
\begin{equation}\label{eq:PI_22_5coeff}
I(\boldsymbol{\alpha})=\alpha_0\,\mathcal{S}_0+\alpha_1\,\mathcal{S}_1
+\frac{\alpha_{00}}{2}\,\mathcal{S}_{00}
+\alpha_{01}\,\mathcal{S}_{01}
+\frac{\alpha_{11}}{2}\,\mathcal{S}_{11}\;\geq\;\Bc(\boldsymbol{\alpha}) .
\end{equation}

To evaluate the classical bound $\Bc(\boldsymbol{\alpha})$, we collect the
independent symmetrized correlators into a vector
$\vec{\mathcal{S}}
=\bigl(\mathcal{S}_{x_1\cdots x_r}\bigr)_{x_1\le\cdots\le x_r,\,r\le k}
\in\mathbb{R}^{D}$, with $D=\sum_{r=1}^{k}\binom{m+r-1}{r}$.  
The Bell
expression \eqref{eq:PIansatz_supp} depends on the underlying
correlations only through $\vec{\mathcal{S}}$ and is linear in it. For
a point $v\in\mathbb{R}^{D}$, we write $I(\boldsymbol{\alpha};v)$ for
\eqref{eq:PIansatz_supp} evaluated at the correlator values $v$, so
that $I(\boldsymbol{\alpha})=I(\boldsymbol{\alpha};\vec{\mathcal{S}})$.
The classical bound can therefore be computed on the PI local polytope
$\mathcal{L}_\Pi\subset\mathbb{R}^{D}$, the projection of the local
polytope $\mathcal{L}$ onto the coordinates \eqref{eq:Scorr_supp}. Hence the classical bound
\begin{equation}\label{eq:Bc_vertex}
  \Bc(\boldsymbol{\alpha})
  =\min_{v\in\mathrm{vert}(\mathcal{L}_\Pi)} I(\boldsymbol{\alpha};v),
\end{equation}
where $\mathrm{vert}(\mathcal{L}_\Pi)$ denotes the vertex set of
$\mathcal{L}_\Pi$.  Every vertex of $\mathcal{L}_\Pi$ is the image of
an LDS under this projection, and since the coordinates
\eqref{eq:Scorr_supp} are invariant under relabeling the parties, the
image depends only on \emph{how many} parties apply each of the $2^m$
single-party deterministic strategies in $\LDSset$, not on which ones.
Distinct images are thus labeled by the counts
$(n_1,\dots,n_{2^m})$, $n_s\ge 0$, $\sum_s n_s=N$, so the number of
vertices is at most
$\binom{N+2^m-1}{2^m-1}=O(N^{2^m-1})$. 
So it is polynomial in $N$ at fixed $m$,
in contrast to the $2^{mN}$ vertices of $\mathcal{L}$.  
In the
two-body $(N,2,2)$ case, $D=5$ and $\mathcal{L}_\Pi\subset\mathbb{R}^5$
has $2(N^2+1)$ vertices~\cite{tura_detecting_2014}.
This keeps $\Bc(\boldsymbol{\alpha})$ exactly computable at
every system size considered in this work.

\subsection{Parity selection rule for half-excitation Dicke states}
\label{supp:sec:parity}

For even $N$, the half-excitation Dicke state is
\begin{equation}
\label{eq:dicke_supp}
    \ket{D_N^{\,N/2}} = \binom{N}{N/2}^{-1/2}
    \sum_i\mathcal{P}_i (|\underbrace{11\cdots 1}_{N/2} 0\cdots 0 \rangle ),
\end{equation}
where the sum runs over the $\binom{N}{N/2}$ distinct permutations
$\mathcal{P}_i$ of the qubits.
 
\begin{theorem}[Parity selection rule]\label{thm:selection}
Every correlator of odd order $r$ built from measurements in the $xz$ plane, $A_x=\cos\theta_x\,\sigma_z+\sin\theta_x\,\sigma_x$ (Eq.~\eqref{eq:sym_meas_supp} below), vanishes on
$\ket{D_N^{\,N/2}}$, that is,
\begin{equation}\label{eq:selection}
\avgD{A_{x_1}^{(i_1)}\cdots A_{x_r}^{(i_r)}}=0
\qquad\text{for } r \text{ odd}.
\end{equation}
In particular, all PI correlators $\mathcal{S}_{x_1\cdots x_r}$ with
odd $r$ vanish. 
More generally, a Pauli string on distinct sites containing $a$ factors $\sigma_x$ and $c$ factors $\sigma_z$ has vanishing expectation value on $\ket{D_N^{\,N/2}}$ unless $a$ and $c$ are both even.
\end{theorem}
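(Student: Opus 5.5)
The plan is to prove the general Pauli-string statement first and then deduce the correlator statement from it by multilinear expansion. The argument uses the two global symmetries mentioned in the main text, $\Pi_z=\bigotimes_i\sigma_z^{(i)}$ and $\Pi_x=\bigotimes_i\sigma_x^{(i)}$, and shows that $\ket{D_N^{\,N/2}}$ is a common eigenvector of both.

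\emph{Eigenvalue of $\Pi_z$.} Every computational basis string in the superposition has exactly $N/2$ ones. Hence $\Pi_z\ket{D_N^{\,N/2}}=(-1)^{N/2}\ket{D_N^{\,N/2}}$.

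\emph{Eigenvalue of $\Pi_x$.} The operator $\Pi_x$ flips every bit, so it maps a string with $N/2$ ones to a string with $N/2$ zeros, i.e.\ again $N/2$ ones. This map is a bijection of the set of permutations being summed. Therefore $\Pi_x\ket{D_N^{\,N/2}}=\ket{D_N^{\,N/2}}$.

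\emph{Pauli strings.} Let $P$ be a Pauli string on distinct sites with $a$ factors $\sigma_x$, $c$ factors $\sigma_z$, and identities elsewhere. On each site, $\sigma_x$ commutes with $\sigma_x$ and anticommutes with $\sigma_z$. Thus $\Pi_x P\Pi_x=(-1)^c P$, and likewise $\Pi_z P\Pi_z=(-1)^a P$. Write $\ket{D}$ for $\ket{D_N^{\,N/2}}$. Since $\ket{D}$ is an eigenvector of $\Pi_x$ with eigenvalue $\pm1$,
\[
\bra{D}P\ket{D}=\bra{D}\Pi_x P\Pi_x\ket{D}=(-1)^c\bra{D}P\ket{D}.
\]
So the expectation vanishes if $c$ is odd, and the same argument with $\Pi_z$ shows it vanishes if $a$ is odd. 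This proves the final sentence of the theorem.

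\emph{Correlators in the $xz$ plane.} Now take $A_{x_1}^{(i_1)}\cdots A_{x_r}^{(i_r)}$ on distinct sites. Expand each factor as $\cos\theta_{x_j}\sigma_z+\sin\theta_{x_j}\sigma_x$. This writes the product as a linear combination of Pauli strings whose numbers of $\sigma_x$ and $\sigma_z$ factors satisfy $a+c=r$. If $r$ is odd, one of $a,c$ must be odd, so every term in the expansion has zero expectation. Hence the correlator vanishes. The PI correlators $\mathcal{S}_{x_1\cdots x_r}$ with odd $r$ are sums of such terms, so they vanish too.

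There is no real obstacle in this argument. The only point needing care is the $\Pi_x$ eigenvalue: one must check that bit-flip permutes the summed basis states, which relies on the excitation number being exactly $N/2$, and that distinct sites keep the conjugation sign bookkeeping site-wise.
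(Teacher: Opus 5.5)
Your proof is correct and follows essentially the same route as the paper. Both arguments use that $\ket{D_N^{\,N/2}}$ is a common eigenvector of $\sigma_z^{\otimes N}$ and $\sigma_x^{\otimes N}$, with eigenvalues $(-1)^{N/2}$ and $1$. Conjugating a Pauli string by these operators gives the signs $(-1)^a$ and $(-1)^c$, and the odd-order $xz$-plane correlator is then expanded into Pauli strings with $a+c=r$.
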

 
\begin{proof}
Writing $A_{x}=w_z(\theta_x)\,\sigma_z+w_x(\theta_x)\,\sigma_x$ with
$w_z(\theta)=\cos\theta$ and $w_x(\theta)=\sin\theta$, the product of
$r$ observables on distinct sites expands into $2^{r}$ Pauli strings,
\begin{equation}\label{eq:corr_expansion}
A_{x_1}^{(i_1)}\cdots A_{x_r}^{(i_r)}
=\sum_{\mathbf{s}\in\{x,z\}^{r}}
\Bigl[\prod_{j=1}^{r} w_{s_j}(\theta_{x_j})\Bigr]\,P_{\mathbf{s}},
\qquad
P_{\mathbf{s}}=\sigma_{s_1}^{(i_1)}\cdots\sigma_{s_r}^{(i_r)},
\end{equation}
so it suffices to show that $\avgD{P_{\mathbf{s}}}=0$ for every string
$P_{\mathbf{s}}$ containing $a$ factors $\sigma_x$ and $c=r-a$ factors
$\sigma_z$ with $a$ or $c$ odd.
Consider the global parity operators
$\sigma_z^{\otimes N}=\bigotimes_{i=1}^{N}\sigma_z^{(i)}$ and
$\sigma_x^{\otimes N}=\bigotimes_{i=1}^{N}\sigma_x^{(i)}$. Both
stabilize the state,
\begin{equation}
\label{eq:parity_stab}
    \sigma_z^{\otimes N}\ket{D_N^{\,N/2}}=(-1)^{N/2}\ket{D_N^{\,N/2}}, \qquad \sigma_x^{\otimes N}\ket{D_N^{\,N/2}}=\ket{D_N^{\,N/2}},
\end{equation}
since every basis state in \eqref{eq:dicke_supp} carries $N/2$
excitations, and $\sigma_x^{\otimes N}$ maps it to its complement, again with $N/2$
excitations. Since $\sigma_z^{\otimes N}$ anticommutes with $\sigma_x$ and
commutes with $\sigma_z$, while $\sigma_x^{\otimes N}$ does the opposite,
$\sigma_z^{\otimes N}P_{\mathbf{s}}\sigma_z^{\otimes N}=(-1)^a P_{\mathbf{s}}$ and
$\sigma_x^{\otimes N}P_{\mathbf{s}}\sigma_x^{\otimes N}=(-1)^c P_{\mathbf{s}}$.
Combined with Eq.~\eqref{eq:parity_stab} and $[(-1)^{N/2}]^2=1$, this gives
\begin{equation}\label{eq:parity_signs}
\avgD{P_{\mathbf{s}}}=\avgD{\sigma_z^{\otimes N} P_{\mathbf{s}}\,\sigma_z^{\otimes N}}=(-1)^{a}\avgD{P_{\mathbf{s}}},
\qquad
\avgD{P_{\mathbf{s}}}=\avgD{\sigma_x^{\otimes N} P_{\mathbf{s}}\,\sigma_x^{\otimes N}}=(-1)^{c}\avgD{P_{\mathbf{s}}},
\end{equation}
so $\avgD{P_{\mathbf{s}}}=0$ unless $a$ and $c$ are both even, which is impossible
for odd $r=a+c$. Every term in \eqref{eq:corr_expansion} therefore vanishes, which proves \eqref{eq:selection}. Since
$\mathcal{S}_{x_1\cdots x_r}$ is a sum of such correlators over
pairwise distinct parties, all PI correlators with odd $r$ vanish as
well.
\end{proof}
 
\begin{cor}\label{cor:even}
Let $I(\boldsymbol{\alpha})\ge\Bc(\boldsymbol{\alpha})$ be a PI Bell
inequality with measurements in the $xz$ plane, and let
$\boldsymbol{\alpha}_{\mathrm{even}}$ retain only its even-order
coefficients.  Then
\begin{equation}\label{eq:even_wlog}
\Bc(\boldsymbol{\alpha}_{\mathrm{even}})\ge\Bc(\boldsymbol{\alpha}),
\qquad
\bigl\langle I(\boldsymbol{\alpha}_{\mathrm{even}})\bigr\rangle_{\rho(p)}
=\bigl\langle I(\boldsymbol{\alpha})\bigr\rangle_{\rho(p)},
\end{equation}
with
$\rho(p)=(1-p)\ket{D_N^{\,N/2}}\!\bra{D_N^{\,N/2}}+p\,\mathbb{I}/2^N$,
so the critical noise strengths obey
$p_c(\boldsymbol{\alpha}_{\mathrm{even}})\ge p_c(\boldsymbol{\alpha})$.
The search for noise-robust PI Bell inequalities for
$\ket{D_N^{\,N/2}}$ can therefore be restricted to even-order
correlators.
\end{cor}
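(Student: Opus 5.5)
The plan is to prove the two relations in \eqref{eq:even_wlog} separately and then combine them through the formula $p_c=1-\Bc/\Bq$. The first relation uses a symmetry of the local polytope. The second uses Theorem~\ref{thm:selection} together with the tracelessness of the observables.

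\emph{Classical bound.} Write $\boldsymbol{\alpha}=\boldsymbol{\alpha}_{\mathrm{even}}+\boldsymbol{\alpha}_{\mathrm{odd}}$, so that $I(\boldsymbol{\alpha};v)=I(\boldsymbol{\alpha}_{\mathrm{even}};v)+I(\boldsymbol{\alpha}_{\mathrm{odd}};v)$ by linearity. Consider the global outcome flip $s^{(i)}_x\mapsto -s^{(i)}_x$ for all parties $i$ and all settings $x$. It maps local deterministic strategies to local deterministic strategies, hence it maps $\mathcal{L}$ onto itself and induces a map $v\mapsto v'$ on $\mathcal{L}_\Pi$. On an LDS every correlator factorizes, so an $r$-body correlator picks up the factor $(-1)^r$. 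Therefore every $\mathcal{S}_{x_1\cdots x_r}$ with $r$ odd changes sign, while those with $r$ even are unchanged. This gives
\begin{equation*}
I(\boldsymbol{\alpha}_{\mathrm{even}};v)=\tfrac12\bigl[I(\boldsymbol{\alpha};v)+I(\boldsymbol{\alpha};v')\bigr]\ge\Bc(\boldsymbol{\alpha})
\end{equation*}
for every vertex $v$, since both $v$ and $v'$ are vertices of $\mathcal{L}_\Pi$. Minimizing over $v$ as in \eqref{eq:Bc_vertex} yields $\Bc(\boldsymbol{\alpha}_{\mathrm{even}})\ge\Bc(\boldsymbol{\alpha})$.

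\emph{Quantum value on $\rho(p)$.} Fix any angles $\boldsymbol{\theta}$. By Theorem~\ref{thm:selection}, every odd-order correlator vanishes on $\ket{D_N^{\,N/2}}$. On the maximally mixed state $\mathbb{I}/2^N$, every correlator of order $r\ge1$ on distinct sites vanishes, because each $A_x$ is traceless and $\Tr[A^{(i_1)}_{x_1}\cdots A^{(i_r)}_{x_r}]$ factorizes over sites. Hence the odd part contributes zero on $\rho(p)$, and $\langle I(\boldsymbol{\alpha}_{\mathrm{even}})\rangle_{\rho(p)}=\langle I(\boldsymbol{\alpha})\rangle_{\rho(p)}$ for every $\boldsymbol{\theta}$. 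Two consequences follow. Minimizing over $\boldsymbol{\theta}$ shows that both expressions have the same $\Bq$. Moreover, since the noise term drops out, $\langle I\rangle_{\rho(p)}=(1-p)\,\Bq$, which justifies the form \eqref{eq:pcrit} of the threshold.

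\emph{Thresholds.} The inequality for $p_c$ only needs care with signs, and I expect this to be the one delicate step. I would assume that $\boldsymbol{\alpha}$ yields a violation at all, i.e.\ $\Bq<\Bc(\boldsymbol{\alpha})$; otherwise $p_c(\boldsymbol{\alpha})\le 0$ and there is nothing to prove. Case 1: $\Bc(\boldsymbol{\alpha})\ge 0$. Then $\Bc(\boldsymbol{\alpha}_{\mathrm{even}})\ge\Bc(\boldsymbol{\alpha})\ge0$. Because the state $\rho(1)$ gives value $0$ for the even expression, this would mean $\Bq<0\le\Bc(\boldsymbol{\alpha}_{\mathrm{even}})$, so the threshold is read off as the $p$ where $(1-p)\Bq$ reaches the bound; comparing the two bounds directly gives the claim. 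Case 2: $\Bc(\boldsymbol{\alpha})<0$, the generic case. Then $\Bq<\Bc(\boldsymbol{\alpha})<0$, and dividing $\Bc(\boldsymbol{\alpha}_{\mathrm{even}})\ge\Bc(\boldsymbol{\alpha})$ by $\Bq<0$ reverses the inequality. This gives $1-\Bc(\boldsymbol{\alpha}_{\mathrm{even}})/\Bq\ge1-\Bc(\boldsymbol{\alpha})/\Bq$, i.e.\ $p_c(\boldsymbol{\alpha}_{\mathrm{even}})\ge p_c(\boldsymbol{\alpha})$. Since in both cases the even part tolerates at least as much noise, the search for robust witnesses can be restricted to even-order correlators.
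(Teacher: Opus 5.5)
Your proposal is correct and follows the paper's proof essentially step by step. The global outcome flip acts as the sign involution on the odd-order PI coordinates and preserves $\mathcal{L}_\Pi$, which gives $\Bc(\boldsymbol{\alpha}_{\mathrm{even}})\ge\Bc(\boldsymbol{\alpha})$ through the same averaging identity; Theorem~\ref{thm:selection} together with the tracelessness of the local observables then makes $\rho(p)$ blind to the odd part.

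The only difference is the threshold step. The paper avoids your sign case split by noting that any $p$ with $\langle I(\boldsymbol{\alpha})\rangle_{\rho(p)}<\Bc(\boldsymbol{\alpha})$ also gives $\langle I(\boldsymbol{\alpha}_{\mathrm{even}})\rangle_{\rho(p)}<\Bc(\boldsymbol{\alpha}_{\mathrm{even}})$. Your Case~1 argument is muddled: the value $0$ on $\rho(1)$ does not by itself imply $\Bq<0$. It is harmless, though, because the all-zero correlator point (uniformly random local outputs) lies in $\mathcal{L}_\Pi$. Hence $\Bc(\boldsymbol{\alpha})\le 0$ always, so $\Bq<\Bc(\boldsymbol{\alpha})$ forces $\Bq<0$, and the division in your Case~2 already covers Case~1.
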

 
\begin{proof}
Flipping all outcomes, $a_x^{(i)}\mapsto-a_x^{(i)}$, maps local
deterministic strategies to local deterministic strategies and
multiplies every correlator of order $r$ by $(-1)^r$.  On
$\mathbb{R}^{D}$ it therefore acts as the linear involution $T$ that
flips the sign of the odd-order coordinates.
Since flipping outcomes
commutes with permuting parties, $T$ maps the PI local polytope $\mathcal{L}_\Pi$ onto
itself and permutes its vertices.  Now
$I(\boldsymbol{\alpha}_{\mathrm{even}};v)
=\tfrac12\bigl[I(\boldsymbol{\alpha};v)+I(\boldsymbol{\alpha};Tv)\bigr]$,
and by \eqref{eq:Bc_vertex} both terms are bounded below by
$\Bc(\boldsymbol{\alpha})$ on $\mathrm{vert}(\mathcal{L}_\Pi)$, so
\begin{equation}
\Bc(\boldsymbol{\alpha}_{\mathrm{even}})
=\min_{v\in\mathrm{vert}(\mathcal{L}_\Pi)}
\tfrac12\bigl[I(\boldsymbol{\alpha};v)+I(\boldsymbol{\alpha};Tv)\bigr]
\ge\Bc(\boldsymbol{\alpha}),
\end{equation}
the first relation in~\eqref{eq:even_wlog}.  The second relation
holds because $\rho(p)$ cannot see the odd part of
$I(\boldsymbol{\alpha})$: odd-order correlators vanish on
$\ket{D_N^{\,N/2}}$ by \Cref{thm:selection}, and all correlators
vanish on the white-noise part, the observables being traceless and
acting on distinct sites.  Combining the two relations, every
violation of $I(\boldsymbol{\alpha})\ge\Bc(\boldsymbol{\alpha})$ by
$\rho(p)$ is also a violation of
$I(\boldsymbol{\alpha}_{\mathrm{even}})
\ge\Bc(\boldsymbol{\alpha}_{\mathrm{even}})$, and
$p_c(\boldsymbol{\alpha}_{\mathrm{even}})\ge p_c(\boldsymbol{\alpha})$
follows.
\end{proof}

Note that PI is not essential in the above argument, which applies to any Bell expression with measurements in the $xz$ plane. 
Throughout this work we therefore restrict the search for
noise-robust PI Bell inequalities for $\ket{D_N^{\,N/2}}$ to even-order correlators.

\subsection{Quantum value on the target Dicke state}
\label{supp:sec:quantum_value}

Here we have two different notions of ``quantum value'': the unrestricted optimum over all states and measurements, which benchmarks the inequality itself, and the state-fixed optimum, which benchmarks a given target state. We introduce both and explain how each is computed. 
Throughout this work, the certification of the prepared Dicke states uses the state-fixed value.

To evaluate them we use the
symmetric measurement
strategy~\cite{tura_detecting_2014,tura_nonlocality_2015}, in which
all parties measure the same observables,
\begin{equation}\label{eq:sym_meas_supp}
A^{(i)}_x=\cos\theta_x\,\sigma_z^{(i)}+\sin\theta_x\,\sigma_x^{(i)},
\qquad x=0,\dots,m-1,
\end{equation}
where $\sigma_{x}^{(i)},\sigma_{y}^{(i)},\sigma_{z}^{(i)}$ are the
Pauli operators of party $i$ and
$\boldsymbol{\theta}=(\theta_0,\dots,\theta_{m-1})$ are measurement
angles shared by all parties.
For $m=2$ this choice involves no loss of generality~\cite{masanes_asymptotic_2006}.
Beyond this
case the reductions are not known to be complete, and
Eq.~\eqref{eq:sym_meas_supp} serves as a variational ansatz.
We have verified numerically that optimization over arbitrary Bloch vectors does not improve any of the optima reported in this work and
always returns to coplanar settings.
Then each correlator
\eqref{eq:Scorr_supp} is the expectation value of the operator
$\hat{\mathcal{S}}_{x_1\cdots x_r}(\boldsymbol{\theta})
=\sum_{(i_1,\dots,i_r)}A_{x_1}^{(i_1)}\cdots A_{x_r}^{(i_r)}$, and the Bell expression \eqref{eq:PIansatz_supp} becomes the expectation value of the Bell operator
\begin{equation}\label{eq:Bell_operator_supp}
\hat{I}(\boldsymbol{\theta})
=\sum_{r=1}^{k}\frac{1}{r!}\sum_{x_1,\dots,x_r=0}^{m-1}
\alpha_{x_1\cdots x_r}\,
\hat{\mathcal{S}}_{x_1\cdots x_r}(\boldsymbol{\theta}),
\end{equation}
where the dependence of $\hat{I}$ on the coefficients $\boldsymbol{\alpha}$ is left implicit.
For one- and two-body correlators,
\begin{equation}\label{eq:collective_supp}
\hat{\mathcal{S}}_{x}=2\cos\theta_x\,S_z+2\sin\theta_x\,S_x,
\qquad
\hat{\mathcal{S}}_{xy}
=\tfrac12\{\hat{\mathcal{S}}_{x},\hat{\mathcal{S}}_{y}\}
-\hat{\mathcal{Z}}_{xy},
\end{equation}
where
\begin{equation}\label{eq:Zterm_supp}
\hat{\mathcal{Z}}_{xy}
=\tfrac12\sum_{i=1}^{N}\{A^{(i)}_x,A^{(i)}_y\}
=N\cos(\theta_x-\theta_y)\,\mathbb{I}
\end{equation}
subtracts the coinciding-party ($i=j$) terms contained in
$\hat{\mathcal{S}}_{x}\hat{\mathcal{S}}_{y}$~\cite{chen_optimizing_2025,chen_optimizing_2026}.  
Analogous expressions
at higher orders follow from a known
recursion~\cite{tura_nonlocality_2015,hassani_many-body_2025}.  The state-optimized quantum value of the Bell expression is
\begin{align}
\BqGS(\boldsymbol{\alpha})
=
\min_{\boldsymbol{\theta},\,\ket{\psi}}
\bra{\psi}
\hat{I}(\boldsymbol{\theta})
\ket{\psi},\label{eq:betaqexact}
\end{align}
where $\ket{\psi}$ is restricted to the symmetric (spin-$N/2$) subspace.  
Note that we can
diagonalize $\hat{I}(\boldsymbol{\theta})$ in the symmetric
(spin-$N/2$) subspace of dimension $N+1$ instead of $2^N$~\cite{tura_nonlocality_2015,chen_optimizing_2025,hassani_many-body_2025}. 

Since our aim is to certify a specific target state, the calculation used throughout this work is the state-fixed quantum value obtained by fixing the state to the target Dicke state,
\begin{align}
\Bq(\boldsymbol{\alpha})
=
\min_{\boldsymbol{\theta}}
\bra{D_N^{\,N/2}}
\hat{I}(\boldsymbol{\theta})
\ket{D_N^{\,N/2}},\label{eq:Bq_dicke}
\end{align}
which is the quantity denoted $\Bq$ in the main text and in all tables of this Supplemental Material. 
By construction $\BqGS(\boldsymbol{\alpha})\le\Bq(\boldsymbol{\alpha})$, and the two are in general different.

\subsection{Depolarizing noise, critical noise threshold, and fidelity}
\label{supp:sec:noise_threshold}
In this section, we consider the depolarized state
\begin{equation}\label{eq:rho_p}
\rho(p)=(1-p)\ketbra{D_N^{\,N/2}}{D_N^{\,N/2}}+p\,\frac{\openone}{2^N},
\qquad p\in[0,1].
\end{equation}
Every collective correlator operator $\hat{\mathcal{S}}_{x_1\cdots x_r}$
is traceless
so
$\Tr[\hat{I}(\boldsymbol{\theta})]=0$ and
\begin{equation}
\expval*{\hat{I}(\boldsymbol{\theta})}_{\rho(p)}
=(1-p)\,
\avgD{\hat{I}(\boldsymbol{\theta})}.
\end{equation}
The inequality is violated by $\rho(p)$ if and only if
$(1-p)\Bq(\boldsymbol{\alpha})<\Bc(\boldsymbol{\alpha})$, i.e., for all noise fractions below
the critical threshold
\begin{equation}\label{eq:pcrit_supp}
p_c(\boldsymbol{\alpha})=1-\frac{\Bc(\boldsymbol{\alpha})}{\Bq(\boldsymbol{\alpha})}.
\end{equation}

The fidelity of $\rho(p)$ with the target state is
$F=\bra{D_N^{\,N/2}}\rho(p)\ket{D_N^{\,N/2}}=1-p\,(1-2^{-N})$.
Conversely, a measured fidelity $F$ corresponds to the equivalent
white-noise fraction
\begin{equation}\label{eq:pexp}
\pexp=\frac{1-F}{1-2^{-N}},
\end{equation}
and the certification condition reads $\pexp<p_c(\boldsymbol{\alpha})$. 
The Dicke
states prepared in this work have fidelities $F=0.933(4)$, $0.797(5)$,
and $0.638(5)$ for $N=6$ and $F=0.916(11)$ for $N=8$ (Sec.~\ref{supp:sec:exp_details}]), corresponding to
\begin{equation}\label{eq:pexp_values}
\pexp\approx 0.068,\quad 0.206,\quad 0.368\ (N{=}6),
\qquad
\pexp\approx 0.084\ (N{=}8).
\end{equation}

\subsection{Two-body PI Bell inequalities with two measurements cannot certify the prepared states}
\label{supp:sec:twobody_insufficient}

The two-body PI Bell inequality of the $(6,2,2)$ scenario with the largest threshold for $\ket{D_6^{\,3}}$ is the five-parameter
inequality \eqref{eq:PI_22_5coeff} with
$(\alpha_0,\alpha_1,\alpha_{00},\alpha_{01},\alpha_{11})=(0,0,30,6,-2)$
\cite{tura_detecting_2014,tura_nonlocality_2015},
\begin{equation}\label{eq:Tura_supp}
\mathcal{I}_{\mathrm{T}}
= 15\,\mathcal{S}_{00} + 6\,\mathcal{S}_{01} - \mathcal{S}_{11}
\;\geq\; -120.
\end{equation}
Its quantum value on $\ket{D_6^{\,3}}$
admits a simple closed form. Using
$\ket{D_6^{\,3}}=\ket{J{=}3,M{=}0}$ and the total-spin identities
$\expval{S_z^2}=M^2=0$ and
$\expval{S_x^2}=\expval{S_y^2}=\tfrac12[J(J+1)-M^2]=6$, together with
$\sum_{i<j}\expval{\sigma_\mu^{(i)}\sigma_\mu^{(j)}}
=\tfrac12(4\expval{S_\mu^2}-N)$, the two-body correlations per (ordered)
pair are
\begin{equation}\label{eq:perpair}
  \avgDs{\sigma_z^{(i)}\sigma_z^{(j)}} = -\tfrac{1}{5},
  \qquad
  \avgDs{\sigma_x^{(i)}\sigma_x^{(j)}} = +\tfrac{3}{5},
  \qquad
  \avgDs{\sigma_x^{(i)}\sigma_z^{(j)}+\sigma_z^{(i)}\sigma_x^{(j)}} = 0,
\end{equation}
where the cross term vanishes by Theorem~\ref{thm:selection}. Expanding each
$A_x^{(i)}$ by \eqref{eq:sym_meas_supp} and summing over the $30$
ordered pairs,
\begin{equation}\label{eq:Skl_state}
\mathcal{S}_{xy}\big|_{D_6^{\,3}}
=18\,\sin\theta_x\sin\theta_y-6\,\cos\theta_x\cos\theta_y ,
\end{equation}
so that for any coefficients $(\alpha_{00},\alpha_{01},\alpha_{11})$ of
$\tfrac{\alpha_{00}}{2}\mathcal{S}_{00}+\alpha_{01}\,\mathcal{S}_{01}
+\tfrac{\alpha_{11}}{2}\mathcal{S}_{11}$,
\begin{align}
\avgDs{\hat{I}}&=-3\,Q_z+9\,Q_x,\nonumber\\
Q_z&=\alpha_{00}\cos^2\!\theta_0+2\alpha_{01}\cos\theta_0\cos\theta_1
+\alpha_{11}\cos^2\!\theta_1,\nonumber\\
Q_x&=\alpha_{00}\sin^2\!\theta_0+2\alpha_{01}\sin\theta_0\sin\theta_1
+\alpha_{11}\sin^2\!\theta_1.\label{eq:mBmC}
\end{align}
For $\mathcal{I}_{\mathrm{T}}$, minimizing \eqref{eq:mBmC} gives
$\Bq\approx-126.041$ at
$(\theta_0,\theta_1)\approx(3.264,\,2.165)$, hence
\begin{equation}
p_c(\mathcal{I}_{\mathrm{T}})=1-\frac{-120}{-126.041}\approx 0.048 .
\end{equation}
Certification with $\mathcal{I}_{\mathrm{T}}$ therefore requires
$F\gtrsim 95.3\%$, well above the prepared fidelity $F=0.933(4)$,
for which $\pexp\approx0.068>p_c$. This prediction is confirmed
experimentally: the measured value
$\mathcal{I}_{\mathrm{T}}=-119.74(1.66)$
(Table~\ref{tab:PI_GAO_Tura}) does not violate the classical bound.

For $N=8$, we have
$\max p_c\approx 0.032<\pexp\approx 0.084$. Thus no
PI Bell inequality with up to two-body correlators and two measurement
settings per party certifies the states prepared in this experiment,
at either system size.

Enlarging the scenario to $m=3$ settings per party improves the
optimal two-body threshold only moderately. For the $(6,3,2)$ scenario, the largest threshold we
found is attained by
\begin{equation}\label{eq:BI632}
2\,\mathcal{S}_{01}-2\,\mathcal{S}_{02}
+\tfrac{5}{2}\,\mathcal{S}_{11}+5\,\mathcal{S}_{12}
+\tfrac{15}{2}\,\mathcal{S}_{22}\;\geq\;-90,
\end{equation}
with $\Bq\approx-99.234$ at
$\boldsymbol{\theta}\approx(1.5415,\,3.4888,\,2.9695)$ and
$p_c\approx 0.093$. This exceeds $\pexp\approx0.068$ of the
highest-fidelity six-photon state only marginally, would require
$21$ additional measurement configurations, and leaves no margin for
the deliberately degraded states studied in the main text. 
For
$N=8$, the optimal $(8,3,2)$ two-body threshold we found is
$p_c\approx 0.080<\pexp\approx 0.084$, insufficient even for the
highest-fidelity eight-photon state. This motivates keeping $m=2$ and
extending the PI ansatz to four-body correlators instead.

\section{PI Bell inequalities with two- and four-body correlators}
\label{supp:sec:fourbody}

\subsection{Construction and optimization for $N=6$}
\label{supp:sec:fourbody_N6}

By Corollary~\ref{cor:even} we restrict the PI ansatz \eqref{eq:PIansatz_supp}
with $m=2$ and $k=4$ to even-body correlators, leaving the eight
independent coefficients
$\boldsymbol{\alpha}=(\alpha_{00},\alpha_{01},\alpha_{11};\,
\alpha_{0000},\alpha_{0001},\alpha_{0011},\alpha_{0111},\alpha_{1111})$ of
\begin{equation}\label{eq:BI_24body_supp}
\begin{split}
I(\boldsymbol{\alpha})
={}& \frac{\alpha_{00}}{2}\,\mathcal{S}_{00}
   + \alpha_{01}\,\mathcal{S}_{01}
   + \frac{\alpha_{11}}{2}\,\mathcal{S}_{11}
 + \frac{\alpha_{0000}}{4!}\,\mathcal{S}_{0000}
   + \frac{\alpha_{0001}}{3!}\,\mathcal{S}_{0001}\\
 &+ \frac{\alpha_{0011}}{(2!)^{2}}\,\mathcal{S}_{0011}
   + \frac{\alpha_{0111}}{3!}\,\mathcal{S}_{0111}
   + \frac{\alpha_{1111}}{4!}\,\mathcal{S}_{1111} \;\geq\; \Bc(\boldsymbol{\alpha}).
\end{split}
\end{equation}
The classical bound is evaluated exactly by
\eqref{eq:Bc_vertex}, and the quantum value on the target Dicke state by Eq.~\eqref{eq:Bq_dicke}. We then maximize the critical noise
threshold,
\begin{equation}\label{eq:pc_opt_problem}
\boldsymbol{\alpha}^{\star}
= \arg\max_{\boldsymbol{\alpha}}
    \left[1-\frac{\Bc(\boldsymbol{\alpha})}{\Bq(\boldsymbol{\alpha})}\right],
\end{equation}
by a see-saw optimization. Starting from random measurement angles $\boldsymbol{\theta}$, we alternate two steps: (i) at fixed $\boldsymbol{\theta}$, the coefficients are updated by the linear program $\min_{\boldsymbol{\alpha}}\avgD{\hat{I}(\boldsymbol{\theta})}$ subject to $I(\boldsymbol{\alpha};v)\ge-1$ for all $v\in\mathrm{vert}(\mathcal{L}_\Pi)$, i.e., the most violated inequality with classical bound $-1$ at the current angles; (ii) at fixed $\boldsymbol{\alpha}$, the angles are updated by minimizing $\avgD{\hat{I}(\boldsymbol{\theta})}$, which yields $\Bq(\boldsymbol{\alpha})$ and hence $p_c(\boldsymbol{\alpha})$. The iteration is repeated from many random starting points and the largest $p_c$ is retained. The same procedure gives the two-body thresholds quoted in Sec.~\ref{supp:sec:twobody_insufficient}.

The optimization converges to
\begin{equation}\label{eq:BI_24body_optimal_supp}
\begin{split}
I(\boldsymbol{\alpha}^{\star})
={}& \frac{4428}{2}\,\mathcal{S}_{00}
   - 252\,\mathcal{S}_{01}
   + \frac{4904}{2}\,\mathcal{S}_{11}
 - \frac{1757}{24}\,\mathcal{S}_{0000}
   - \frac{1599}{6}\,\mathcal{S}_{0001} \\
 &+ \frac{2987}{4}\,\mathcal{S}_{0011}
 + \frac{1753}{6}\,\mathcal{S}_{0111}
   - \frac{2471}{24}\,\mathcal{S}_{1111}
   \;\geq\; -35826,
\end{split}
\end{equation}
with
\begin{equation}
\Bc(\boldsymbol{\alpha}^{\star}) = -35826,
\quad
\Bq(\boldsymbol{\alpha}^{\star}) \approx -66730.3
\ \text{at}\
\boldsymbol{\theta}^{\star}\approx (4.5246,\,6.1724),
\quad
p_{c}(\boldsymbol{\alpha}^{\star})\approx 0.4631 .
\end{equation}
The threshold $p_c\approx 0.46$ far exceeds
$\pexp\approx 0.068$. Measuring \eqref{eq:BI_24body_optimal_supp} is
nevertheless unattractive: it weights all eight correlators, including
$\mathcal{S}_{01}$, $\mathcal{S}_{0000}$, $\mathcal{S}_{0001}$ and
$\mathcal{S}_{0111}$, which are not all accessible from a single family
of measurement configurations (Sec.~\ref{subsec:pi_bell_settings}), and its
large, incommensurate coefficients amplify the statistical uncertainty
of a finite-count experiment.

For the experiment we therefore select members of
the family \eqref{eq:BI_24body_supp} with few nonzero coefficients. Two
inequalities stand out, containing only the four correlators
$\mathcal{S}_{00}$, $\mathcal{S}_{11}$, $\mathcal{S}_{0011}$,
$\mathcal{S}_{1111}$, all of which are simultaneously accessible from
the $\binom{6}{2}=15$ configurations with two photons measured in
setting $0$ and four in setting $1$
(Sec.~\ref{subsec:pi_bell_settings}). The first,
$\boldsymbol{\alpha}^{\diamond}=(2,\,0,\,2;\;0,\,0,\,-1,\,0,\,4)$, reads
\begin{equation}\label{eq:BI_24body_sparse_supp}
\mathcal{I}_1
= \mathcal{S}_{00}
+ \mathcal{S}_{11}
- \frac{1}{4}\,\mathcal{S}_{0011}
+ \frac{1}{6}\,\mathcal{S}_{1111}
\;\geq\; -18,
\end{equation}
with
\begin{equation}\label{eq:facetC_values}
\Bc = -18,
\qquad
\Bq \approx -25.7638
\ \text{at}\
\boldsymbol{\theta}^{\diamond}\approx(3.5501,\,5.7302),
\qquad
p_{c}\approx 0.3014 .
\end{equation}

The second,
$\boldsymbol{\alpha}^{\circ}=(30,\,0,\,-12;\;0,\,0,\,-5,\,0,\,3)$, reads
\begin{equation}\label{eq:BI_24body_facetG_supp}
\mathcal{I}_2
= 15\,\mathcal{S}_{00}
- 6\,\mathcal{S}_{11}
- \frac{5}{4}\,\mathcal{S}_{0011}
+ \frac{1}{8}\,\mathcal{S}_{1111}
\;\geq\; -135,
\end{equation}
with
\begin{equation}\label{eq:facetG_values}
\Bc = -135,
\qquad
\Bq \approx -157.9068
\ \text{at}\
\boldsymbol{\theta}^{\circ} \approx (2.8642,\,0.7489),
\qquad
p_c \approx 0.1451 .
\end{equation}
Both inequalities are tight, the vertices
of $\mathcal{L}_\Pi$ saturating each bound span an affine subspace of
dimension $d-1$, both in the projection onto the eight even-order
coordinates ($d=8$) and in the full projection onto all correlators of
order $\le 4$ ($d=14$), so $\mathcal{I}_1$ and $\mathcal{I}_2$ are
facets of the corresponding PI local polytopes.
Figure~\ref{fig:pcrit_N6_k24} shows the critical noise threshold
$p_c(\theta_0,\theta_1)$ of $\mathcal{I}_1$ over the full range of
measurement angles. 
The optimum $\boldsymbol{\theta}^{\diamond}$ is one of
eight equivalent maxima, related by $\theta_j\to\theta_j+\pi$ and
$\boldsymbol{\theta}\to-\boldsymbol{\theta}$, and the broad maximum
surrounding it indicates the tolerance of the violation to
deviations of the measurement angles, quantified in
Sec.~\ref{supp:sec:tolerance}.

\begin{figure}[htb]
\centering
\begin{minipage}[b]{0.5\textwidth}
    \centering
    \includegraphics[width=\textwidth]{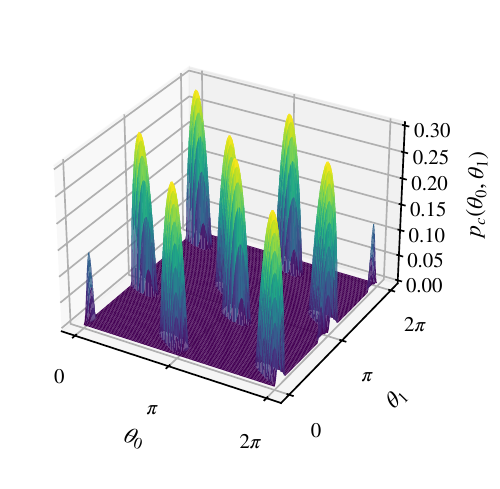}\\[-2pt]
    {\small (a)}
\end{minipage}\hfill
\begin{minipage}[b]{0.5\textwidth}
    \centering
    \includegraphics[width=\textwidth]{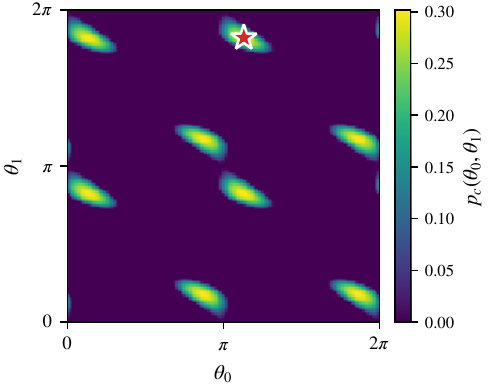}\\[-2pt]
    {\small (b)}
\end{minipage}
\caption{Critical noise threshold $p_{c}(\theta_0,\theta_1)
=\max\{0,\,1-\Bc/\avgDs{\hat{\mathcal{I}}_1(\theta_0,\theta_1)}\}$
of the Bell inequality $\mathcal{I}_1$, Eq.~\eqref{eq:BI_24body_sparse_supp}, as a function of the measurement angles.
(a) Surface plot over $(\theta_0,\theta_1)\in[0,2\pi]^{2}$. The eight
equivalent peaks reflect the symmetries
$\theta_j\to\theta_j+\pi$ (each retained correlator contains each
setting an even number of times) and
$\boldsymbol{\theta}\to-\boldsymbol{\theta}$ (invariance of
$\ket{D_6^{\,3}}$ correlations under reflection of the $xz$ plane).
(b) Heatmap. The star marks the optimum
$\boldsymbol{\theta}^{\diamond}\approx(3.5501,\,5.7302)$, where
$p_{c}\approx 0.3014$.}
\label{fig:pcrit_N6_k24}
\end{figure}

\subsection{Critical noise threshold versus fidelity: predicted hierarchy}
\label{supp:sec:pc_vs_F}

The threshold $p_c$ of \eqref{eq:pcrit_supp} is intrinsic to each
inequality. For a state prepared at fidelity $F$, corresponding to the
noise fraction $p_F=(1-F)/(1-2^{-N})$ of \eqref{eq:pexp}, the
\emph{residual} tolerance to additional white noise is
\begin{equation}\label{eq:pcF}
p_c(F)=1-\frac{\Bc}{(1-p_F)\,\Bq}=\frac{p_c-p_F}{1-p_F},
\end{equation}
which vanishes at the critical fidelity
\begin{equation}\label{eq:Fc}
F_c=1-p_c\,(1-2^{-N}).
\end{equation}
For the four inequalities tested in this work,
\begin{equation}\label{eq:Fc_values}
F_c(\mathcal{I}_{\mathrm{T}})\approx 0.953,\qquad
F_c(\mathcal{I}_{1})\approx 0.703,\qquad
F_c(\mathcal{I}_{2})\approx 0.857,\qquad
F_c(\mathcal{I}_{3})\approx 0.775 .
\end{equation}
Figure~\ref{fig:pcrit_vs_fidelity} shows $p_c(F)$ for the two six-photon
inequalities $\mathcal{I}_1$ and $\mathcal{I}_2$, whose contrasting
critical fidelities are probed experimentally with deliberately
degraded states.
Table~\ref{tab:hierarchy} compares, for each prepared
state, the white-noise prediction
$(1-\pexp)\,\Bq$ with the measured Bell values. The predicted hierarchy is reproduced: at $F=0.933$, $\mathcal{I}_1$ and $\mathcal{I}_2$ are violated but $\mathcal{I}_{\mathrm{T}}$ is not; at $F=0.797$ only $\mathcal{I}_1$ is violated; at $F=0.638$ neither is; and $\mathcal{I}_3$ is violated for $N=8$.

\begin{figure}[htb]
\centering
\includegraphics[width=0.5\textwidth]{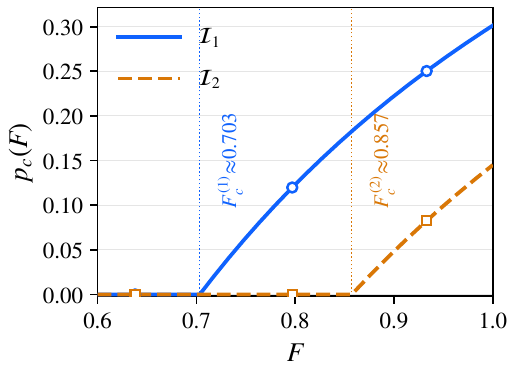}
\caption{Residual noise tolerance $p_{c}(F)$ of \eqref{eq:pcF} for the
two six-photon Bell inequalities $\mathcal{I}_1$ [Eq.~\eqref{eq:BI_24body_sparse_supp}] and $\mathcal{I}_2$ [Eq.~\eqref{eq:BI_24body_facetG_supp}]. Markers indicate the three
experimentally prepared fidelities
$F\approx 0.933,\,0.797,\,0.638$; dotted vertical lines mark the critical
fidelities $F_c(\mathcal{I}_1)\approx 0.703$ and
$F_c(\mathcal{I}_2)\approx 0.857$ of \eqref{eq:Fc}.}
\label{fig:pcrit_vs_fidelity}
\end{figure}

\begin{table}[tb]
\caption{Predicted versus measured Bell values for all prepared states.
The prediction is the white-noise-model value $(1-\pexp)\Bq$ with
$\pexp$ from \eqref{eq:pexp}; ``viol.'' indicates whether the classical
bound $\Bc$ is exceeded. Measured values and their $1\sigma$
uncertainties are taken from the data tables of
Section~\ref{supp:sec:exp_details}; the last column gives the measured
violation significance $(\Bc-\mathcal{I}_{\rm meas})/\sigma$.}
\label{tab:hierarchy}
\begin{ruledtabular}
\begin{tabular}{llcccccc}
State & Inequality & $\Bc$ & $\Bq$ & $(1-\pexp)\Bq$ & viol.\ predicted & measured & significance \\
\hline
$F=0.933(4)$, $N{=}6$ & $\mathcal{I}_{\mathrm{T}}$ & $-120$ & $-126.04$ & $-117.4$ & no  & $-119.74(1.66)$ & --- \\
                        & $\mathcal{I}_{1}$ & $-18$  & $-25.764$ & $-24.01$ & yes & $-24.012(186)$ & $32\sigma$ \\
                        & $\mathcal{I}_{2}$ & $-135$ & $-157.91$ & $-147.1$ & yes & $-146.58(1.68)$ & $6.9\sigma$ \\
$F=0.797(5)$, $N{=}6$ & $\mathcal{I}_{1}$ & $-18$  & $-25.764$ & $-20.45$ & yes & $-19.895(164)$ & $11.6\sigma$ \\
                        & $\mathcal{I}_{2}$ & $-135$ & $-157.91$ & $-125.3$ & no  & $-125.15(1.72)$ & --- \\
$F=0.638(5)$, $N{=}6$ & $\mathcal{I}_{1}$ & $-18$  & $-25.764$ & $-16.28$ & no  & $-16.095(177)$ & --- \\
                        & $\mathcal{I}_{2}$ & $-135$ & $-157.91$ & $-99.8$  & no  & $-104.94(1.73)$ & --- \\
$F=0.916(11)$, $N{=}8$ & $\mathcal{I}_{3}$ & $-48$ & $-62.014$ & $-56.79$ & yes & $-55.905(889)$ & $8.9\sigma$ \\
\end{tabular}
\end{ruledtabular}
\end{table}

\subsection{Tolerance to measurement-angle deviations}
\label{supp:sec:tolerance}

Imperfectly calibrated analyzers displace the measurement angles from
$\boldsymbol{\theta}^\diamond$. Within the white-noise model at the operating
fidelity $F=0.933$ (computations use the unrounded value $0.9328$), the
predicted Bell value at angles $(\theta_0,\theta_1)$ is
\begin{equation}\label{eq:tolerance_model}
\mathcal{I}_1^{\rm model}(\theta_0,\theta_1)
=(1-\pexp)\avgDs{\hat{\mathcal{I}}_1(\theta_0,\theta_1)},
\qquad \pexp\approx 0.068,
\end{equation}
with the value $-24.012$ at the optimum. The violation region
$\{\mathcal{I}_1^{\rm model}<-18\}$ around the optimum is wide:
holding one angle at its optimal value, the violation persists for
\begin{equation}\label{eq:tolerance_1d}
\theta_0\in[3.158,\,3.942]\ (\pm 22.5^{\circ}),
\qquad
\theta_1\in[5.516,\,5.992]\ (-12.3^{\circ}/{+}15.0^{\circ}),
\end{equation}
and the largest axis-aligned square centered at the optimum that fits
inside the region has half-width $\approx 9.0^{\circ}$ per angle.
For comparison, the corresponding region of $\mathcal{I}_2$ at the
same fidelity is markedly smaller along $\theta_0$ ($\pm 11.3^{\circ}$), consistent with its smaller relative violation. 
These angular tolerances are estimates within the white-noise model. Because the experimental noise is not strictly white, Fig.~3(a) of the main text instead shows the theoretical violation region for the ideal six-photon Dicke state ($F=1$).

Experimentally, in addition to the optimal setting $P_0$, six settings
$P_1$--$P_6$ on or near the boundary of the violation region are
measured ([Fig.~3(a) of the main text].
Table~\ref{tab:tolerance_points} lists their coordinates together with the measured values. All six deviated settings still
violate the classical bound, by $6.8\sigma$ to $14\sigma$, directly
demonstrating that the violation does not rely on a finely tuned choice
of measurement directions. 
\begin{table}[tb]
\caption{Measurement-angle tolerance test of $\mathcal{I}_1$ at
$F=0.933(4)$. $P_0$ is the optimal setting
\eqref{eq:facetC_values}; $P_1$--$P_6$ lie on or near the boundary of
the predicted violation region. The violation factor is
$R=\mathcal{I}_1/(-18)$; uncertainties in parentheses denote one
standard deviation. Measured values are from
Tables~\ref{tab:PI_GAO_0}, \ref{tab:PI_GAO_P1},
\ref{tab:PI_GAO_P2}, \ref{tab:PI_GAO_P3},
\ref{tab:PI_GAO_P4}, \ref{tab:PI_GAO_P5}, and
\ref{tab:PI_GAO_P6}.}
\label{tab:tolerance_points}
\begin{ruledtabular}
\begin{tabular}{ccccc}
Setting & $(\theta_0,\theta_1)$ (rad) &
Measured $\mathcal{I}_1$ & $R$ & Violation \\
\hline
$P_0$ & $(3.550,\;5.730)$ & $-24.012(186)$ & $1.334(10)$ & $32\sigma$ \\
$P_1$ & $(3.218,\;5.730)$ & $-21.017(262)$ & $1.168(15)$ & $12\sigma$ \\
$P_2$ & $(3.829,\;5.730)$ & $-19.798(155)$ & $1.100(9)$  & $12\sigma$ \\
$P_3$ & $(3.550,\;5.547)$ & $-20.048(262)$ & $1.114(15)$ & $7.8\sigma$ \\
$P_4$ & $(3.550,\;5.918)$ & $-19.772(139)$ & $1.098(8)$  & $13\sigma$ \\
$P_5$ & $(3.197,\;5.976)$ & $-20.377(170)$ & $1.132(9)$  & $14\sigma$ \\
$P_6$ & $(3.910,\;5.547)$ & $-19.332(195)$ & $1.074(11)$ & $6.8\sigma$ \\
\end{tabular}
\end{ruledtabular}
\end{table}

\subsection{Extension to $N=8$}
\label{supp:sec:fourbody_N8}

Now we extend the construction of Sec.~\ref{supp:sec:fourbody_N6} 
to the $(8,2,2)$ scenario with target state $\ket{D_8^{\,4}}$:
Theorem~\ref{thm:selection} and Corollary~\ref{cor:even} hold for all even $N$, so the ansatz is
again \eqref{eq:BI_24body_supp}. Maximizing $p_c$ over all eight
coefficients yields
\begin{equation}\label{eq:BI_24body_N8_optimal_supp}
\begin{split}
I(\boldsymbol{\alpha}^{\star})
={}& \frac{5580}{2}\,\mathcal{S}_{00}
   - 2260\,\mathcal{S}_{01}
   + \frac{10060}{2}\,\mathcal{S}_{11}
 - \frac{715}{24}\,\mathcal{S}_{0000}
   + \frac{907}{6}\,\mathcal{S}_{0001} \\
 &+ \frac{2075}{4}\,\mathcal{S}_{0011}
 - \frac{1915}{6}\,\mathcal{S}_{0111}
   - \frac{1155}{24}\,\mathcal{S}_{1111}
   \;\geq\; -65480,
\end{split}
\end{equation}
with
\begin{equation}
\Bc(\boldsymbol{\alpha}^{\star}) = -65480,
\quad
\Bq(\boldsymbol{\alpha}^{\star}) \approx -105964.3
\ \text{at}\
\boldsymbol{\theta}^{\star}\approx (1.7799,\,3.2237),
\quad
p_{c}(\boldsymbol{\alpha}^{\star})\approx 0.3821 .
\end{equation}
As for $N=6$, we select for the experiment the sparse member
$\boldsymbol{\alpha}^{\diamond}= (3,\,0,\,9;\;0,\,0,\,-1,\,0,\,6)$
[$\mathcal{I}_3$ of the main text],
\begin{equation}\label{eq:BI_24body_N8_sparse_supp}
\mathcal{I}_3
= \frac{3}{2}\,\mathcal{S}_{00}
+ \frac{9}{2}\,\mathcal{S}_{11}
- \frac{1}{4}\,\mathcal{S}_{0011}
+ \frac{1}{4}\,\mathcal{S}_{1111}
\;\geq\; -48,
\end{equation}
with
\begin{equation}\label{eq:sparse8_values}
\Bc = -48,
\qquad
\Bq \approx -62.0144
\ \text{at}\
\boldsymbol{\theta}^{\diamond}\approx(0.3690,\,5.8444),
\qquad
p_{c}\approx 0.2260,
\qquad
F_c\approx 0.775 .
\end{equation}

Like $\mathcal{I}_1$ and $\mathcal{I}_2$, the inequality
$\mathcal{I}_3$ is a facet of the projected PI polytope, involves only the four symmetric
correlators accessible from the $\binom{8}{2}=28$ two-versus-six
setting configurations, and comfortably certifies the prepared
eight-photon state, $p_c\approx 0.226\gg\pexp\approx 0.084$. The
predicted value at the prepared fidelity, $-56.8$, agrees with the
measured $\mathcal{I}_3=-55.905(889)$ within $1.0\sigma$
(Table~\ref{tab:hierarchy}). Figure~\ref{fig:pcrit_N8_k24} shows the
$p_c$ landscape of $\mathcal{I}_3$.

\begin{figure}[tb]
\centering
\begin{minipage}[b]{0.5\textwidth}
    \centering
    \includegraphics[width=\textwidth]{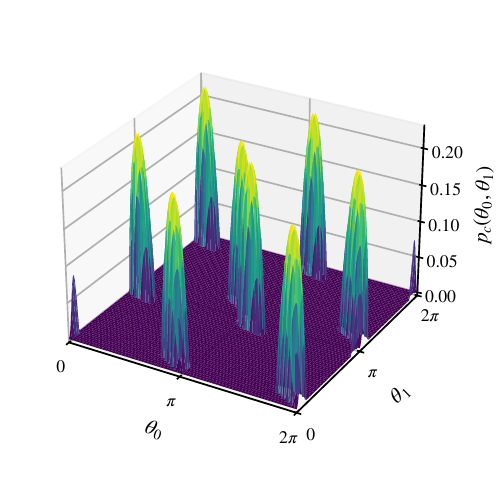}\\[-2pt]
    {\small (a)}
\end{minipage}\hfill
\begin{minipage}[b]{0.5\textwidth}
    \centering
    \includegraphics[width=\textwidth]{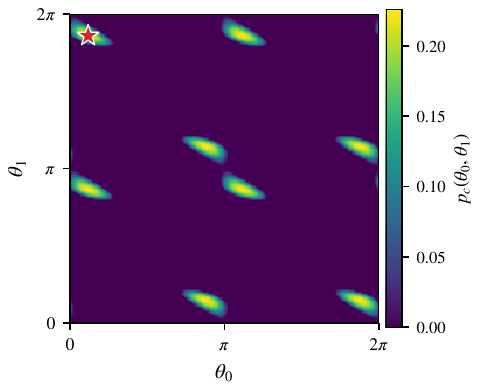}\\[-2pt]
    {\small (b)}
\end{minipage}
\caption{Critical noise threshold $p_{c}(\theta_0,\theta_1)$ of the eight-photon Bell inequality $\mathcal{I}_3$ in
\eqref{eq:BI_24body_N8_sparse_supp}, evaluated on
$\ket{D_{8}^{\,4}}$. (a)~Surface plot over
$(\theta_0,\theta_1)\in[0,2\pi]^{2}$. (b)~Heatmap. The star marks the
optimum $\boldsymbol{\theta}^{\diamond}\approx (0.3690,\,5.8444)$,
where $p_{c}\approx 0.2260$.}
\label{fig:pcrit_N8_k24}
\end{figure}

For $N=4$ and $N=10$, the optimal thresholds are $p_c\approx0.53$ and $0.37$, respectively, which are shown together with the two-body optima ($m=2$ and $m=3$) in the End Matter of the main text. The four-body threshold thus decreases only slowly with $N$ ($0.53$, $0.46$, $0.38$, $0.37$ for $N=4,6,8,10$), whereas the two-body thresholds remain below $0.1$ for $N\ge6$: the fixed-order strategy is not limited by the witness but by the preparation of larger Dicke states.

\section{Noise model for Dicke states}
\label{app:dicke_noise_model}
\label{app:eight_photon_noise_model}

The six- and eight-photon Dicke states are prepared by postselecting multipair emission from a spontaneous parametric down-conversion (SPDC) source. Higher-order SPDC emission is therefore an intrinsic and important source of noise. Let $\mu$ denote the probability that a single pump pulse generates one $HV$ photon pair. In the low-gain regime the probability of generating $n$ pairs scales as $\mu^n$, and the SPDC output can be written as 

\begin{equation}
\ket{\Psi}_{\mathrm{SPDC}}=\sqrt{1-\mu}\,\sum_{n=0}^\infty \mu^{n/2} \ket{n_H,n_V}
\label{eq:spdc_emission}
\end{equation}
where $n_H$ and $n_V$ denote the number of photons of polarization $H$ and $V$.
The parameter $\eta$ is the effective probability that an emitted photon is detected. In this noise model, we retain the target emission order and the leading higher-order contribution.

Throughout this section, we analyze polarization-resolved coincidence statistics in the $H/V$ basis, which defines the local $Z$ basis. The resulting model describes excitation-sector populations in this basis, rather than a complete reconstruction of the postselected state. For an $N$-photon Dicke state, the target contribution originates from $N/2$-pair emission, whereas the leading higher-order process produces $N/2+1$ pairs. Depending on $\eta$, part of the additional photons may be lost before detection. Spatial overlap can also route several photons into the same output channel, where a threshold detector registers only one click. Higher-order events can therefore contribute to the same $N$-fold coincidence sample and populate the excitation sectors $N/2-1$, $N/2$, and $N/2+1$. We describe the effective postselected Dicke state as
\begin{equation}
\rho_N^{\mathrm{eff}}=\sum_{j=N/2-1}^{N/2+1}M_j^{(N)}\ketbra{D_N^{\,j}}{D_N^{\,j}},\qquad
\sum_{j=N/2-1}^{N/2+1}M_j^{(N)}=1,
\label{eq:effective_dicke_noise_state}
\end{equation}
where $M_j^{(N)}$ is estimated from the fraction of $Z^{\otimes N}$ outcomes. The expressions below assume symmetric spatial routing and polarization-independent efficiency. Same-polarization multiphoton occupation produces one threshold-detector click, whereas $HV$ double clicks in one spatial mode are rejected.

\subsection{Six-photon Dicke state}

The target state $\ket{D_6^{\,3}}$ originates from three-pair emission, with event weight
\begin{equation}
W_{3\mathrm{p}}^{(6)}=\mu^3\binom{6}{3}(3!)^2\frac{\eta^6}{4^2 8^4}.
\label{eq:six_target_weight}
\end{equation}

The leading background arises from four-pair emission. We label each accepted higher-order contribution by $R_{a,b}$, defined as its event weight normalized to the target three-pair weight $W_{3\mathrm{p}}^{(6)}$, where $a\in\{6,7,8\}$ is the number of photons reaching the detectors after loss and $b\in\{2,3,4\}$ is the number of output modes registered as $\ket{1}$. All these contributions produce sixfold coincidences. When $a>6$, multiple photons of the same polarization occupy a common output channel and are registered as a single click by the threshold detector.

Accounting for the different spatial-mode overlap configurations associated with each detected photon number, the resulting relative weights are
\begin{align}
R_{6,2}=R_{6,4}&=6\mu(1-\eta)^2, & R_{6,3}&=16\mu(1-\eta)^2,\nonumber\\
R_{7,2}=R_{7,4}&=2\mu\eta(1-\eta), & R_{7,3}&=8\mu\eta(1-\eta),\nonumber\\
R_{8,2}=R_{8,4}&=\frac{7}{36}\mu\eta^2, & R_{8,3}&=\mu\eta^2.
\label{eq:six_relative_click_weights}
\end{align}

Before normalization, the relative populations $M_2$, $M_3$, and $M_4$ of $D_6^{(2)}$, $D_6^{(3)}$, and $D_6^{(4)}$ satisfy
\begin{equation}
\left(M_2,M_3,M_4\right)\propto
\left(\sum_{a=6}^{8}R_{a,2}\,;\;1+\sum_{a=6}^{8}R_{a,3}\,;\;\sum_{a=6}^{8}R_{a,4}\right),
\label{eq:six_click_probabilities}
\end{equation}
where the unit term in the central component represents the target three-pair contribution.

Experimentally, for $\mu=0.039$ and $\eta=0.726$, substituting these values into the relative weights and normalizing the three click classes gives
\begin{equation}
\left(M_2,M_3,M_4\right)_{\mathrm{model}}=(0.031,\,0.938,\,0.031).
\label{eq:six_model_prediction}
\end{equation}
The measured distribution is
\begin{equation}
\left(M_2,M_3,M_4\right)_{\mathrm{exp}}=(0.035\pm0.003,\,0.937\pm0.004,\,0.028\pm0.003).
\label{eq:six_experimental_distribution}
\end{equation}

\subsection{Eight-photon Dicke state}

The target state $\ket{D_8^{(4)}}$ originates from four-pair emission, with event weight
\begin{equation}
W_{4\mathrm{p}}^{(8)}=\mu^4\binom{8}{4}(4!)^2\frac{\eta^8}{8^8}.
\label{eq:four_pair_postselection_weight}
\end{equation}

The leading background arises from five-pair emission, which is analyzed following the same procedure as for the six-photon state. We label each accepted higher-order contribution by $R_{a,b}$, defined as its event weight normalized to the target four-pair weight $W_{4\mathrm{p}}^{(8)}$, where $a\in\{8,9,10\}$ is the number of photons reaching the detectors after loss and $b\in\{3,4,5\}$ is the number of output modes registered as $\ket{1}$. Accounting for the different spatial-mode overlap configurations associated with each detected photon number, the resulting relative weights are
\begin{align}
R_{8,3}=R_{8,5}&=10\mu(1-\eta)^2, & R_{8,4}&=25\mu(1-\eta)^2,\nonumber\\
R_{9,3}=R_{9,5}&=\frac{15}{4}\mu\eta(1-\eta), & R_{9,4}&=\frac{25}{2}\mu\eta(1-\eta),\nonumber\\
R_{10,3}=R_{10,5}&=\frac{25}{64}\mu\eta^2, & R_{10,4}&=\frac{25}{16}\mu\eta^2.
\label{eq:eight_relative_click_weights}
\end{align}

Before normalization, the relative populations $M_3$, $M_4$, and $M_5$ of $D_8^{(3)}$, $D_8^{(4)}$, and $D_8^{(5)}$, respectively, satisfy
\begin{equation}
\left(M_3,M_4,M_5\right)\propto
\left(\sum_{a=8}^{10}R_{a,3}\,;\;1+\sum_{a=8}^{10}R_{a,4}\,;\;\sum_{a=8}^{10}R_{a,5}\right).
\label{eq:eight_click_probabilities}
\end{equation}

For $\mu=0.0390$ and $\eta=0.726$, substituting these values into the relative weights and normalizing the three click classes gives
\begin{equation}
\left(M_3,M_4,M_5\right)_{\mathrm{model}}=(0.05,\,0.900,\,0.05).
\label{eq:eight_model_prediction}
\end{equation}
The measured distribution is
\begin{equation}
\left(M_3,M_4,M_5\right)_{\mathrm{exp}}=
(0.047\pm0.008,\,0.910\pm0.011,\,0.043\pm0.008),
\label{eq:eight_experimental_distribution}
\end{equation}
where the uncertainties denote one standard errors.

The theoretical predictions agree well with the measured $Z$-basis click-class distributions. This agreement supports the validity of the noise model and provides an independent consistency check on the reliable preparation of our high-fidelity multiphoton Dicke states.

\section{Measurement Protocols for Fidelity and PI Bell Inequalities}
\label{supp:sec:protocols}

\subsection{Projector Decomposition for Dicke-State Fidelity}

The fidelity between an experimentally prepared state $\rho$ and an ideal pure state $\ket{\psi}$ is defined as
\begin{equation}
F_{\ket{\psi}}(\rho):=\operatorname{Tr}\!\left(\ketbra{\psi}{\psi}\rho\right).
\end{equation}

For the postselected six- and eight-photon Dicke-state experiments, full quantum-state tomography would require a prohibitively large number of measurement settings. We therefore determine the state-preparation fidelity by measuring the projector onto the target Dicke state. Because symmetric Dicke states are invariant under permutations of the qubits, their projectors are also permutationally invariant operators. According to Observation~1 in Ref.~\cite{toth_practical_2009}, any permutationally invariant $N$-qubit operator $A$ can be decomposed as
\begin{equation}
A=\sum_n c_n a_n^{\otimes N},
\label{eq:PI_decomposition}
\end{equation}
where $c_n$ are real coefficients and $a_n$ are single-qubit operators. In the polarization-qubit representation,
\begin{equation}
a_n=\alpha_n I+x_nX+y_nY+z_nZ.
\end{equation}

The fidelity can therefore be expressed in terms of the permutation-averaged correlation functions
\begin{equation}
C_k^{(N)}(A)=\frac{1}{\binom{N}{k}}\sum_{1\leq j_1<\cdots<j_k\leq N}
\left\langle A_{j_1}\cdots A_{j_k}\right\rangle,
\qquad k=1,\ldots,N,
\label{eq:Ci_general}
\end{equation}
where $A_j$ denotes the single-qubit observable $A$ acting on qubit $j$. Experimentally, the same measurement basis is applied to all qubits, and correlations of different orders are extracted from the corresponding $N$-photon coincidence data.
Here, we introduce the permutation-summed operator
\begin{equation}
[A]_k^{(N)}=\sum_{1\leq j_1<\cdots<j_k\leq N}A_{j_1}\cdots A_{j_k},
\qquad
C_k^{(N)}(A)=\frac{\operatorname{Tr}\!\left(\rho[A]_k^{(N)}\right)}{\binom{N}{k}}.
\label{eq:C_operator_relation}
\end{equation}

For the six-photon Dicke state, the exact projector decomposition associated with the 21 collective measurement settings is~\cite{wieczorek_experimental_2009}
\begingroup
\small
\setlength{\jot}{2pt}
\begin{equation}
\begin{aligned}
\ketbra{D_6^{(3)}}{D_6^{(3)}}={}&
\frac{1}{64}I^{\otimes6}
+\frac{3}{320}\!\left([X]_2+[X]_4+[Y]_2+[Y]_4\right)
+\frac{1}{320}[Z]_2+\frac{3}{320}[Z]_4 \\
&+\frac{1}{320}[X\!\pm\!Y]_2+\frac{1}{160}[X\!\pm\!Y]_4+\frac{1}{40}[X\!\pm\!Y]_6 \\
&-\frac{1}{320}\!\left([X\!\pm\!Z]_2+[Y\!\pm\!Z]_2\right)
-\frac{1}{160}\!\left([X\!\pm\!Z]_4+[Y\!\pm\!Z]_4\right) \\
&+\frac{1}{80}\!\left([X\!\pm\!Z]_6+[Y\!\pm\!Z]_6\right)
-\frac{27}{1280}[X\!\pm\!Y\!\pm\!Z]_6
+\frac{1}{20}\!\left([X]_6+[Y]_6-2[Z]_6\right) \\
&+\frac{1}{20}\sum_{k=1,2,4,5}(-1)^k
\left\{
\left[\cos\frac{k\pi}{6}X+\sin\frac{k\pi}{6}Z\right]_6
+\left[\cos\frac{k\pi}{6}Y+\sin\frac{k\pi}{6}Z\right]_6
\right\}.
\end{aligned}
\label{eq:D6_projector_decomposition}
\end{equation}
\endgroup

Taking the expectation value of Eq.~\eqref{eq:D6_projector_decomposition} and using Eq.~\eqref{eq:C_operator_relation} gives the fidelity expression used in the experiment:
\begingroup
\small
\setlength{\jot}{2pt}
\begin{equation}
\begin{aligned}
F_{\ket{D_6^{(3)}}}(\rho)
={}&\operatorname{Tr}\!\left[\rho\ketbra{D_6^{(3)}}{D_6^{(3)}}\right] \\
={}&\frac{1}{64}
+\frac{9}{64}\!\left[C_2(X)+C_4(X)+C_2(Y)+C_4(Y)\right]
+\frac{3}{64}C_2(Z)+\frac{9}{64}C_4(Z) \\
&+\frac{3}{64}C_2(X\!\pm\!Y)+\frac{3}{32}C_4(X\!\pm\!Y)+\frac{1}{40}C_6(X\!\pm\!Y) \\
&-\frac{3}{64}\!\left[C_2(X\!\pm\!Z)+C_2(Y\!\pm\!Z)\right]
-\frac{3}{32}\!\left[C_4(X\!\pm\!Z)+C_4(Y\!\pm\!Z)\right] \\
&+\frac{1}{80}\!\left[C_6(X\!\pm\!Z)+C_6(Y\!\pm\!Z)\right]
-\frac{27}{1280}C_6(X\!\pm\!Y\!\pm\!Z)
+\frac{1}{20}\!\left[C_6(X)+C_6(Y)-2C_6(Z)\right] \\
&+\frac{1}{20}\sum_{k=1,2,4,5}(-1)^k
\left[
C_6\!\left(\cos\frac{k\pi}{6}X+\sin\frac{k\pi}{6}Z\right)
+C_6\!\left(\cos\frac{k\pi}{6}Y+\sin\frac{k\pi}{6}Z\right)
\right].
\end{aligned}
\label{eq:D6_c}
\end{equation}
\endgroup

For the eight-photon Dicke state, the corresponding projector has the following exact decomposition into 29 collective measurement settings:
\begingroup
\small
\setlength{\jot}{2pt}
\begin{equation}
\begin{aligned}
\ketbra{D_8^{(4)}}{D_8^{(4)}}={}&
\frac{1}{256}I^{\otimes8}
+\frac{1}{448}\!\left([X]_2+[Y]_2\right)-\frac{1}{1792}[Z]_2
+\frac{1}{560}\!\left([X]_4+[Y]_4\right)+\frac{11}{8960}[Z]_4 \\
&+\frac{9}{3584}\!\left([X]_6+[Y]_6\right)-\frac{19}{1792}[Z]_6
+\frac{93}{3584}\!\left([X]_8+[Y]_8\right)+\frac{17}{224}[Z]_8 \\
&-\frac{1}{1120}\!\left([X\!\pm\!Z]_4+[Y\!\pm\!Z]_4\right)
+\frac{3}{2240}[X\!\pm\!Y]_4
-\frac{11}{3360}\!\left([X\!\pm\!Z]_6+[Y\!\pm\!Z]_6\right)
+\frac{3}{1120}[X\!\pm\!Y]_6 \\
&+\frac{19}{360}\!\left([X\!\pm\!Z]_8+[Y\!\pm\!Z]_8\right)
+\frac{139}{20160}[X\!\pm\!Y]_8
+\frac{125}{21504}\!\left([X\!\pm\!2Z]_6+[Y\!\pm\!2Z]_6\right) \\
&-\frac{2125}{32256}\!\left([X\!\pm\!2Z]_8+[Y\!\pm\!2Z]_8\right)
-\frac{2375}{64512}\!\left([2X\!\pm\!Z]_8+[2Y\!\pm\!Z]_8\right) \\
&+\frac{125}{64512}\!\left([2X\!\pm\!Y]_8+[X\!\pm\!2Y]_8\right)
-\frac{27}{17920}[X\!\pm\!Y\!\pm\!Z]_6
-\frac{297}{17920}[X\!\pm\!Y\!\pm\!Z]_8
+\frac{27}{896}[X\!\pm\!Y\!\pm\!2Z]_8.
\end{aligned}
\label{eq:D8_projector_decomposition}
\end{equation}
\endgroup

Taking the expectation value of Eq.~\eqref{eq:D8_projector_decomposition} and using Eq.~\eqref{eq:C_operator_relation} gives
\begingroup
\small
\setlength{\jot}{2pt}
\begin{equation}
\begin{aligned}
F_{\ket{D_8^{(4)}}}(\rho)
={}&\operatorname{Tr}\!\left[\rho\ketbra{D_8^{(4)}}{D_8^{(4)}}\right] \\
={}&\frac{1}{256}
+\frac{1}{16}\!\left[C_2(X)+C_2(Y)\right]-\frac{1}{64}C_2(Z)
+\frac{1}{8}\!\left[C_4(X)+C_4(Y)\right]+\frac{11}{128}C_4(Z) \\
&+\frac{9}{128}\!\left[C_6(X)+C_6(Y)\right]-\frac{19}{64}C_6(Z)
+\frac{93}{3584}\!\left[C_8(X)+C_8(Y)\right]+\frac{17}{224}C_8(Z) \\
&-\frac{1}{16}\!\left[C_4(X\!\pm\!Z)+C_4(Y\!\pm\!Z)\right]
-\frac{11}{120}\!\left[C_6(X\!\pm\!Z)+C_6(Y\!\pm\!Z)\right]
+\frac{19}{360}\!\left[C_8(X\!\pm\!Z)+C_8(Y\!\pm\!Z)\right] \\
&+\frac{3}{32}C_4(X\!\pm\!Y)+\frac{3}{40}C_6(X\!\pm\!Y)
+\frac{139}{20160}C_8(X\!\pm\!Y)
+\frac{125}{768}\!\left[C_6(X\!\pm\!2Z)+C_6(Y\!\pm\!2Z)\right] \\
&-\frac{2125}{32256}\!\left[C_8(X\!\pm\!2Z)+C_8(Y\!\pm\!2Z)\right]
-\frac{2375}{64512}\!\left[C_8(2X\!\pm\!Z)+C_8(2Y\!\pm\!Z)\right] \\
&+\frac{125}{64512}\!\left[C_8(2X\!\pm\!Y)+C_8(X\!\pm\!2Y)\right]
-\frac{27}{640}C_6(X\!\pm\!Y\!\pm\!Z)
-\frac{297}{17920}C_8(X\!\pm\!Y\!\pm\!Z)
+\frac{27}{896}C_8(X\!\pm\!Y\!\pm\!2Z).
\end{aligned}
\label{eq:D8_c}
\end{equation}
\endgroup

Here, $I$ is the single-qubit identity operator, while $X$, $Y$, and $Z$ are the Pauli operators in the polarization-qubit representation. The notation $[A]_k^{(N)}$ denotes the unnormalized sum of all $\binom{N}{k}$ distinct $k$-qubit products of $A$, whereas $C_k^{(N)}(A)$ denotes their normalized expectation value. The superscript $(N)$ is omitted when the particle number is clear from the context.
Every linear combination $A=aX+bY+cZ$ denotes the normalized measurement observable
\begin{equation}
\widehat{A}=\frac{aX+bY+cZ}{\sqrt{a^2+b^2+c^2}},
\end{equation}
where the normalization symbol is suppressed in the compact direction labels. For example, $X+2Z$ denotes $(X+2Z)/\sqrt{5}$.

The $\pm$ notation denotes summation over all displayed sign choices:
\begingroup
\small
\setlength{\jot}{2pt}
\begin{equation}
\begin{aligned}
C_k(A\!\pm\!B)&=C_k(A+B)+C_k(A-B), \\
C_k(A\!\pm\!B\!\pm\!C)&=C_k(A+B+C)+C_k(A+B-C)+C_k(A-B+C)+C_k(A-B-C), \\
[A\!\pm\!B]_k&=[A+B]_k+[A-B]_k, \\
[A\!\pm\!B\!\pm\!C]_k&=[A+B+C]_k+[A+B-C]_k+[A-B+C]_k+[A-B-C]_k .
\end{aligned}
\label{eq:pm_notation}
\end{equation}
\endgroup

\subsection{Measurement Settings for the PI Bell Inequalities}
\label{subsec:pi_bell_settings}

We experimentally test four permutationally invariant (PI) Bell
inequalities with six- and eight-photon Dicke state, which are defined in Eq.\eqref{eq:Tura_supp}, Eq.\eqref{eq:BI_24body_sparse_supp}, Eq.\eqref{eq:BI_24body_facetG_supp} and Eq.\eqref{eq:BI_24body_N8_sparse_supp}.
Each photon $i$ is measured with one of the two local observables $A_0^{(i)}$, $A_1^{(i)}$ of Eq.~\eqref{eq:sym_meas_supp}, with outcome $a_i=\pm1$. The measurement directions lie in the $xz$ plane of the Bloch sphere; for each inequality we use its optimal angles $(\theta_0,\theta_1)$ of Secs.~\ref{supp:sec:twobody_insufficient} and \ref{supp:sec:fourbody}, collected in Table~\ref{tab:PI_measurement_angles}.

\begin{table}[htb]
\centering
\caption{Local measurement angles used for the four PI Bell inequalities.}
\label{tab:PI_measurement_angles}

\renewcommand{\arraystretch}{1.4}
\setlength{\tabcolsep}{12pt}

\begin{tabular}{|c|c|c|}
\hline
Inequality & $\theta_{0}$ (rad) & $\theta_{1}$ (rad) \\
\hline
$\mathcal{I}_{\mathrm{T}}$ & $3.264$ & $2.165$ \\
\hline
$\mathcal{I}_{1}$ & $3.550$ & $5.730$ \\
\hline
$\mathcal{I}_{2}$ & $2.864$ & $0.749$ \\
\hline
$\mathcal{I}_{3}$ & $0.369$ & $5.844$ \\
\hline
\end{tabular}

\end{table}

All four inequalities involve only the PI correlators $\mathcal{S}_{00}$, $\mathcal{S}_{11}$, $\mathcal{S}_{0011}$, $\mathcal{S}_{1111}$ (and $\mathcal{S}_{01}$ for $\mathcal{I}_{\mathrm{T}}$), which are all obtained from ``two-versus-rest'' configurations: in measurement run $r$, two photons are measured with $A_0$ and the remaining $N-2$ photons with $A_1$. The corresponding measured assignments are listed in Tables~\ref{tab:N6_PI_runs} and~\ref{tab:N8_PI_runs}. We denote the two photon sets by
\begin{equation}
    B_{0}^{(r)}
    =
    \left\{
        i:\text{photon }i\text{ is measured with }A_{0}
    \right\},
    \qquad
    B_{1}^{(r)}
    =
    \left\{
        i:\text{photon }i\text{ is measured with }A_{1}
    \right\},
    \label{eq:measurement_subsets}
\end{equation}
and by $\langle\,\cdot\,\rangle_r$ the average over the normalized $N$-fold coincidence outcomes of run $r$. From the coincidence data of run $r$ we extract the \emph{run marginals}, defined as the arithmetic means of all correlators of a given type available in that run,
\begin{equation}
\begin{aligned}
    m_{00}^{(r)}
    &= \left\langle a_i a_j \right\rangle_{r},
    && \{i,j\}=B_{0}^{(r)},
    \\[2pt]
    m_{11}^{(r)}
    &= \binom{N-2}{2}^{-1}\sum_{\{i,j\}\subset B_{1}^{(r)}}\left\langle a_i a_j \right\rangle_{r},
    \\[2pt]
    m_{01}^{(r)}
    &= \frac{1}{2(N-2)}\sum_{i\in B_{0}^{(r)},\,j\in B_{1}^{(r)}}\left\langle a_i a_j \right\rangle_{r},
    \\[2pt]
    m_{0011}^{(r)}
    &= \binom{N-2}{2}^{-1}\sum_{\{k,\ell\}\subset B_{1}^{(r)}}\left\langle a_i a_j a_k a_{\ell}\right\rangle_{r},
    && \{i,j\}=B_{0}^{(r)},
    \\[2pt]
    m_{1111}^{(r)}
    &= \binom{N-2}{4}^{-1}\sum_{\{i,j,k,\ell\}\subset B_{1}^{(r)}}\left\langle a_i a_j a_k a_{\ell}\right\rangle_{r}.
\end{aligned}
\label{eq:run_marginal_correlators}
\end{equation}
The complete measurement consists of $R_N=\binom{N}{2}$ runs, one for each choice of the pair $B_0^{(r)}$, and the run marginals are averaged with equal weight,
\begin{equation}
    \overline{m}_{\alpha}
    =
    \frac{1}{R_N}
    \sum_{r=1}^{R_N}m_{\alpha}^{(r)},
    \qquad
    \alpha\in\{00,01,11,0011,1111\},
    \label{eq:run_averaged_m}
\end{equation}
so that all permutation-related configurations contribute equally and the protocol itself is permutation invariant. Every unordered pair of photons is the $A_0$ pair of exactly one run, and every $2$- or $4$-subset of $B_1^{(r)}$ occurs equally often over the $R_N$ runs, so the run-averaged marginals coincide with the averages of the corresponding correlators over all photon subsets. Comparing with the definition \eqref{eq:Scorr_supp} of the PI correlators as sums over ordered tuples of distinct parties,
\begin{equation}
\begin{aligned}
    \mathcal{S}_{00}
        &= N(N-1)\overline{m}_{00}, \\
    \mathcal{S}_{01}
        &= N(N-1)\overline{m}_{01}, \\
    \mathcal{S}_{11}
        &= N(N-1)\overline{m}_{11}, \\
    \mathcal{S}_{0011}
        &= N(N-1)(N-2)(N-3)\overline{m}_{0011}, \\
    \mathcal{S}_{1111}
        &= N(N-1)(N-2)(N-3)\overline{m}_{1111}.
\end{aligned}
\label{eq:S_m_relation}
\end{equation}
The Bell value $I$ of each inequality follows by inserting Eq.~\eqref{eq:S_m_relation} into its definition, and we quote the violation factor $R=I/\Bc$, so that $R>1$ signals a violation.

For each configuration of Eq.~\eqref{eq:run_marginal_correlators}, an accepted $N$-fold coincidence provides a complete outcome string for all photons. By marginalizing the corresponding joint outcome distribution over photons not involved in a given correlator, we extract all admissible two- and four-body marginal correlators from that measurement configuration. Correlators related by photon permutations are averaged with equal weight within each run and subsequently over all $R_N=\binom{N}{2}$ runs, yielding the PI correlators and, through the relations above, the corresponding Bell value. The measurements contain approximately $4000$ sixfold events per configuration for $\mathcal{I}_1$ and $680$ eightfold events per configuration for $\mathcal{I}_3$, resulting in $\mathcal{I}_1=-24.012\pm0.186$ and $\mathcal{I}_3=-55.905\pm0.889$, respectively. Because multiple marginal correlators are derived from the same coincidence records, they are statistically correlated and are not treated as independent samples. These correlations are retained by the Poisson parametric bootstrap described in Sec.~\ref{supp:sec:error}, in which the raw coincidence counts are resampled jointly for each measurement configuration.

For the ideal state $\ket{D_N^{\,N/2}}$ the run marginals are independent of $r$ by permutation invariance, and they follow in closed form from the parity selection rule (Theorem~\ref{thm:selection} and its proof): only Pauli strings with an even number of $\sigma_x$ and an even number of $\sigma_z$ survive on $\ket{D_N^{\,N/2}}$, and permutation invariance reduces every symmetrized correlator to a single representative tuple. With $c_x\equiv\cos\theta_x$, $s_x\equiv\sin\theta_x$ and $u_1\equiv\sin^2\theta_1$,
\begin{align}
    m_{xy}(\theta_x,\theta_y)
    &= \frac{\tfrac{N}{2}\,s_x s_y - c_x c_y}{N-1},
    \qquad xy \in \{00,\,01,\,11\},
    \label{eq:m2_theory}
    \\[4pt]
    m_{0011}(\theta_0,\theta_1)
    &= \frac{3\,c_0^2 c_1^2
       - \tfrac{N}{2}\bigl(s_0^2 c_1^2 + c_0^2 s_1^2 + 4\,c_0 s_0 c_1 s_1\bigr)
       + \tfrac{3N(N-2)}{8}\,s_0^2 s_1^2}{(N-1)(N-3)},
    \label{eq:m0011_theory}
    \\[4pt]
    m_{1111}(\theta_1)
    &= \frac{3\Bigl[\tfrac{(N+2)(N+4)}{8}\,u_1^2 - (N+2)\,u_1 + 1\Bigr]}
            {(N-1)(N-3)}.
    \label{eq:m1111_theory}
\end{align}
The same function of $\theta_0$ would give $m_{0000}$.
For the depolarized state $\rho(p)$ of Eq.~\eqref{eq:rho_p} every marginal is rescaled by $(1-p)$. Inserting Eqs.~\eqref{eq:m2_theory}--\eqref{eq:m1111_theory} into \eqref{eq:S_m_relation} reproduces the quantum values $\Bq$ of Secs.~\ref{supp:sec:twobody_insufficient} and \ref{supp:sec:fourbody}, and multiplication by $(1-\pexp)$ gives the white-noise predictions for the run marginals listed in the data tables of Sec.~\ref{supp:sec:exp_details}; the scatter of $m^{(r)}_\alpha$ across $r$ is a direct consistency check of the permutation invariance of the prepared state.

\begin{table}[htb]
\centering
\caption{
Measurement configurations for the six-photon PI Bell tests.
Here, $0$ and $1$ denote the local observables $A_{0}$ and $A_{1}$ of Eq.~\eqref{eq:sym_meas_supp}, and the entry position labels the photon (spatial mode). Experimentally, each run $r$ corresponds to a SET in the table.
}
\label{tab:N6_PI_runs}
\small
\setlength{\tabcolsep}{8pt}
\renewcommand{\arraystretch}{1.15}

\begin{tabular}{|cc|cc|cc|}
\hline
Measurement Setting & Assignment
& Measurement Setting & Assignment
& Measurement Setting & Assignment \\
\hline

\textbf{SET1}  & $001111$
& \textbf{SET6}  & $100111$
& \textbf{SET11} & $110101$ \\

\textbf{SET2}  & $010111$
& \textbf{SET7}  & $101011$
& \textbf{SET12} & $110110$ \\

\textbf{SET3}  & $011011$
& \textbf{SET8}  & $101101$
& \textbf{SET13} & $111001$ \\

\textbf{SET4}  & $011101$
& \textbf{SET9}  & $101110$
& \textbf{SET14} & $111010$ \\

\textbf{SET5}  & $011110$
& \textbf{SET10} & $110011$
& \textbf{SET15} & $111100$ \\

\hline
\end{tabular}
\end{table}

\begin{table}[t]
\centering
\caption{
Measurement configurations for the eight-photon PI Bell test.
Here, $0$ and $1$ denote the local observables $A_{0}$ and $A_{1}$ of Eq.~\eqref{eq:sym_meas_supp}, and the entry position labels the photon (spatial mode). Experimentally, each run $r$ corresponds to a SET in the table.
}
\label{tab:N8_PI_runs}
\small
\setlength{\tabcolsep}{10pt}
\renewcommand{\arraystretch}{1.15}

\begin{tabular}{|cc|cc|}
\hline
Measurement Setting & Assignment
& Measurement Setting & Assignment \\
\hline

\textbf{SET1}  & $00111111$
& \textbf{SET15} & $11010111$ \\

\textbf{SET2}  & $01011111$
& \textbf{SET16} & $11011011$ \\

\textbf{SET3}  & $01101111$
& \textbf{SET17} & $11011101$ \\

\textbf{SET4}  & $01110111$
& \textbf{SET18} & $11011110$ \\

\textbf{SET5}  & $01111011$
& \textbf{SET19} & $11100111$ \\

\textbf{SET6}  & $01111101$
& \textbf{SET20} & $11101011$ \\

\textbf{SET7}  & $01111110$
& \textbf{SET21} & $11101101$ \\

\textbf{SET8}  & $10011111$
& \textbf{SET22} & $11101110$ \\

\textbf{SET9}  & $10101111$
& \textbf{SET23} & $11110011$ \\

\textbf{SET10} & $10110111$
& \textbf{SET24} & $11110101$ \\

\textbf{SET11} & $10111011$
& \textbf{SET25} & $11110110$ \\

\textbf{SET12} & $10111101$
& \textbf{SET26} & $11111001$ \\

\textbf{SET13} & $10111110$
& \textbf{SET27} & $11111010$ \\

\textbf{SET14} & $11001111$
& \textbf{SET28} & $11111100$ \\

\hline
\end{tabular}
\end{table}

\section{Experimental Details and Results}
\label{supp:sec:exp_details}
\subsection{Experimental-Platform Details and Long-Term Stability}

The photon source is based on a ppKTP crystal pumped at $775\,\mathrm{nm}$ to generate orthogonally polarized photon pairs at $1550\,\mathrm{nm}$. The crystal follows the source architecture developed for the Jiuzhang photonic processor~\cite{Zhong2020Jiuzhang} and has sufficient intrinsic spectral purity to operate without additional filtering. For the highest-fidelity state preparations, the crystal is stabilized at $28\,^{\circ}\mathrm{C}$. Temporal walk-off between the $H$- and $V$-polarized photons is compensated using a KDP crystal and a Soleil--Babinet compensator. QWP--HWP pairs correct the polarization transformations introduced by the single-mode fibers, while HWPs set to $0^{\circ}$ compensate the relative $\pi$ phase acquired by modes undergoing an odd number of BS reflections.

After polarization analysis, the two orthogonal polarization outputs of each spatial mode are coupled through optical fibers to separate channels of superconducting nanowire single-photon detectors (SNSPDs), whose detection efficiencies are approximately $90\%$. The six- and eight-photon measurements therefore use $12$ and $16$ detector channels, respectively. We postselect events in which exactly one of the two polarization channels clicks in every spatial mode while the other remains silent. This yields all $2^6=64$ six-photon and $2^8=256$ eight-photon polarization-outcome patterns used to evaluate the corresponding correlations.

Stable operation of the ppKTP source and the collection system is essential for long-term multiphoton measurements. As a representative test of the temperature stability, we continuously recorded the ppKTP-crystal temperature over a two-week period during the acquisition of the highest-fidelity data set, with the crystal stabilized at approximately $28\,^{\circ}\mathrm{C}$, as shown in Fig.~\ref{fig:XD}(a). We further monitor the photon-collection efficiency and collinear $HV$ two-photon interference in all eight spatial modes. The per-mode detection efficiency is determined from the $HV$ coincidence measurements after PBS separation and includes the collection and detection losses, while the interference contrast is defined as the ratio of the coincidence signals measured in the $Z$ and $X$ bases within the same spatial mode, after subtracting accidental coincidences. Their average values reach approximately $72\%$ and $150$, respectively, and remain stable during long-term operation, as shown in Fig.~\ref{fig:XD}(b).
Since the $\binom{N}{2}$ measurement configurations entering each witness are recorded one after another, this stability supports the assumption, stated in the main text, that the same state is prepared in all configurations.

\begin{figure}[htb]
\centering
\includegraphics[width=\textwidth]{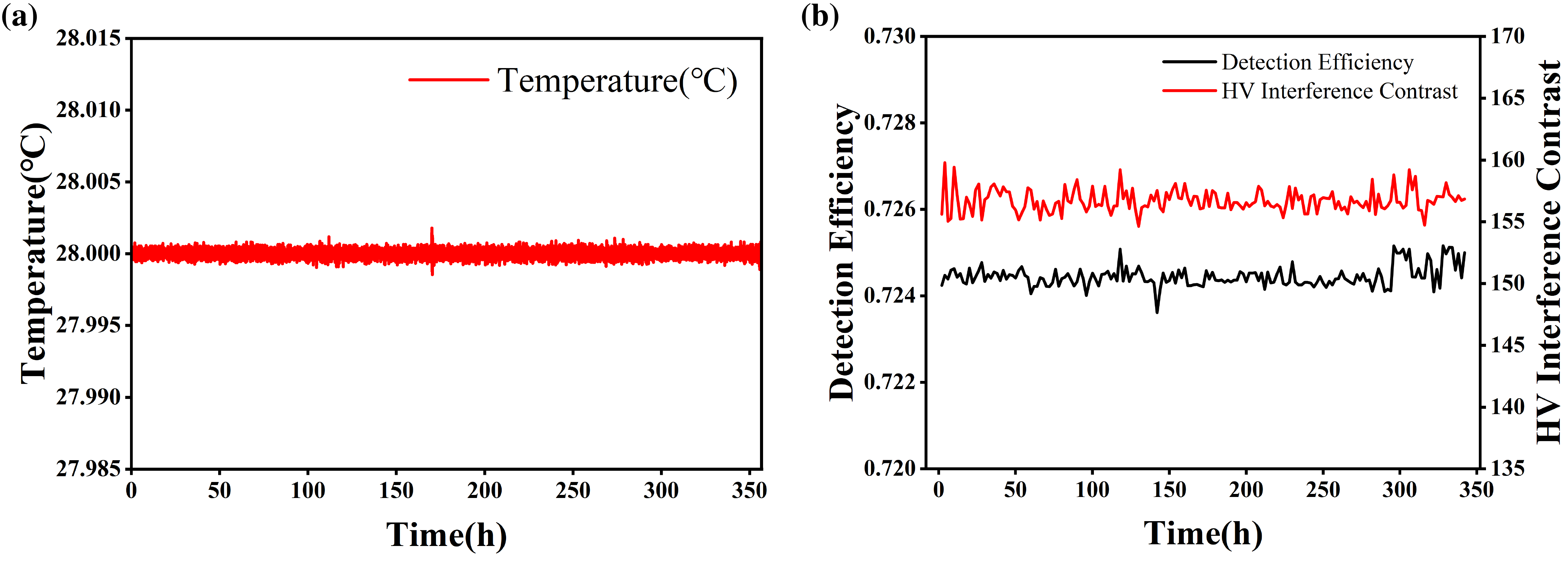}
\caption{
Long-term stability of the experimental platform.
(a) Temperature of the ppKTP crystal over two weeks, maintained at $28.000\pm0.005\,^{\circ}\mathrm{C}$. 
(b) Detection efficiency and collinear $HV$ two-photon interference contrast recorded over two weeks. The plotted detection efficiency is the sum of the $H/V$ two-photon coincidence efficiencies over the eight spatial modes, whereas the interference contrast is averaged over the eight modes. Six- and eight-photon coincidence records are grouped into intervals of $600\,\mathrm{s}$ and $7{,}200\,\mathrm{s}$, respectively. The apparatus is inspected every two hours to ensure comparable experimental conditions throughout data acquisition.}
\label{fig:XD}
\end{figure}

\subsection{Error Analysis}
\label{supp:sec:error}
Of particular importance, several operator expectation values entering the fidelity and PI Bell inequalities are extracted from the same set of raw multiphoton coincidence counts(for example, in the inequality $\mathcal{I}_{1}$, expectation values of $m_{00}$, $m_{11}$, $m_{0011}$, and $m_{0011}$ of SET1 are actually extracted from the original multiphoton coincidence counts of $001111$). They therefore share statistical fluctuations and cannot be treated as independent. To retain these correlations, we apply a Monte Carlo parametric bootstrap based on Poisson counting statistics directly to the raw counts.
For each measurement setting and bootstrap realization $b$, the count $n_k$ associated with outcome $k$ is resampled as
\begin{equation}
    n_k^{(b)}
    \sim
    \operatorname{Poisson}\!\left(n_k\right),
    \label{eq:poisson_bootstrap}
\end{equation}
and the corresponding relative frequency is
\begin{equation}
    \hat{p}_k^{(b)}
    =
    \frac{n_k^{(b)}}{\sum_l n_l^{(b)}}.
    \label{eq:bootstrap_probability}
\end{equation}
For each realization, all expectation values are evaluated from the same resampled count vector, and the full analysis is repeated to obtain the fidelity or PI Bell value $Q^{(b)}$. This procedure preserves correlations between quantities derived from the same measurement setting, while different settings are resampled independently.
The statistical uncertainty is taken as the sample standard deviation of the bootstrap results,
\begin{equation}
    u(Q)
    =
    \sqrt{
        \frac{1}{B-1}
        \sum_{b=1}^{B}
        \left(Q^{(b)}-\overline{Q}\right)^2
    },
    \qquad
    \overline{Q}
    =
    \frac{1}{B}
    \sum_{b=1}^{B} Q^{(b)}.
    \label{eq:bootstrap_uncertainty}
\end{equation}
The reported central values are obtained from the original experimental counts, and the error bars represent one standard deviation of the bootstrap distribution. We use $B=50\,000$ realizations, for which the estimated uncertainties are numerically stable.

\subsection{Experimental Results}

To confirm the performance of our six-photon source, we compare the fidelity and sixfold coincidence rate of the prepared $\lvert D_{6}^{3}\rangle$ state with previous photonic realizations in Table~\ref{tab:six_photon_dicke_comparison}. Among the experiments reporting a global fidelity with respect to the target Dicke state, our preparation achieves both the highest fidelity and the highest sixfold coincidence rate. Although the absolute rates also depend on the source architecture, pump power, collection efficiency, and detector performance, this comparison demonstrates the simultaneous improvement in state quality and generation rate achieved in our experiment.

\begin{table}[t]
    \centering
    \caption{
        Comparison of experimental realizations of the six-photon
        Dicke state $\lvert D_{6}^{3}\rangle$.
        All coincidence rates are given in events per second.
    }
    \label{tab:six_photon_dicke_comparison}
    \setlength{\tabcolsep}{8pt}
    \renewcommand{\arraystretch}{1.15}
    \begin{tabular}{|l|c|c|}
        \hline
        \textbf{Work}
        & \textbf{Reported fidelity}
        & \textbf{Sixfold rate (Hz)} \\
        \hline
        Prevedel \textit{et al.} (2009)~\cite{prevedel_experimental_2009}
        & $0.56(2)$
        & $0.003$ \\
        \hline
        Wieczorek \textit{et al.} (2009)~\cite{wieczorek_experimental_2009}
        & $0.654(24)$
        & $0.062$ \\
        \hline
        Schwemmer \textit{et al.} (2014)~\cite{schwemmer_experimental_2014}
        & $0.678(18)$
        & $0.038$ \\
        \hline
        \textbf{Our work}
        & $\mathbf{0.933(4)}$
        & $\mathbf{6.8}$ \\
        \hline
    \end{tabular}
\end{table}

The set-resolved violation factors and the complete Bell values obtained for the six- and eight-photon Dicke states are summarized in Fig.~\ref{fig:supp_runwise}. The corresponding experimental details, including the measurement settings, expectation values and their uncertainties, and coincidence counts used to determine the fidelities and evaluate the PI Bell inequalities for $\ket{D_6^{\,3}}$ and $\ket{D_8^{\,4}}$, are reported in Tables~\ref{tab:F_D6_GAO}--\ref{tab:PI_D8}.

\begin{figure}
\centering
\includegraphics[width=1\textwidth]{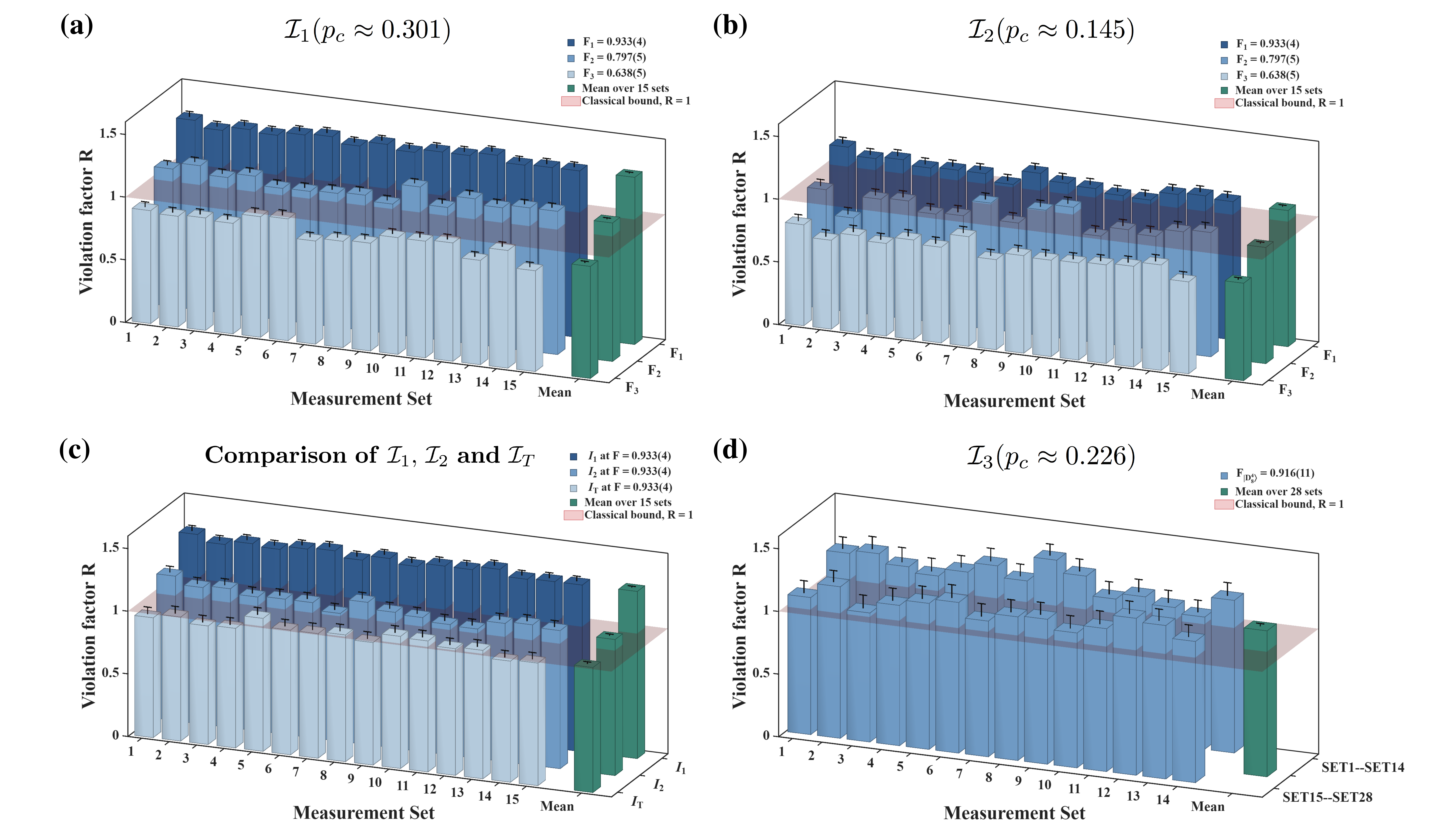}
\caption{
Set-resolved violation factors for the tested PI Bell inequalities.
The numbered bars show the violation factor $R=I/\Bc$ associated with the individual measurement sets, while the green bars give the complete Bell value $R$ obtained by averaging over all sets. 
(a),(b) Results for $\mathcal{I}_{1}$ and $\mathcal{I}_{2}$ from the 15 permutationally symmetric measurement sets using six-photon Dicke states with fidelities $F_{1}=0.933(4)$, $F_{2}=0.797(5)$, and $F_{3}=0.638(5)$.
(c) Comparison between $\mathcal{I}_{1}$, $\mathcal{I}_{2}$ and the reference two-body PI inequality $\mathcal{I}_{\mathrm{T}}$ for the six-photon Dicke state with $F=0.933(4)$.
(d) Results for $\mathcal{I}_{3}$ from the 28 symmetric measurement sets using the eight-photon Dicke state with $F=0.916(11)$.
The red plane denotes the classical bound $R=1$.
Because several expectation values are extracted from the same multiphoton coincidence data, their statistical correlations are retained using a Poisson parametric bootstrap (Supplemental Material).
Error bars denote one standard deviation obtained from Poisson parametric-bootstrap realizations of the raw coincidence counts.
}
\label{fig:supp_runwise}
\end{figure}

\begin{table}[t]
\caption{Experimental Results of the fidelity of six-photon Dicke state at $F=0.933\pm0.004$. \(C_2(A)\), \(C_4(A)\), \(C_6(A)\) are extracted from the six-body expectation of 21 operators according to Eq.(\ref{eq:D6_c}). Under experimental configuration with parametric photon no filtering, 3000 mW pump power, ppKTP crystal temperature stabilized at 28$^{\circ}$C. Each measurement basis accumulated approximately 4000 events over a 600-second integration time. Error bars are estimated using Poisson Monte Carlo parametric bootstrapping of the coincidence counts.}
\label{tab:F_D6_GAO}
\setlength{\tabcolsep}{2.5pt}
\renewcommand{\arraystretch}{0.85}
\begin{ruledtabular}
\footnotesize
\begin{tabular}{crrr}
\multicolumn{1}{c}{Measurement setting \(A\)}
& \multicolumn{1}{c}{\(C_2(A)\)}
& \multicolumn{1}{c}{\(C_4(A)\)}
& \multicolumn{1}{c}{\(C_6(A)\)} \\
\hline
\(X\) & \(0.609\pm0.008\) & \(0.593\pm0.008\) & \(0.902\pm0.007\) \\
\(Y\) & \(0.625\pm0.008\) & \(0.609\pm0.008\) & \(0.908\pm0.006\) \\
\(Z\) & \(-0.192\pm0.001\) & \(0.183\pm0.001\) & \(-0.873\pm0.008\) \\
\(X+Y\) & \(0.611\pm0.008\) & \(0.596\pm0.008\) & \(0.908\pm0.006\) \\
\(X-Y\) & \(0.625\pm0.008\) & \(0.609\pm0.008\) & \(0.897\pm0.007\) \\
\(Y+Z\) & \(0.208\pm0.005\) & \(-0.078\pm0.006\) & \(-0.030\pm0.016\) \\
\(Y-Z\) & \(0.187\pm0.005\) & \(-0.081\pm0.006\) & \(0.033\pm0.016\) \\
\(X+Z\) & \(0.200\pm0.005\) & \(-0.098\pm0.006\) & \(-0.053\pm0.016\) \\
\(X-Z\) & \(0.191\pm0.005\) & \(-0.087\pm0.006\) & \(0.010\pm0.015\) \\
\(X+Y+Z\) & \(0.337\pm0.006\) & \(0.039\pm0.008\) & \(-0.337\pm0.015\) \\
\(X+Y-Z\) & \(0.316\pm0.006\) & \(0.020\pm0.008\) & \(-0.323\pm0.015\) \\
\(X-Y+Z\) & \(0.334\pm0.006\) & \(0.030\pm0.008\) & \(-0.324\pm0.015\) \\
\(X-Y-Z\) & \(0.337\pm0.006\) & \(0.042\pm0.008\) & \(-0.311\pm0.015\) \\
\(\cos(\pi/6)X+\sin(\pi/6)Z\) & \(0.414\pm0.006\) & \(0.140\pm0.009\) & \(-0.327\pm0.015\) \\
\(\cos(2\pi/6)X+\sin(2\pi/6)Z\) & \(0.002\pm0.003\) & \(-0.065\pm0.003\) & \(0.380\pm0.015\) \\
\(\cos(4\pi/6)X+\sin(4\pi/6)Z\) & \(-0.007\pm0.003\) & \(-0.061\pm0.003\) & \(0.377\pm0.015\) \\
\(\cos(5\pi/6)X+\sin(5\pi/6)Z\) & \(0.398\pm0.006\) & \(0.120\pm0.009\) & \(-0.321\pm0.015\) \\
\(\cos(\pi/6)Y+\sin(\pi/6)Z\) & \(0.417\pm0.007\) & \(0.162\pm0.009\) & \(-0.314\pm0.015\) \\
\(\cos(2\pi/6)Y+\sin(2\pi/6)Z\) & \(0.007\pm0.003\) & \(-0.075\pm0.003\) & \(0.385\pm0.015\) \\
\(\cos(4\pi/6)Y+\sin(4\pi/6)Z\) & \(-0.010\pm0.003\) & \(-0.060\pm0.003\) & \(0.379\pm0.015\) \\
\(\cos(5\pi/6)Y+\sin(5\pi/6)Z\) & \(0.385\pm0.006\) & \(0.112\pm0.009\) & \(-0.353\pm0.014\) \\
\hline
\multicolumn{1}{c}{Fidelity of \(\ket{D_6^{\,3}}\)}
& \multicolumn{3}{c}{\(0.933\pm0.004\)} \\
\end{tabular}
\end{ruledtabular}
\end{table}

\begin{table}[t]
\caption{
Experimental Results of  inequalities $\mathcal{I}_1$ sets in $\ket{D_6^{\,3}}$ at $F=0.933\pm0.004$ in optimal point $P_{0}$. \(m_{00}\), \(m_{11}\), \(m_{0011}\), and \(m_{1111}\) are extracted from the six-body expectation of 15 operators according to Eq.\eqref{eq:run_marginal_correlators}---Eq.\eqref{eq:S_m_relation}. Under experimental configuration with parametric photon no filtering, 3000 mW pump power, ppKTP crystal temperature stabilized at 28$^{\circ}$C. Each measurement basis accumulated approximately 4000 events over a 600-second integration time. Error bars are estimated using Poisson Monte Carlo parametric bootstrapping of the coincidence counts.}
\label{tab:PI_GAO_0}
\setlength{\tabcolsep}{2.5pt}
\renewcommand{\arraystretch}{0.85}
\begin{ruledtabular}
\footnotesize
\begin{tabular}{crrrr}
\multicolumn{1}{c}{\(\vphantom{\bm{\mathcal{RUN}}}\)Run}
& \multicolumn{1}{c}{\(m_{00}\)}
& \multicolumn{1}{c}{\(m_{11}\)}
& \multicolumn{1}{c}{\(m_{0011}\)}
& \multicolumn{1}{c}{\(m_{1111}\)} \\
\hline
\(\vphantom{\bm{\mathcal{RUN}}}\bm{\mathcal{RUN}}\,\bm{1}\)  & \(-0.057\pm0.016\) & \(0.019\pm0.006\) & \(0.209\pm0.006\) & \(-0.073\pm0.016\) \\
\(\vphantom{\bm{\mathcal{RUN}}}\bm{\mathcal{RUN}}\,\bm{2}\)  & \(-0.029\pm0.016\) & \(0.026\pm0.007\) & \(0.206\pm0.006\) & \(-0.079\pm0.016\) \\
\(\vphantom{\bm{\mathcal{RUN}}}\bm{\mathcal{RUN}}\,\bm{3}\)  & \(-0.018\pm0.016\) & \(0.021\pm0.006\) & \(0.201\pm0.006\) & \(-0.099\pm0.016\) \\
\(\vphantom{\bm{\mathcal{RUN}}}\bm{\mathcal{RUN}}\,\bm{4}\)  & \(-0.065\pm0.016\) & \(0.014\pm0.006\) & \(0.198\pm0.006\) & \(-0.071\pm0.016\) \\
\(\vphantom{\bm{\mathcal{RUN}}}\bm{\mathcal{RUN}}\,\bm{5}\)  & \(-0.067\pm0.016\) & \(0.014\pm0.006\) & \(0.201\pm0.006\) & \(-0.075\pm0.016\) \\
\(\vphantom{\bm{\mathcal{RUN}}}\bm{\mathcal{RUN}}\,\bm{6}\)  & \(-0.069\pm0.016\) & \(0.015\pm0.006\) & \(0.194\pm0.006\) & \(-0.090\pm0.016\) \\
\(\vphantom{\bm{\mathcal{RUN}}}\bm{\mathcal{RUN}}\,\bm{7}\)  & \(-0.045\pm0.016\) & \(0.019\pm0.006\) & \(0.202\pm0.006\) & \(-0.077\pm0.015\) \\
\(\vphantom{\bm{\mathcal{RUN}}}\bm{\mathcal{RUN}}\,\bm{8}\)  & \(-0.023\pm0.016\) & \(0.024\pm0.006\) & \(0.202\pm0.006\) & \(-0.103\pm0.016\) \\
\(\vphantom{\bm{\mathcal{RUN}}}\bm{\mathcal{RUN}}\,\bm{9}\)  & \(-0.049\pm0.016\) & \(0.021\pm0.006\) & \(0.203\pm0.006\) & \(-0.076\pm0.016\) \\
\(\vphantom{\bm{\mathcal{RUN}}}\bm{\mathcal{RUN}}\,\bm{10}\) & \(-0.050\pm0.016\) & \(0.016\pm0.006\) & \(0.192\pm0.006\) & \(-0.099\pm0.016\) \\
\(\vphantom{\bm{\mathcal{RUN}}}\bm{\mathcal{RUN}}\,\bm{11}\) & \(-0.043\pm0.016\) & \(0.016\pm0.006\) & \(0.193\pm0.006\) & \(-0.100\pm0.016\) \\
\(\vphantom{\bm{\mathcal{RUN}}}\bm{\mathcal{RUN}}\,\bm{12}\) & \(-0.046\pm0.016\) & \(0.021\pm0.007\) & \(0.207\pm0.006\) & \(-0.089\pm0.016\) \\
\(\vphantom{\bm{\mathcal{RUN}}}\bm{\mathcal{RUN}}\,\bm{13}\) & \(-0.012\pm0.016\) & \(0.029\pm0.006\) & \(0.201\pm0.006\) & \(-0.103\pm0.016\) \\
\(\vphantom{\bm{\mathcal{RUN}}}\bm{\mathcal{RUN}}\,\bm{14}\) & \(-0.052\pm0.016\) & \(0.023\pm0.006\) & \(0.199\pm0.006\) & \(-0.085\pm0.016\) \\
\(\vphantom{\bm{\mathcal{RUN}}}\bm{\mathcal{RUN}}\,\bm{15}\) & \(-0.039\pm0.016\) & \(0.017\pm0.006\) & \(0.196\pm0.006\) & \(-0.094\pm0.016\) \\
\hline
\(\bar{m}\) & \(-0.044\pm0.004\) & \(0.020\pm0.002\) & \(0.200\pm0.001\) & \(-0.088\pm0.004\) \\
\(S\)      & \(-1.331\pm0.121\)  & \(0.590\pm0.049\)  & \(72.095\pm0.515\) & \(-31.485\pm1.451\) \\
\hline
\(I\)      & \multicolumn{4}{c}{\(-24.012\pm0.186\)} \\
\(R\)      & \multicolumn{4}{c}{\(1.334\pm0.010\)} \\
\end{tabular}
\end{ruledtabular}
\end{table}

\begin{table}[t]
\caption{
Experimental Results of inequalities $\mathcal{I}_1$ sets in $\ket{D_6^{\,3}}$ at $F=0.933\pm0.004$ in coordinate $P_{1}$. \(m_{00}\), \(m_{11}\), \(m_{0011}\), and \(m_{1111}\) are extracted from the six-body expectation of 15 operators according to Eq.\eqref{eq:run_marginal_correlators}---Eq.\eqref{eq:S_m_relation}. Under experimental configuration with parametric photon no filtering, 3000 mW pump power, ppKTP crystal temperature stabilized at 28$^{\circ}$C. Each measurement basis accumulated approximately 4000 events over a 600-second integration time. Error bars are estimated using Poisson Monte Carlo parametric bootstrapping of the coincidence counts.}
\label{tab:PI_GAO_P1}
\setlength{\tabcolsep}{2.5pt}
\renewcommand{\arraystretch}{0.85}
\begin{ruledtabular}
\small
\begin{tabular}{crrrr}
\multicolumn{1}{c}{\(\vphantom{\bm{\mathcal{RUN}}}\)Run}
& \multicolumn{1}{c}{\(m_{00}\)}
& \multicolumn{1}{c}{\(m_{11}\)}
& \multicolumn{1}{c}{\(m_{0011}\)}
& \multicolumn{1}{c}{\(m_{1111}\)} \\
\hline
\(\vphantom{\bm{\mathcal{RUN}}}\bm{\mathcal{RUN}}\,\bm{1}\)  & \(-0.166\pm0.016\) & \(0.030\pm0.007\) & \(0.124\pm0.006\) & \(-0.069\pm0.016\) \\
\(\vphantom{\bm{\mathcal{RUN}}}\bm{\mathcal{RUN}}\,\bm{2}\)  & \(-0.182\pm0.016\) & \(0.021\pm0.007\) & \(0.130\pm0.006\) & \(-0.074\pm0.016\) \\
\(\vphantom{\bm{\mathcal{RUN}}}\bm{\mathcal{RUN}}\,\bm{3}\)  & \(-0.196\pm0.016\) & \(0.023\pm0.007\) & \(0.116\pm0.006\) & \(-0.067\pm0.016\) \\
\(\vphantom{\bm{\mathcal{RUN}}}\bm{\mathcal{RUN}}\,\bm{4}\)  & \(-0.145\pm0.016\) & \(0.028\pm0.007\) & \(0.136\pm0.006\) & \(-0.096\pm0.016\) \\
\(\vphantom{\bm{\mathcal{RUN}}}\bm{\mathcal{RUN}}\,\bm{5}\)  & \(-0.186\pm0.016\) & \(0.014\pm0.006\) & \(0.133\pm0.006\) & \(-0.107\pm0.016\) \\
\(\vphantom{\bm{\mathcal{RUN}}}\bm{\mathcal{RUN}}\,\bm{6}\)  & \(-0.193\pm0.016\) & \(0.033\pm0.007\) & \(0.126\pm0.006\) & \(-0.077\pm0.016\) \\
\(\vphantom{\bm{\mathcal{RUN}}}\bm{\mathcal{RUN}}\,\bm{7}\)  & \(-0.190\pm0.016\) & \(0.014\pm0.006\) & \(0.128\pm0.006\) & \(-0.103\pm0.016\) \\
\(\vphantom{\bm{\mathcal{RUN}}}\bm{\mathcal{RUN}}\,\bm{8}\)  & \(-0.191\pm0.015\) & \(0.017\pm0.006\) & \(0.135\pm0.006\) & \(-0.100\pm0.016\) \\
\(\vphantom{\bm{\mathcal{RUN}}}\bm{\mathcal{RUN}}\,\bm{9}\)  & \(-0.187\pm0.015\) & \(0.021\pm0.006\) & \(0.117\pm0.006\) & \(-0.102\pm0.016\) \\
\(\vphantom{\bm{\mathcal{RUN}}}\bm{\mathcal{RUN}}\,\bm{10}\) & \(-0.165\pm0.016\) & \(0.012\pm0.006\) & \(0.117\pm0.006\) & \(-0.079\pm0.016\) \\
\(\vphantom{\bm{\mathcal{RUN}}}\bm{\mathcal{RUN}}\,\bm{11}\) & \(-0.170\pm0.016\) & \(0.023\pm0.007\) & \(0.128\pm0.006\) & \(-0.069\pm0.016\) \\
\(\vphantom{\bm{\mathcal{RUN}}}\bm{\mathcal{RUN}}\,\bm{12}\) & \(-0.185\pm0.015\) & \(0.005\pm0.006\) & \(0.114\pm0.006\) & \(-0.081\pm0.016\) \\
\(\vphantom{\bm{\mathcal{RUN}}}\bm{\mathcal{RUN}}\,\bm{13}\) & \(-0.175\pm0.016\) & \(0.019\pm0.007\) & \(0.127\pm0.006\) & \(-0.055\pm0.016\) \\
\(\vphantom{\bm{\mathcal{RUN}}}\bm{\mathcal{RUN}}\,\bm{14}\) & \(-0.197\pm0.015\) & \(0.021\pm0.006\) & \(0.119\pm0.006\) & \(-0.089\pm0.016\) \\
\(\vphantom{\bm{\mathcal{RUN}}}\bm{\mathcal{RUN}}\,\bm{15}\) & \(-0.187\pm0.015\) & \(0.025\pm0.006\) & \(0.119\pm0.006\) & \(-0.079\pm0.016\) \\
\hline
\(\bar{m}\) & \(-0.181\pm0.004\) & \(0.020\pm0.002\) & \(0.125\pm0.002\) & \(-0.083\pm0.004\) \\
\(S\) & \(-5.428\pm0.120\) & \(0.613\pm0.050\) & \(44.853\pm0.570\) & \(-29.931\pm1.461\) \\
\hline
\(I\) & \multicolumn{4}{c}{\(-21.017\pm0.262\)} \\
\(R\) & \multicolumn{4}{c}{\(1.168\pm0.015\)} \\
\end{tabular}
\end{ruledtabular}
\end{table}

\begin{table}[t]
\caption{
Experimental Results of inequalities $\mathcal{I}_1$ sets in $\ket{D_6^{\,3}}$ at $F=0.933\pm0.004$ in coordinate $P_{2}$. \(m_{00}\), \(m_{11}\), \(m_{0011}\), and \(m_{1111}\) are extracted from the six-body expectation of 15 operators according to Eq.\eqref{eq:run_marginal_correlators}---Eq.\eqref{eq:S_m_relation}. Under experimental configuration with parametric photon no filtering, 3000 mW pump power, ppKTP crystal temperature stabilized at 28$^{\circ}$C. Each measurement basis accumulated approximately 4000 events over a 600-second integration time. Error bars are estimated using Poisson Monte Carlo parametric bootstrapping of the coincidence counts.}
\label{tab:PI_GAO_P2}
\setlength{\tabcolsep}{2.5pt}
\renewcommand{\arraystretch}{0.85}
\begin{ruledtabular}
\small
\begin{tabular}{crrrr}
\multicolumn{1}{c}{\(\vphantom{\bm{\mathcal{RUN}}}\)Run}
& \multicolumn{1}{c}{\(m_{00}\)}
& \multicolumn{1}{c}{\(m_{11}\)}
& \multicolumn{1}{c}{\(m_{0011}\)}
& \multicolumn{1}{c}{\(m_{1111}\)} \\
\hline
\(\vphantom{\bm{\mathcal{RUN}}}\bm{\mathcal{RUN}}\,\bm{1}\)  & \(0.141\pm0.016\) & \(0.021\pm0.006\) & \(0.221\pm0.005\) & \(-0.082\pm0.016\) \\
\(\vphantom{\bm{\mathcal{RUN}}}\bm{\mathcal{RUN}}\,\bm{2}\)  & \(0.125\pm0.016\) & \(0.019\pm0.007\) & \(0.220\pm0.006\) & \(-0.074\pm0.016\) \\
\(\vphantom{\bm{\mathcal{RUN}}}\bm{\mathcal{RUN}}\,\bm{3}\)  & \(0.155\pm0.016\) & \(0.027\pm0.007\) & \(0.224\pm0.006\) & \(-0.089\pm0.016\) \\
\(\vphantom{\bm{\mathcal{RUN}}}\bm{\mathcal{RUN}}\,\bm{4}\)  & \(0.143\pm0.016\) & \(0.016\pm0.006\) & \(0.217\pm0.005\) & \(-0.078\pm0.016\) \\
\(\vphantom{\bm{\mathcal{RUN}}}\bm{\mathcal{RUN}}\,\bm{5}\)  & \(0.128\pm0.015\) & \(0.026\pm0.006\) & \(0.221\pm0.005\) & \(-0.083\pm0.015\) \\
\(\vphantom{\bm{\mathcal{RUN}}}\bm{\mathcal{RUN}}\,\bm{6}\)  & \(0.140\pm0.015\) & \(0.019\pm0.006\) & \(0.217\pm0.005\) & \(-0.100\pm0.015\) \\
\(\vphantom{\bm{\mathcal{RUN}}}\bm{\mathcal{RUN}}\,\bm{7}\)  & \(0.167\pm0.015\) & \(0.026\pm0.006\) & \(0.216\pm0.005\) & \(-0.113\pm0.016\) \\
\(\vphantom{\bm{\mathcal{RUN}}}\bm{\mathcal{RUN}}\,\bm{8}\)  & \(0.135\pm0.016\) & \(0.013\pm0.006\) & \(0.220\pm0.005\) & \(-0.060\pm0.016\) \\
\(\vphantom{\bm{\mathcal{RUN}}}\bm{\mathcal{RUN}}\,\bm{9}\)  & \(0.140\pm0.016\) & \(0.032\pm0.007\) & \(0.228\pm0.006\) & \(-0.067\pm0.016\) \\
\(\vphantom{\bm{\mathcal{RUN}}}\bm{\mathcal{RUN}}\,\bm{10}\) & \(0.159\pm0.016\) & \(0.019\pm0.006\) & \(0.214\pm0.006\) & \(-0.088\pm0.016\) \\
\(\vphantom{\bm{\mathcal{RUN}}}\bm{\mathcal{RUN}}\,\bm{11}\) & \(0.133\pm0.016\) & \(0.019\pm0.007\) & \(0.226\pm0.005\) & \(-0.065\pm0.016\) \\
\(\vphantom{\bm{\mathcal{RUN}}}\bm{\mathcal{RUN}}\,\bm{12}\) & \(0.132\pm0.016\) & \(0.017\pm0.006\) & \(0.214\pm0.005\) & \(-0.091\pm0.016\) \\
\(\vphantom{\bm{\mathcal{RUN}}}\bm{\mathcal{RUN}}\,\bm{13}\) & \(0.196\pm0.015\) & \(0.031\pm0.007\) & \(0.218\pm0.006\) & \(-0.105\pm0.016\) \\
\(\vphantom{\bm{\mathcal{RUN}}}\bm{\mathcal{RUN}}\,\bm{14}\) & \(0.133\pm0.016\) & \(0.020\pm0.006\) & \(0.217\pm0.005\) & \(-0.086\pm0.016\) \\
\(\vphantom{\bm{\mathcal{RUN}}}\bm{\mathcal{RUN}}\,\bm{15}\) & \(0.118\pm0.016\) & \(0.008\pm0.006\) & \(0.210\pm0.005\) & \(-0.077\pm0.016\) \\
\hline
\(\bar{m}\) & \(0.143\pm0.004\) & \(0.021\pm0.002\) & \(0.219\pm0.001\) & \(-0.084\pm0.004\) \\
\(S\) & \(4.288\pm0.121\) & \(0.626\pm0.050\) & \(78.753\pm0.504\) & \(-30.143\pm1.457\) \\
\hline
\(I\) & \multicolumn{4}{c}{\(-19.798\pm0.155\)} \\
\(R\) & \multicolumn{4}{c}{\(1.100\pm0.009\)} \\
\end{tabular}
\end{ruledtabular}
\end{table}

\begin{table}[t]
\caption{
Experimental Results of inequalities $\mathcal{I}_1$ sets in $\ket{D_6^{\,3}}$ at $F=0.933\pm0.004$ in coordinate $P_{3}$. \(m_{00}\), \(m_{11}\), \(m_{0011}\), and \(m_{1111}\) are extracted from the six-body expectation of 15 operators according to Eq.\eqref{eq:run_marginal_correlators}---Eq.\eqref{eq:S_m_relation}. Under experimental configuration with parametric photon no filtering, 3000 mW pump power, ppKTP crystal temperature stabilized at 28$^{\circ}$C. Each measurement basis accumulated approximately 4000 events over a 600-second integration time. Error bars are estimated using Poisson Monte Carlo parametric bootstrapping of the coincidence counts.}
\label{tab:PI_GAO_P3}
\setlength{\tabcolsep}{2.5pt}
\renewcommand{\arraystretch}{0.85}
\begin{ruledtabular}
\small
\begin{tabular}{crrrr}
\multicolumn{1}{c}{\(\vphantom{\bm{\mathcal{RUN}}}\)Run}
& \multicolumn{1}{c}{\(m_{00}\)}
& \multicolumn{1}{c}{\(m_{11}\)}
& \multicolumn{1}{c}{\(m_{0011}\)}
& \multicolumn{1}{c}{\(m_{1111}\)} \\
\hline
\(\vphantom{\bm{\mathcal{RUN}}}\bm{\mathcal{RUN}}\,\bm{1}\)  & \(-0.064\pm0.016\) & \(0.144\pm0.008\) & \(0.192\pm0.007\) & \(-0.090\pm0.016\) \\
\(\vphantom{\bm{\mathcal{RUN}}}\bm{\mathcal{RUN}}\,\bm{2}\)  & \(-0.046\pm0.016\) & \(0.151\pm0.008\) & \(0.195\pm0.007\) & \(-0.099\pm0.016\) \\
\(\vphantom{\bm{\mathcal{RUN}}}\bm{\mathcal{RUN}}\,\bm{3}\)  & \(-0.054\pm0.016\) & \(0.161\pm0.008\) & \(0.192\pm0.008\) & \(-0.100\pm0.016\) \\
\(\vphantom{\bm{\mathcal{RUN}}}\bm{\mathcal{RUN}}\,\bm{4}\)  & \(-0.006\pm0.016\) & \(0.164\pm0.008\) & \(0.208\pm0.007\) & \(-0.098\pm0.016\) \\
\(\vphantom{\bm{\mathcal{RUN}}}\bm{\mathcal{RUN}}\,\bm{5}\)  & \(-0.054\pm0.016\) & \(0.148\pm0.008\) & \(0.182\pm0.007\) & \(-0.121\pm0.016\) \\
\(\vphantom{\bm{\mathcal{RUN}}}\bm{\mathcal{RUN}}\,\bm{6}\)  & \(-0.064\pm0.016\) & \(0.163\pm0.008\) & \(0.188\pm0.008\) & \(-0.091\pm0.016\) \\
\(\vphantom{\bm{\mathcal{RUN}}}\bm{\mathcal{RUN}}\,\bm{7}\)  & \(-0.063\pm0.016\) & \(0.150\pm0.007\) & \(0.179\pm0.007\) & \(-0.136\pm0.015\) \\
\(\vphantom{\bm{\mathcal{RUN}}}\bm{\mathcal{RUN}}\,\bm{8}\)  & \(-0.026\pm0.016\) & \(0.160\pm0.008\) & \(0.197\pm0.007\) & \(-0.100\pm0.016\) \\
\(\vphantom{\bm{\mathcal{RUN}}}\bm{\mathcal{RUN}}\,\bm{9}\)  & \(-0.038\pm0.016\) & \(0.167\pm0.008\) & \(0.200\pm0.008\) & \(-0.083\pm0.016\) \\
\(\vphantom{\bm{\mathcal{RUN}}}\bm{\mathcal{RUN}}\,\bm{10}\) & \(-0.020\pm0.016\) & \(0.156\pm0.008\) & \(0.188\pm0.007\) & \(-0.113\pm0.016\) \\
\(\vphantom{\bm{\mathcal{RUN}}}\bm{\mathcal{RUN}}\,\bm{11}\) & \(-0.022\pm0.016\) & \(0.161\pm0.008\) & \(0.195\pm0.007\) & \(-0.101\pm0.016\) \\
\(\vphantom{\bm{\mathcal{RUN}}}\bm{\mathcal{RUN}}\,\bm{12}\) & \(-0.035\pm0.016\) & \(0.166\pm0.008\) & \(0.191\pm0.007\) & \(-0.105\pm0.016\) \\
\(\vphantom{\bm{\mathcal{RUN}}}\bm{\mathcal{RUN}}\,\bm{13}\) & \(-0.033\pm0.016\) & \(0.160\pm0.008\) & \(0.196\pm0.007\) & \(-0.122\pm0.016\) \\
\(\vphantom{\bm{\mathcal{RUN}}}\bm{\mathcal{RUN}}\,\bm{14}\) & \(-0.054\pm0.016\) & \(0.159\pm0.008\) & \(0.193\pm0.007\) & \(-0.106\pm0.016\) \\
\(\vphantom{\bm{\mathcal{RUN}}}\bm{\mathcal{RUN}}\,\bm{15}\) & \(-0.039\pm0.016\) & \(0.158\pm0.008\) & \(0.181\pm0.007\) & \(-0.107\pm0.016\) \\
\hline
\(\bar{m}\) & \(-0.041\pm0.004\) & \(0.158\pm0.002\) & \(0.192\pm0.002\) & \(-0.105\pm0.004\) \\
\(S\) & \(-1.235\pm0.122\) & \(4.731\pm0.058\) & \(69.040\pm0.682\) & \(-37.706\pm1.454\) \\
\hline
\(I\) & \multicolumn{4}{c}{\(-20.048\pm0.262\)} \\
\(R\) & \multicolumn{4}{c}{\(1.114\pm0.015\)} \\
\end{tabular}
\end{ruledtabular}
\end{table}

\begin{table}[t]
\caption{
Experimental Results of inequalities $\mathcal{I}_1$ sets in $\ket{D_6^{\,3}}$ at $F=0.933\pm0.004$ in coordinate $P_{4}$. \(m_{00}\), \(m_{11}\), \(m_{0011}\), and \(m_{1111}\) are extracted from the six-body expectation of 15 operators according to Eq.\eqref{eq:run_marginal_correlators}---Eq.\eqref{eq:S_m_relation}. Under experimental configuration with parametric photon no filtering, 3000 mW pump power, ppKTP crystal temperature stabilized at 28$^{\circ}$C. Each measurement basis accumulated approximately 4000 events over a 600-second integration time. Error bars are estimated using Poisson Monte Carlo parametric bootstrapping of the coincidence counts.}
\label{tab:PI_GAO_P4}
\setlength{\tabcolsep}{2.5pt}
\renewcommand{\arraystretch}{0.85}
\begin{ruledtabular}
\small
\begin{tabular}{crrrr}
\multicolumn{1}{c}{\(\vphantom{\bm{\mathcal{RUN}}}\)Run}
& \multicolumn{1}{c}{\(m_{00}\)}
& \multicolumn{1}{c}{\(m_{11}\)}
& \multicolumn{1}{c}{\(m_{0011}\)}
& \multicolumn{1}{c}{\(m_{1111}\)} \\
\hline
\(\vphantom{\bm{\mathcal{RUN}}}\bm{\mathcal{RUN}}\,\bm{1}\)  & \(-0.051\pm0.016\) & \(-0.093\pm0.005\) & \(0.188\pm0.004\) & \(0.015\pm0.016\) \\
\(\vphantom{\bm{\mathcal{RUN}}}\bm{\mathcal{RUN}}\,\bm{2}\)  & \(-0.053\pm0.016\) & \(-0.098\pm0.005\) & \(0.185\pm0.004\) & \(0.011\pm0.016\) \\
\(\vphantom{\bm{\mathcal{RUN}}}\bm{\mathcal{RUN}}\,\bm{3}\)  & \(-0.028\pm0.016\) & \(-0.089\pm0.005\) & \(0.185\pm0.004\) & \(-0.003\pm0.016\) \\
\(\vphantom{\bm{\mathcal{RUN}}}\bm{\mathcal{RUN}}\,\bm{4}\)  & \(-0.032\pm0.016\) & \(-0.092\pm0.005\) & \(0.190\pm0.004\) & \(0.029\pm0.016\) \\
\(\vphantom{\bm{\mathcal{RUN}}}\bm{\mathcal{RUN}}\,\bm{5}\)  & \(-0.073\pm0.016\) & \(-0.092\pm0.005\) & \(0.198\pm0.004\) & \(0.046\pm0.016\) \\
\(\vphantom{\bm{\mathcal{RUN}}}\bm{\mathcal{RUN}}\,\bm{6}\)  & \(-0.052\pm0.016\) & \(-0.092\pm0.005\) & \(0.191\pm0.004\) & \(0.027\pm0.016\) \\
\(\vphantom{\bm{\mathcal{RUN}}}\bm{\mathcal{RUN}}\,\bm{7}\)  & \(-0.014\pm0.016\) & \(-0.085\pm0.005\) & \(0.189\pm0.004\) & \(0.004\pm0.016\) \\
\(\vphantom{\bm{\mathcal{RUN}}}\bm{\mathcal{RUN}}\,\bm{8}\)  & \(-0.026\pm0.016\) & \(-0.087\pm0.005\) & \(0.188\pm0.004\) & \(0.016\pm0.016\) \\
\(\vphantom{\bm{\mathcal{RUN}}}\bm{\mathcal{RUN}}\,\bm{9}\)  & \(-0.062\pm0.016\) & \(-0.094\pm0.005\) & \(0.189\pm0.004\) & \(0.033\pm0.016\) \\
\(\vphantom{\bm{\mathcal{RUN}}}\bm{\mathcal{RUN}}\,\bm{10}\) & \(-0.059\pm0.016\) & \(-0.098\pm0.005\) & \(0.191\pm0.004\) & \(0.046\pm0.016\) \\
\(\vphantom{\bm{\mathcal{RUN}}}\bm{\mathcal{RUN}}\,\bm{11}\) & \(-0.032\pm0.016\) & \(-0.095\pm0.005\) & \(0.189\pm0.004\) & \(0.030\pm0.016\) \\
\(\vphantom{\bm{\mathcal{RUN}}}\bm{\mathcal{RUN}}\,\bm{12}\) & \(-0.046\pm0.016\) & \(-0.094\pm0.005\) & \(0.182\pm0.004\) & \(0.003\pm0.016\) \\
\(\vphantom{\bm{\mathcal{RUN}}}\bm{\mathcal{RUN}}\,\bm{13}\) & \(0.005\pm0.016\) & \(-0.074\pm0.005\) & \(0.195\pm0.005\) & \(0.006\pm0.016\) \\
\(\vphantom{\bm{\mathcal{RUN}}}\bm{\mathcal{RUN}}\,\bm{14}\) & \(-0.078\pm0.016\) & \(-0.088\pm0.005\) & \(0.198\pm0.004\) & \(0.044\pm0.016\) \\
\(\vphantom{\bm{\mathcal{RUN}}}\bm{\mathcal{RUN}}\,\bm{15}\) & \(-0.020\pm0.016\) & \(-0.078\pm0.005\) & \(0.192\pm0.005\) & \(0.010\pm0.016\) \\
\hline
\(\bar{m}\) & \(-0.041\pm0.004\) & \(-0.090\pm0.001\) & \(0.190\pm0.001\) & \(0.021\pm0.004\) \\
\(S\) & \(-1.242\pm0.122\) & \(-2.695\pm0.039\) & \(68.387\pm0.400\) & \(7.570\pm1.468\) \\
\hline
\(I\) & \multicolumn{4}{c}{\(-19.772\pm0.139\)} \\
\(R\) & \multicolumn{4}{c}{\(1.098\pm0.008\)} \\
\end{tabular}
\end{ruledtabular}
\end{table}

\begin{table}[t]
\caption{
Experimental Results of inequalities $\mathcal{I}_1$ sets in $\ket{D_6^{\,3}}$ at $F=0.933\pm0.004$ in coordinate $P_{5}$. \(m_{00}\), \(m_{11}\), \(m_{0011}\), and \(m_{1111}\) are extracted from the six-body expectation of 15 operators according to Eq.\eqref{eq:run_marginal_correlators}---Eq.\eqref{eq:S_m_relation}. Under experimental configuration with parametric photon no filtering, 3000 mW pump power, ppKTP crystal temperature stabilized at 28$^{\circ}$C. Each measurement basis accumulated approximately 4000 events over a 600-second integration time. Error bars are estimated using Poisson Monte Carlo parametric bootstrapping of the coincidence counts.}
\label{tab:PI_GAO_P5}
\setlength{\tabcolsep}{2.5pt}
\renewcommand{\arraystretch}{0.85}
\begin{ruledtabular}
\small
\begin{tabular}{crrrr}
\multicolumn{1}{c}{\(\vphantom{\bm{\mathcal{RUN}}}\)Run}
& \multicolumn{1}{c}{\(m_{00}\)}
& \multicolumn{1}{c}{\(m_{11}\)}
& \multicolumn{1}{c}{\(m_{0011}\)}
& \multicolumn{1}{c}{\(m_{1111}\)} \\
\hline
\(\vphantom{\bm{\mathcal{RUN}}}\bm{\mathcal{RUN}}\,\bm{1}\)  & \(-0.185\pm0.016\) & \(-0.125\pm0.005\) & \(0.172\pm0.004\) & \(0.079\pm0.016\) \\
\(\vphantom{\bm{\mathcal{RUN}}}\bm{\mathcal{RUN}}\,\bm{2}\)  & \(-0.209\pm0.016\) & \(-0.129\pm0.005\) & \(0.167\pm0.004\) & \(0.076\pm0.016\) \\
\(\vphantom{\bm{\mathcal{RUN}}}\bm{\mathcal{RUN}}\,\bm{3}\)  & \(-0.195\pm0.016\) & \(-0.119\pm0.005\) & \(0.169\pm0.004\) & \(0.062\pm0.016\) \\
\(\vphantom{\bm{\mathcal{RUN}}}\bm{\mathcal{RUN}}\,\bm{4}\)  & \(-0.205\pm0.016\) & \(-0.124\pm0.005\) & \(0.176\pm0.004\) & \(0.080\pm0.016\) \\
\(\vphantom{\bm{\mathcal{RUN}}}\bm{\mathcal{RUN}}\,\bm{5}\)  & \(-0.207\pm0.016\) & \(-0.129\pm0.005\) & \(0.170\pm0.004\) & \(0.087\pm0.016\) \\
\(\vphantom{\bm{\mathcal{RUN}}}\bm{\mathcal{RUN}}\,\bm{6}\)  & \(-0.192\pm0.016\) & \(-0.123\pm0.005\) & \(0.170\pm0.004\) & \(0.076\pm0.016\) \\
\(\vphantom{\bm{\mathcal{RUN}}}\bm{\mathcal{RUN}}\,\bm{7}\)  & \(-0.180\pm0.016\) & \(-0.130\pm0.004\) & \(0.162\pm0.004\) & \(0.059\pm0.016\) \\
\(\vphantom{\bm{\mathcal{RUN}}}\bm{\mathcal{RUN}}\,\bm{8}\)  & \(-0.211\pm0.016\) & \(-0.124\pm0.005\) & \(0.175\pm0.004\) & \(0.065\pm0.016\) \\
\(\vphantom{\bm{\mathcal{RUN}}}\bm{\mathcal{RUN}}\,\bm{9}\)  & \(-0.205\pm0.016\) & \(-0.130\pm0.005\) & \(0.170\pm0.004\) & \(0.097\pm0.016\) \\
\(\vphantom{\bm{\mathcal{RUN}}}\bm{\mathcal{RUN}}\,\bm{10}\) & \(-0.198\pm0.016\) & \(-0.133\pm0.005\) & \(0.168\pm0.004\) & \(0.105\pm0.016\) \\
\(\vphantom{\bm{\mathcal{RUN}}}\bm{\mathcal{RUN}}\,\bm{11}\) & \(-0.170\pm0.016\) & \(-0.121\pm0.005\) & \(0.170\pm0.004\) & \(0.066\pm0.016\) \\
\(\vphantom{\bm{\mathcal{RUN}}}\bm{\mathcal{RUN}}\,\bm{12}\) & \(-0.202\pm0.016\) & \(-0.128\pm0.004\) & \(0.169\pm0.004\) & \(0.071\pm0.016\) \\
\(\vphantom{\bm{\mathcal{RUN}}}\bm{\mathcal{RUN}}\,\bm{13}\) & \(-0.190\pm0.016\) & \(-0.126\pm0.005\) & \(0.169\pm0.004\) & \(0.087\pm0.016\) \\
\(\vphantom{\bm{\mathcal{RUN}}}\bm{\mathcal{RUN}}\,\bm{14}\) & \(-0.206\pm0.016\) & \(-0.132\pm0.004\) & \(0.171\pm0.004\) & \(0.076\pm0.016\) \\
\(\vphantom{\bm{\mathcal{RUN}}}\bm{\mathcal{RUN}}\,\bm{15}\) & \(-0.191\pm0.016\) & \(-0.121\pm0.005\) & \(0.173\pm0.004\) & \(0.067\pm0.016\) \\
\hline
\(\bar{m}\) & \(-0.196\pm0.004\) & \(-0.126\pm0.001\) & \(0.170\pm0.001\) & \(0.077\pm0.004\) \\
\(S\) & \(-5.893\pm0.122\) & \(-3.784\pm0.035\) & \(61.234\pm0.384\) & \(27.649\pm1.484\) \\
\hline
\(I\) & \multicolumn{4}{c}{\(-20.377\pm0.170\)} \\
\(R\) & \multicolumn{4}{c}{\(1.132\pm0.009\)} \\
\end{tabular}
\end{ruledtabular}
\end{table}

\begin{table}[t]
\caption{
Experimental Results of inequalities $\mathcal{I}_1$ sets in $\ket{D_6^{\,3}}$ at $F=0.933\pm0.004$ in coordinate $P_{6}$. \(m_{00}\), \(m_{11}\), \(m_{0011}\), and \(m_{1111}\) are extracted from the six-body expectation of 15 operators according to Eq.\eqref{eq:run_marginal_correlators}---Eq.\eqref{eq:S_m_relation}. Under experimental configuration with parametric photon no filtering, 3000 mW pump power, ppKTP crystal temperature stabilized at 28$^{\circ}$C. Each measurement basis accumulated approximately 4000 events over a 600-second integration time. Error bars are estimated using Poisson Monte Carlo parametric bootstrapping of the coincidence counts.}
\label{tab:PI_GAO_P6}
\setlength{\tabcolsep}{2.5pt}
\renewcommand{\arraystretch}{0.85}
\begin{ruledtabular}
\small
\begin{tabular}{crrrr}
\multicolumn{1}{c}{\(\vphantom{\bm{\mathcal{RUN}}}\)Run}
& \multicolumn{1}{c}{\(m_{00}\)}
& \multicolumn{1}{c}{\(m_{11}\)}
& \multicolumn{1}{c}{\(m_{0011}\)}
& \multicolumn{1}{c}{\(m_{1111}\)} \\
\hline
\(\vphantom{\bm{\mathcal{RUN}}}\bm{\mathcal{RUN}}\,\bm{1}\)  & \(0.181\pm0.015\) & \(0.165\pm0.007\) & \(0.270\pm0.007\) & \(-0.107\pm0.015\) \\
\(\vphantom{\bm{\mathcal{RUN}}}\bm{\mathcal{RUN}}\,\bm{2}\)  & \(0.190\pm0.015\) & \(0.163\pm0.007\) & \(0.270\pm0.007\) & \(-0.092\pm0.015\) \\
\(\vphantom{\bm{\mathcal{RUN}}}\bm{\mathcal{RUN}}\,\bm{3}\)  & \(0.203\pm0.015\) & \(0.172\pm0.008\) & \(0.274\pm0.007\) & \(-0.096\pm0.015\) \\
\(\vphantom{\bm{\mathcal{RUN}}}\bm{\mathcal{RUN}}\,\bm{4}\)  & \(0.215\pm0.015\) & \(0.185\pm0.008\) & \(0.289\pm0.007\) & \(-0.078\pm0.015\) \\
\(\vphantom{\bm{\mathcal{RUN}}}\bm{\mathcal{RUN}}\,\bm{5}\)  & \(0.176\pm0.016\) & \(0.169\pm0.008\) & \(0.279\pm0.007\) & \(-0.076\pm0.016\) \\
\(\vphantom{\bm{\mathcal{RUN}}}\bm{\mathcal{RUN}}\,\bm{6}\)  & \(0.206\pm0.015\) & \(0.179\pm0.008\) & \(0.271\pm0.007\) & \(-0.104\pm0.016\) \\
\(\vphantom{\bm{\mathcal{RUN}}}\bm{\mathcal{RUN}}\,\bm{7}\)  & \(0.187\pm0.015\) & \(0.164\pm0.008\) & \(0.269\pm0.007\) & \(-0.091\pm0.016\) \\
\(\vphantom{\bm{\mathcal{RUN}}}\bm{\mathcal{RUN}}\,\bm{8}\)  & \(0.229\pm0.016\) & \(0.169\pm0.008\) & \(0.273\pm0.007\) & \(-0.116\pm0.016\) \\
\(\vphantom{\bm{\mathcal{RUN}}}\bm{\mathcal{RUN}}\,\bm{9}\)  & \(0.202\pm0.015\) & \(0.157\pm0.008\) & \(0.257\pm0.007\) & \(-0.123\pm0.016\) \\
\(\vphantom{\bm{\mathcal{RUN}}}\bm{\mathcal{RUN}}\,\bm{10}\) & \(0.171\pm0.016\) & \(0.157\pm0.008\) & \(0.261\pm0.007\) & \(-0.091\pm0.016\) \\
\(\vphantom{\bm{\mathcal{RUN}}}\bm{\mathcal{RUN}}\,\bm{11}\) & \(0.203\pm0.015\) & \(0.174\pm0.008\) & \(0.290\pm0.007\) & \(-0.073\pm0.016\) \\
\(\vphantom{\bm{\mathcal{RUN}}}\bm{\mathcal{RUN}}\,\bm{12}\) & \(0.193\pm0.016\) & \(0.152\pm0.008\) & \(0.259\pm0.007\) & \(-0.116\pm0.016\) \\
\(\vphantom{\bm{\mathcal{RUN}}}\bm{\mathcal{RUN}}\,\bm{13}\) & \(0.221\pm0.015\) & \(0.177\pm0.008\) & \(0.282\pm0.007\) & \(-0.096\pm0.016\) \\
\(\vphantom{\bm{\mathcal{RUN}}}\bm{\mathcal{RUN}}\,\bm{14}\) & \(0.193\pm0.015\) & \(0.166\pm0.008\) & \(0.271\pm0.007\) & \(-0.091\pm0.015\) \\
\(\vphantom{\bm{\mathcal{RUN}}}\bm{\mathcal{RUN}}\,\bm{15}\) & \(0.206\pm0.015\) & \(0.171\pm0.008\) & \(0.284\pm0.007\) & \(-0.084\pm0.016\) \\
\hline
\(\bar{m}\) & \(0.198\pm0.004\) & \(0.168\pm0.002\) & \(0.273\pm0.002\) & \(-0.096\pm0.004\) \\
\(S\) & \(5.950\pm0.119\) & \(5.040\pm0.059\) & \(98.365\pm0.629\) & \(-34.385\pm1.445\) \\
\hline
\(I\) & \multicolumn{4}{c}{\(-19.332\pm0.195\)} \\
\(R\) & \multicolumn{4}{c}{\(1.074\pm0.011\)} \\
\end{tabular}
\end{ruledtabular}
\end{table}

\begin{table}[t]
\caption{
Experimental Results of  inequalities $\mathcal{I}_2$ sets in $\ket{D_6^{\,3}}$ at $F=0.933\pm0.004$. \(m_{00}\), \(m_{11}\), \(m_{0011}\), and \(m_{1111}\) are extracted from the six-body expectation of 15 operators according to Eq.\eqref{eq:run_marginal_correlators}---Eq.\eqref{eq:S_m_relation}. Under experimental configuration with parametric photon no filtering, 3000 mW pump power, ppKTP crystal temperature stabilized at \(28^\circ\mathrm{C}\). Each measurement basis accumulated approximately 4000 events over a 600-second integration time. Error bars are estimated using Poisson Monte Carlo parametric bootstrapping of the coincidence counts.}
\label{tab:PI_GAO_1}
\setlength{\tabcolsep}{2.5pt}
\renewcommand{\arraystretch}{0.85}
\begin{ruledtabular}
\small
\begin{tabular}{crrrr}
\multicolumn{1}{c}{\(\vphantom{\bm{\mathcal{RUN}}}\)Run}
& \multicolumn{1}{c}{\(m_{00}\)}
& \multicolumn{1}{c}{\(m_{11}\)}
& \multicolumn{1}{c}{\(m_{0011}\)}
& \multicolumn{1}{c}{\(m_{1111}\)} \\
\hline
\(\vphantom{\bm{\mathcal{RUN}}}\bm{\mathcal{RUN}}\,\bm{1}\)  & \(-0.107\pm0.015\) & \(0.159\pm0.007\) & \(0.166\pm0.007\) & \(-0.096\pm0.015\) \\
\(\vphantom{\bm{\mathcal{RUN}}}\bm{\mathcal{RUN}}\,\bm{2}\)  & \(-0.103\pm0.015\) & \(0.158\pm0.007\) & \(0.150\pm0.007\) & \(-0.107\pm0.015\) \\
\(\vphantom{\bm{\mathcal{RUN}}}\bm{\mathcal{RUN}}\,\bm{3}\)  & \(-0.118\pm0.015\) & \(0.147\pm0.007\) & \(0.143\pm0.007\) & \(-0.127\pm0.015\) \\
\(\vphantom{\bm{\mathcal{RUN}}}\bm{\mathcal{RUN}}\,\bm{4}\)  & \(-0.093\pm0.015\) & \(0.160\pm0.007\) & \(0.151\pm0.007\) & \(-0.128\pm0.015\) \\
\(\vphantom{\bm{\mathcal{RUN}}}\bm{\mathcal{RUN}}\,\bm{5}\)  & \(-0.104\pm0.016\) & \(0.154\pm0.008\) & \(0.148\pm0.008\) & \(-0.105\pm0.016\) \\
\(\vphantom{\bm{\mathcal{RUN}}}\bm{\mathcal{RUN}}\,\bm{6}\)  & \(-0.101\pm0.016\) & \(0.153\pm0.007\) & \(0.148\pm0.007\) & \(-0.135\pm0.015\) \\
\(\vphantom{\bm{\mathcal{RUN}}}\bm{\mathcal{RUN}}\,\bm{7}\)  & \(-0.093\pm0.015\) & \(0.152\pm0.007\) & \(0.142\pm0.007\) & \(-0.107\pm0.015\) \\
\(\vphantom{\bm{\mathcal{RUN}}}\bm{\mathcal{RUN}}\,\bm{8}\)  & \(-0.100\pm0.016\) & \(0.169\pm0.008\) & \(0.165\pm0.008\) & \(-0.086\pm0.016\) \\
\(\vphantom{\bm{\mathcal{RUN}}}\bm{\mathcal{RUN}}\,\bm{9}\)  & \(-0.101\pm0.015\) & \(0.155\pm0.007\) & \(0.152\pm0.007\) & \(-0.102\pm0.015\) \\
\(\vphantom{\bm{\mathcal{RUN}}}\bm{\mathcal{RUN}}\,\bm{10}\) & \(-0.104\pm0.015\) & \(0.146\pm0.007\) & \(0.147\pm0.007\) & \(-0.114\pm0.015\) \\
\(\vphantom{\bm{\mathcal{RUN}}}\bm{\mathcal{RUN}}\,\bm{11}\) & \(-0.078\pm0.016\) & \(0.159\pm0.008\) & \(0.163\pm0.008\) & \(-0.074\pm0.016\) \\
\(\vphantom{\bm{\mathcal{RUN}}}\bm{\mathcal{RUN}}\,\bm{12}\) & \(-0.089\pm0.015\) & \(0.149\pm0.007\) & \(0.153\pm0.007\) & \(-0.088\pm0.015\) \\
\(\vphantom{\bm{\mathcal{RUN}}}\bm{\mathcal{RUN}}\,\bm{13}\) & \(-0.105\pm0.015\) & \(0.156\pm0.008\) & \(0.154\pm0.008\) & \(-0.095\pm0.016\) \\
\(\vphantom{\bm{\mathcal{RUN}}}\bm{\mathcal{RUN}}\,\bm{14}\) & \(-0.131\pm0.015\) & \(0.145\pm0.007\) & \(0.137\pm0.007\) & \(-0.099\pm0.015\) \\
\(\vphantom{\bm{\mathcal{RUN}}}\bm{\mathcal{RUN}}\,\bm{15}\) & \(-0.105\pm0.016\) & \(0.152\pm0.007\) & \(0.152\pm0.007\) & \(-0.127\pm0.016\) \\
\hline
\(\bar{m}\) & \(-0.102\pm0.004\) & \(0.154\pm0.002\) & \(0.151\pm0.002\) & \(-0.106\pm0.004\) \\
\(S\)      & \(-3.062\pm0.119\) & \(4.626\pm0.057\) & \(54.503\pm0.689\) & \(-38.165\pm1.431\) \\
\hline
\(I\)      & \multicolumn{4}{c}{\(-146.578\pm1.675\)} \\
\(R\)      & \multicolumn{4}{c}{\(1.086\pm0.012\)} \\
\end{tabular}
\end{ruledtabular}
\end{table}

\begin{table}[t]
\caption{Experimental Results of inequalities $\mathcal{I}_T$ sets in $\ket{D_6^{\,3}}$ at $F=0.933\pm0.004$. \(m_{00}\), \(m_{11}\), and \(m_{01}\) are extracted from the six-body expectation of 15 operators according to Eq.\eqref{eq:run_marginal_correlators}---Eq.\eqref{eq:S_m_relation}. Under experimental configuration with parametric photon no filtering, 3000 mW pump power, ppKTP crystal temperature stabilized at \(28^\circ\mathrm{C}\). Each measurement basis accumulated approximately 4000 events over a 600-second integration time. Error bars are estimated using Poisson Monte Carlo parametric bootstrapping of the coincidence counts.}
\label{tab:PI_GAO_Tura}
\setlength{\tabcolsep}{2.5pt}
\renewcommand{\arraystretch}{0.85}
\begin{ruledtabular}
\small
\begin{tabular}{crrr}
\multicolumn{1}{c}{\(\vphantom{\bm{\mathcal{RUN}}}\)Run}
& \multicolumn{1}{c}{\(m_{00}\)}
& \multicolumn{1}{c}{\(m_{11}\)}
& \multicolumn{1}{c}{\(m_{01}\)} \\
\hline
\(\vphantom{\bm{\mathcal{RUN}}}\bm{\mathcal{RUN}}\,\bm{1}\)  & \(-0.163\pm0.016\) & \(0.307\pm0.009\) & \(-0.180\pm0.007\) \\
\(\vphantom{\bm{\mathcal{RUN}}}\bm{\mathcal{RUN}}\,\bm{2}\)  & \(-0.182\pm0.016\) & \(0.331\pm0.009\) & \(-0.150\pm0.007\) \\
\(\vphantom{\bm{\mathcal{RUN}}}\bm{\mathcal{RUN}}\,\bm{3}\)  & \(-0.153\pm0.015\) & \(0.308\pm0.008\) & \(-0.198\pm0.007\) \\
\(\vphantom{\bm{\mathcal{RUN}}}\bm{\mathcal{RUN}}\,\bm{4}\)  & \(-0.158\pm0.016\) & \(0.312\pm0.008\) & \(-0.190\pm0.007\) \\
\(\vphantom{\bm{\mathcal{RUN}}}\bm{\mathcal{RUN}}\,\bm{5}\)  & \(-0.183\pm0.016\) & \(0.332\pm0.008\) & \(-0.195\pm0.007\) \\
\(\vphantom{\bm{\mathcal{RUN}}}\bm{\mathcal{RUN}}\,\bm{6}\)  & \(-0.167\pm0.016\) & \(0.322\pm0.008\) & \(-0.192\pm0.007\) \\
\(\vphantom{\bm{\mathcal{RUN}}}\bm{\mathcal{RUN}}\,\bm{7}\)  & \(-0.161\pm0.015\) & \(0.320\pm0.008\) & \(-0.207\pm0.007\) \\
\(\vphantom{\bm{\mathcal{RUN}}}\bm{\mathcal{RUN}}\,\bm{8}\)  & \(-0.176\pm0.016\) & \(0.325\pm0.008\) & \(-0.181\pm0.007\) \\
\(\vphantom{\bm{\mathcal{RUN}}}\bm{\mathcal{RUN}}\,\bm{9}\)  & \(-0.161\pm0.016\) & \(0.333\pm0.009\) & \(-0.193\pm0.007\) \\
\(\vphantom{\bm{\mathcal{RUN}}}\bm{\mathcal{RUN}}\,\bm{10}\) & \(-0.180\pm0.015\) & \(0.331\pm0.008\) & \(-0.199\pm0.007\) \\
\(\vphantom{\bm{\mathcal{RUN}}}\bm{\mathcal{RUN}}\,\bm{11}\) & \(-0.182\pm0.016\) & \(0.327\pm0.008\) & \(-0.186\pm0.007\) \\
\(\vphantom{\bm{\mathcal{RUN}}}\bm{\mathcal{RUN}}\,\bm{12}\) & \(-0.172\pm0.015\) & \(0.353\pm0.008\) & \(-0.184\pm0.007\) \\
\(\vphantom{\bm{\mathcal{RUN}}}\bm{\mathcal{RUN}}\,\bm{13}\) & \(-0.172\pm0.015\) & \(0.335\pm0.008\) & \(-0.198\pm0.007\) \\
\(\vphantom{\bm{\mathcal{RUN}}}\bm{\mathcal{RUN}}\,\bm{14}\) & \(-0.161\pm0.015\) & \(0.343\pm0.008\) & \(-0.185\pm0.007\) \\
\(\vphantom{\bm{\mathcal{RUN}}}\bm{\mathcal{RUN}}\,\bm{15}\) & \(-0.166\pm0.015\) & \(0.333\pm0.008\) & \(-0.182\pm0.007\) \\
\hline
\(\bar{m}\) & \(-0.169\pm0.004\) & \(0.328\pm0.002\) & \(-0.188\pm0.002\) \\
\(S\)      & \(-5.071\pm0.120\) & \(9.826\pm0.065\) & \(-5.642\pm0.054\) \\
\hline
\(I\)      & \multicolumn{3}{c}{\(-119.743\pm1.656\)} \\
\(R\)      & \multicolumn{3}{c}{\(0.998\pm0.014\)} \\
\end{tabular}
\end{ruledtabular}
\end{table}

\begin{table}[t]
\caption{Experimental Results of the fidelity of six-photon Dicke state at $F=0.797\pm0.005$. \(C_2(A)\), \(C_4(A)\), \(C_6(A)\) are extracted from the six-body expectation of 21 operators according to Eq.(\ref{eq:D6_c}). Under experimental configuration with parametric photon no filtering, 3000 mW pump power, ppKTP crystal temperature stabilized at \(35^\circ\mathrm{C}\) and walk-off compensation (KDP) at \(2.5^\circ\) offset. Each measurement basis accumulated approximately 4000 events over a 600-second integration time. Error bars are estimated using Poisson Monte Carlo parametric bootstrapping of the coincidence counts.}
\label{tab:F_D6_ZHONG}
\setlength{\tabcolsep}{2.5pt}
\renewcommand{\arraystretch}{0.85}
\begin{ruledtabular}
\footnotesize
\begin{tabular}{crrr}
\multicolumn{1}{c}{Measurement setting \(A\)}
& \multicolumn{1}{c}{\(C_2(A)\)}
& \multicolumn{1}{c}{\(C_4(A)\)}
& \multicolumn{1}{c}{\(C_6(A)\)} \\
\hline
\(X\) & \(0.618\pm0.007\) & \(0.574\pm0.008\) & \(0.753\pm0.010\) \\
\(Y\) & \(0.600\pm0.008\) & \(0.555\pm0.008\) & \(0.729\pm0.010\) \\
\(Z\) & \(-0.170\pm0.001\) & \(0.143\pm0.002\) & \(-0.599\pm0.012\) \\
\(X+Y\) & \(0.604\pm0.008\) & \(0.585\pm0.008\) & \(0.868\pm0.008\) \\
\(X-Y\) & \(0.589\pm0.008\) & \(0.534\pm0.009\) & \(0.659\pm0.012\) \\
\(Y+Z\) & \(0.154\pm0.005\) & \(-0.096\pm0.005\) & \(0.140\pm0.016\) \\
\(Y-Z\) & \(0.267\pm0.006\) & \(-0.028\pm0.007\) & \(-0.198\pm0.015\) \\
\(X+Z\) & \(0.282\pm0.006\) & \(-0.026\pm0.007\) & \(-0.244\pm0.015\) \\
\(X-Z\) & \(0.115\pm0.004\) & \(-0.106\pm0.004\) & \(0.221\pm0.015\) \\
\(X+Y+Z\) & \(0.342\pm0.006\) & \(0.048\pm0.008\) & \(-0.364\pm0.014\) \\
\(X+Y-Z\) & \(0.324\pm0.006\) & \(0.028\pm0.008\) & \(-0.320\pm0.014\) \\
\(X-Y+Z\) & \(0.431\pm0.007\) & \(0.169\pm0.009\) & \(-0.286\pm0.015\) \\
\(X-Y-Z\) & \(0.224\pm0.005\) & \(-0.070\pm0.006\) & \(-0.097\pm0.016\) \\
\(\cos(\pi/6)X+\sin(\pi/6)Z\) & \(0.481\pm0.007\) & \(0.279\pm0.009\) & \(-0.115\pm0.015\) \\
\(\cos(2\pi/6)X+\sin(2\pi/6)Z\) & \(0.094\pm0.004\) & \(-0.099\pm0.004\) & \(0.287\pm0.015\) \\
\(\cos(4\pi/6)X+\sin(4\pi/6)Z\) & \(-0.064\pm0.003\) & \(-0.003\pm0.003\) & \(0.193\pm0.016\) \\
\(\cos(5\pi/6)X+\sin(5\pi/6)Z\) & \(0.326\pm0.006\) & \(0.024\pm0.008\) & \(-0.302\pm0.015\) \\
\(\cos(\pi/6)Y+\sin(\pi/6)Z\) & \(0.345\pm0.006\) & \(0.064\pm0.008\) & \(-0.309\pm0.015\) \\
\(\cos(2\pi/6)Y+\sin(2\pi/6)Z\) & \(-0.045\pm0.003\) & \(-0.027\pm0.003\) & \(0.281\pm0.015\) \\
\(\cos(4\pi/6)Y+\sin(4\pi/6)Z\) & \(0.060\pm0.004\) & \(-0.101\pm0.004\) & \(0.325\pm0.015\) \\
\(\cos(5\pi/6)Y+\sin(5\pi/6)Z\) & \(0.464\pm0.007\) & \(0.243\pm0.009\) & \(-0.172\pm0.015\) \\
\hline
\multicolumn{1}{c}{Fidelity of \(\ket{D_6^{\,3}}\)}
& \multicolumn{3}{c}{\(0.797\pm0.005\)} \\
\end{tabular}
\end{ruledtabular}
\end{table}

\begin{table}[t]
\caption{Experimental Results of  inequalities $\mathcal{I}_1$ sets in $\ket{D_6^{\,3}}$ at $F=0.797\pm0.005$. \(m_{00}\), \(m_{11}\), \(m_{0011}\), and \(m_{1111}\) are extracted from the six-body expectation of 15 operators according to Eq.\eqref{eq:run_marginal_correlators}---Eq.\eqref{eq:S_m_relation}. Under experimental configuration with parametric photon no filtering, 3000 mW pump power, ppKTP crystal temperature stabilized at \(35^\circ\mathrm{C}\) and walk-off compensation (KDP) at \(2.5^\circ\) offset. Each measurement basis accumulated approximately 4000 events over a 600-second integration time. Error bars are estimated using Poisson Monte Carlo parametric bootstrapping of the coincidence counts.}
\label{tab:PI_ZHONG_0}
\setlength{\tabcolsep}{2.5pt}
\renewcommand{\arraystretch}{0.85}
\begin{ruledtabular}
\small
\begin{tabular}{crrrr}
\multicolumn{1}{c}{\(\vphantom{\bm{\mathcal{RUN}}}\)Run}
& \multicolumn{1}{c}{\(m_{00}\)}
& \multicolumn{1}{c}{\(m_{11}\)}
& \multicolumn{1}{c}{\(m_{0011}\)}
& \multicolumn{1}{c}{\(m_{1111}\)} \\
\hline
\(\vphantom{\bm{\mathcal{RUN}}}\bm{\mathcal{RUN}}\,\bm{1}\)  & \(0.029\pm0.016\) & \(-0.036\pm0.006\) & \(0.189\pm0.005\) & \(-0.042\pm0.016\) \\
\(\vphantom{\bm{\mathcal{RUN}}}\bm{\mathcal{RUN}}\,\bm{2}\)  & \(0.007\pm0.016\) & \(-0.046\pm0.006\) & \(0.179\pm0.005\) & \(-0.058\pm0.016\) \\
\(\vphantom{\bm{\mathcal{RUN}}}\bm{\mathcal{RUN}}\,\bm{3}\)  & \(0.024\pm0.016\) & \(-0.044\pm0.006\) & \(0.187\pm0.005\) & \(-0.035\pm0.016\) \\
\(\vphantom{\bm{\mathcal{RUN}}}\bm{\mathcal{RUN}}\,\bm{4}\)  & \(0.034\pm0.016\) & \(-0.032\pm0.006\) & \(0.198\pm0.005\) & \(-0.041\pm0.016\) \\
\(\vphantom{\bm{\mathcal{RUN}}}\bm{\mathcal{RUN}}\,\bm{5}\)  & \(0.019\pm0.016\) & \(-0.037\pm0.006\) & \(0.191\pm0.005\) & \(-0.021\pm0.016\) \\
\(\vphantom{\bm{\mathcal{RUN}}}\bm{\mathcal{RUN}}\,\bm{6}\)  & \(0.027\pm0.016\) & \(-0.040\pm0.006\) & \(0.188\pm0.005\) & \(-0.029\pm0.016\) \\
\(\vphantom{\bm{\mathcal{RUN}}}\bm{\mathcal{RUN}}\,\bm{7}\)  & \(0.022\pm0.016\) & \(-0.033\pm0.006\) & \(0.186\pm0.005\) & \(-0.035\pm0.016\) \\
\(\vphantom{\bm{\mathcal{RUN}}}\bm{\mathcal{RUN}}\,\bm{8}\)  & \(0.036\pm0.016\) & \(-0.036\pm0.006\) & \(0.193\pm0.005\) & \(-0.035\pm0.016\) \\
\(\vphantom{\bm{\mathcal{RUN}}}\bm{\mathcal{RUN}}\,\bm{9}\)  & \(0.037\pm0.016\) & \(-0.023\pm0.006\) & \(0.195\pm0.005\) & \(-0.025\pm0.016\) \\
\(\vphantom{\bm{\mathcal{RUN}}}\bm{\mathcal{RUN}}\,\bm{10}\) & \(0.021\pm0.016\) & \(-0.040\pm0.006\) & \(0.189\pm0.005\) & \(-0.068\pm0.016\) \\
\(\vphantom{\bm{\mathcal{RUN}}}\bm{\mathcal{RUN}}\,\bm{11}\) & \(0.026\pm0.016\) & \(-0.038\pm0.006\) & \(0.188\pm0.005\) & \(-0.032\pm0.016\) \\
\(\vphantom{\bm{\mathcal{RUN}}}\bm{\mathcal{RUN}}\,\bm{12}\) & \(0.008\pm0.016\) & \(-0.031\pm0.006\) & \(0.191\pm0.005\) & \(-0.050\pm0.016\) \\
\(\vphantom{\bm{\mathcal{RUN}}}\bm{\mathcal{RUN}}\,\bm{13}\) & \(0.069\pm0.016\) & \(-0.009\pm0.006\) & \(0.197\pm0.005\) & \(-0.070\pm0.016\) \\
\(\vphantom{\bm{\mathcal{RUN}}}\bm{\mathcal{RUN}}\,\bm{14}\) & \(-0.021\pm0.016\) & \(-0.039\pm0.006\) & \(0.197\pm0.005\) & \(-0.019\pm0.016\) \\
\(\vphantom{\bm{\mathcal{RUN}}}\bm{\mathcal{RUN}}\,\bm{15}\) & \(0.036\pm0.016\) & \(-0.030\pm0.006\) & \(0.191\pm0.005\) & \(-0.060\pm0.016\) \\
\hline
\(\bar{m}\) & \(0.025\pm0.004\) & \(-0.034\pm0.002\) & \(0.190\pm0.001\) & \(-0.041\pm0.004\) \\
\(S\) & \(0.748\pm0.121\) & \(-1.027\pm0.045\) & \(68.551\pm0.461\) & \(-14.866\pm1.451\) \\
\hline
\(I\) & \multicolumn{4}{c}{\(-19.895\pm0.164\)} \\
\(R\) & \multicolumn{4}{c}{\(1.105\pm0.009\)} \\
\end{tabular}
\end{ruledtabular}
\end{table}

\begin{table}[t]
\caption{Experimental Results of  inequalities $\mathcal{I}_2$ sets in $\ket{D_6^{\,3}}$ at $F=0.797\pm0.005$. \(m_{00}\), \(m_{11}\), \(m_{0011}\), and \(m_{1111}\) are extracted from the six-body expectation of 15 operators according to Eq.\eqref{eq:run_marginal_correlators}---Eq.\eqref{eq:S_m_relation}. Under experimental configuration with parametric photon no filtering, 3000 mW pump power, ppKTP crystal temperature stabilized at \(35^\circ\mathrm{C}\) and walk-off compensation (KDP) at \(2.5^\circ\) offset. Each measurement basis accumulated approximately 4000 events over a 600-second integration time. Error bars are estimated using Poisson Monte Carlo parametric bootstrapping of the coincidence counts.}
\label{tab:PI_ZHONG_1}
\setlength{\tabcolsep}{2.5pt}
\renewcommand{\arraystretch}{0.85}
\begin{ruledtabular}
\small
\begin{tabular}{crrrr}
\multicolumn{1}{c}{\(\vphantom{\bm{\mathcal{RUN}}}\)Run}
& \multicolumn{1}{c}{\(m_{00}\)}
& \multicolumn{1}{c}{\(m_{11}\)}
& \multicolumn{1}{c}{\(m_{0011}\)}
& \multicolumn{1}{c}{\(m_{1111}\)} \\
\hline
\(\vphantom{\bm{\mathcal{RUN}}}\bm{\mathcal{RUN}}\,\bm{1}\)  & \(-0.060\pm0.016\) & \(0.079\pm0.007\) & \(0.186\pm0.007\) & \(-0.072\pm0.016\) \\
\(\vphantom{\bm{\mathcal{RUN}}}\bm{\mathcal{RUN}}\,\bm{2}\)  & \(-0.023\pm0.016\) & \(0.070\pm0.007\) & \(0.166\pm0.007\) & \(-0.081\pm0.016\) \\
\(\vphantom{\bm{\mathcal{RUN}}}\bm{\mathcal{RUN}}\,\bm{3}\)  & \(-0.052\pm0.016\) & \(0.095\pm0.007\) & \(0.179\pm0.007\) & \(-0.092\pm0.016\) \\
\(\vphantom{\bm{\mathcal{RUN}}}\bm{\mathcal{RUN}}\,\bm{4}\)  & \(-0.039\pm0.016\) & \(0.105\pm0.007\) & \(0.193\pm0.007\) & \(-0.080\pm0.016\) \\
\(\vphantom{\bm{\mathcal{RUN}}}\bm{\mathcal{RUN}}\,\bm{5}\)  & \(-0.039\pm0.016\) & \(0.101\pm0.007\) & \(0.167\pm0.007\) & \(-0.131\pm0.016\) \\
\(\vphantom{\bm{\mathcal{RUN}}}\bm{\mathcal{RUN}}\,\bm{6}\)  & \(-0.042\pm0.016\) & \(0.098\pm0.007\) & \(0.172\pm0.007\) & \(-0.104\pm0.016\) \\
\(\vphantom{\bm{\mathcal{RUN}}}\bm{\mathcal{RUN}}\,\bm{7}\)  & \(-0.062\pm0.015\) & \(0.091\pm0.007\) & \(0.194\pm0.006\) & \(-0.097\pm0.015\) \\
\(\vphantom{\bm{\mathcal{RUN}}}\bm{\mathcal{RUN}}\,\bm{8}\)  & \(-0.025\pm0.016\) & \(0.101\pm0.007\) & \(0.188\pm0.007\) & \(-0.092\pm0.016\) \\
\(\vphantom{\bm{\mathcal{RUN}}}\bm{\mathcal{RUN}}\,\bm{9}\)  & \(-0.068\pm0.016\) & \(0.102\pm0.007\) & \(0.183\pm0.007\) & \(-0.095\pm0.016\) \\
\(\vphantom{\bm{\mathcal{RUN}}}\bm{\mathcal{RUN}}\,\bm{10}\) & \(-0.096\pm0.016\) & \(0.093\pm0.007\) & \(0.171\pm0.007\) & \(-0.126\pm0.016\) \\
\(\vphantom{\bm{\mathcal{RUN}}}\bm{\mathcal{RUN}}\,\bm{11}\) & \(-0.024\pm0.016\) & \(0.097\pm0.007\) & \(0.181\pm0.007\) & \(-0.114\pm0.016\) \\
\(\vphantom{\bm{\mathcal{RUN}}}\bm{\mathcal{RUN}}\,\bm{12}\) & \(-0.057\pm0.016\) & \(0.084\pm0.007\) & \(0.178\pm0.007\) & \(-0.107\pm0.016\) \\
\(\vphantom{\bm{\mathcal{RUN}}}\bm{\mathcal{RUN}}\,\bm{13}\) & \(-0.024\pm0.016\) & \(0.109\pm0.007\) & \(0.196\pm0.007\) & \(-0.077\pm0.016\) \\
\(\vphantom{\bm{\mathcal{RUN}}}\bm{\mathcal{RUN}}\,\bm{14}\) & \(-0.072\pm0.016\) & \(0.081\pm0.007\) & \(0.176\pm0.007\) & \(-0.109\pm0.016\) \\
\(\vphantom{\bm{\mathcal{RUN}}}\bm{\mathcal{RUN}}\,\bm{15}\) & \(-0.065\pm0.016\) & \(0.091\pm0.007\) & \(0.187\pm0.007\) & \(-0.095\pm0.016\) \\
\hline
\(\bar{m}\) & \(-0.050\pm0.004\) & \(0.093\pm0.002\) & \(0.181\pm0.002\) & \(-0.098\pm0.004\) \\
\(S\) & \(-1.497\pm0.121\) & \(2.792\pm0.055\) & \(65.221\pm0.617\) & \(-35.295\pm1.450\) \\
\hline
\(I\) & \multicolumn{4}{c}{\(-125.149\pm1.720\)} \\
\(R\) & \multicolumn{4}{c}{\(0.927\pm0.013\)} \\
\end{tabular}
\end{ruledtabular}
\end{table}

\begin{table}[t]
\caption{Experimental Results of the fidelity of six-photon Dicke state at $F=0.638\pm0.005$. \(C_2(A)\), \(C_4(A)\), \(C_6(A)\) are extracted from the six-body expectation of 21 operators according to Eq.(\ref{eq:D6_c}). Under experimental configuration with parametric photon no filtering, 3000 mW pump power, ppKTP crystal temperature stabilized at \(40^\circ\mathrm{C}\) and walk-off compensation (KDP) at \(6^\circ\) offset. Each measurement basis accumulated approximately 4000 events over a 600-second integration time. Error bars are estimated using Poisson Monte Carlo parametric bootstrapping of the coincidence counts.}
\label{tab:F_D6_DI}
\setlength{\tabcolsep}{2.5pt}
\renewcommand{\arraystretch}{0.85}
\begin{ruledtabular}
\footnotesize
\begin{tabular}{crrr}
\multicolumn{1}{c}{Measurement setting \(A\)}
& \multicolumn{1}{c}{\(C_2(A)\)}
& \multicolumn{1}{c}{\(C_4(A)\)}
& \multicolumn{1}{c}{\(C_6(A)\)} \\
\hline
\(X\) & \(0.565\pm0.008\) & \(0.497\pm0.009\) & \(0.583\pm0.013\) \\
\(Y\) & \(0.583\pm0.008\) & \(0.524\pm0.009\) & \(0.662\pm0.012\) \\
\(Z\) & \(-0.147\pm0.002\) & \(0.106\pm0.002\) & \(-0.357\pm0.015\) \\
\(X+Y\) & \(0.585\pm0.008\) & \(0.559\pm0.008\) & \(0.837\pm0.009\) \\
\(X-Y\) & \(0.553\pm0.008\) & \(0.470\pm0.009\) & \(0.451\pm0.014\) \\
\(Y+Z\) & \(0.117\pm0.005\) & \(-0.097\pm0.005\) & \(0.190\pm0.016\) \\
\(Y-Z\) & \(0.321\pm0.006\) & \(0.032\pm0.008\) & \(-0.271\pm0.015\) \\
\(X+Z\) & \(0.331\pm0.006\) & \(0.044\pm0.008\) & \(-0.301\pm0.015\) \\
\(X-Z\) & \(0.079\pm0.004\) & \(-0.092\pm0.004\) & \(0.262\pm0.015\) \\
\(X+Y+Z\) & \(0.364\pm0.006\) & \(0.087\pm0.008\) & \(-0.299\pm0.015\) \\
\(X+Y-Z\) & \(0.326\pm0.006\) & \(0.025\pm0.008\) & \(-0.303\pm0.015\) \\
\(X-Y+Z\) & \(0.479\pm0.007\) & \(0.289\pm0.009\) & \(-0.071\pm0.015\) \\
\(X-Y-Z\) & \(0.152\pm0.005\) & \(-0.087\pm0.005\) & \(0.133\pm0.016\) \\
\(\cos(\pi/6)X+\sin(\pi/6)Z\) & \(0.510\pm0.007\) & \(0.338\pm0.009\) & \(0.055\pm0.015\) \\
\(\cos(2\pi/6)X+\sin(2\pi/6)Z\) & \(0.141\pm0.005\) & \(-0.095\pm0.005\) & \(0.115\pm0.016\) \\
\(\cos(4\pi/6)X+\sin(4\pi/6)Z\) & \(-0.090\pm0.002\) & \(0.023\pm0.003\) & \(0.062\pm0.016\) \\
\(\cos(5\pi/6)X+\sin(5\pi/6)Z\) & \(0.266\pm0.006\) & \(-0.030\pm0.007\) & \(-0.186\pm0.016\) \\
\(\cos(\pi/6)Y+\sin(\pi/6)Z\) & \(0.315\pm0.006\) & \(0.022\pm0.008\) & \(-0.251\pm0.015\) \\
\(\cos(2\pi/6)Y+\sin(2\pi/6)Z\) & \(-0.053\pm0.003\) & \(-0.017\pm0.003\) & \(0.201\pm0.016\) \\
\(\cos(4\pi/6)Y+\sin(4\pi/6)Z\) & \(0.115\pm0.004\) & \(-0.109\pm0.004\) & \(0.189\pm0.016\) \\
\(\cos(5\pi/6)Y+\sin(5\pi/6)Z\) & \(0.494\pm0.007\) & \(0.304\pm0.009\) & \(-0.021\pm0.016\) \\
\hline
\multicolumn{1}{c}{Fidelity of \(\ket{D_6^{\,3}}\)}
& \multicolumn{3}{c}{\(0.638\pm0.005\)} \\
\end{tabular}
\end{ruledtabular}
\end{table}

\begin{table}[t]
\caption{Experimental Results of inequalities $\mathcal{I}_1$ sets in $\ket{D_6^{\,3}}$ at $F=0.638\pm0.005$. \(m_{00}\), \(m_{11}\), \(m_{0011}\), and \(m_{1111}\) are extracted from the six-body expectation of 15 operators according to Eq.\eqref{eq:run_marginal_correlators}---Eq.\eqref{eq:S_m_relation}. Under experimental configuration with parametric photon no filtering, 3000 mW pump power, ppKTP crystal temperature stabilized at \(40^\circ\mathrm{C}\) and walk-off compensation (KDP) at \(6^\circ\) offset. Each measurement basis accumulated approximately 4000 events over a 600-second integration time. Error bars are estimated using Poisson Monte Carlo parametric bootstrapping of the coincidence counts.}
\label{tab:PI_DI_0}
\setlength{\tabcolsep}{2.5pt}
\renewcommand{\arraystretch}{0.85}
\begin{ruledtabular}
\small
\begin{tabular}{crrrr}
\multicolumn{1}{c}{\(\vphantom{\bm{\mathcal{RUN}}}\)Run}
& \multicolumn{1}{c}{\(m_{00}\)}
& \multicolumn{1}{c}{\(m_{11}\)}
& \multicolumn{1}{c}{\(m_{0011}\)}
& \multicolumn{1}{c}{\(m_{1111}\)} \\
\hline
\(\vphantom{\bm{\mathcal{RUN}}}\bm{\mathcal{RUN}}\,\bm{1}\)  & \(0.083\pm0.016\) & \(-0.054\pm0.006\) & \(0.169\pm0.005\) & \(-0.032\pm0.016\) \\
\(\vphantom{\bm{\mathcal{RUN}}}\bm{\mathcal{RUN}}\,\bm{2}\)  & \(0.074\pm0.016\) & \(-0.058\pm0.006\) & \(0.162\pm0.005\) & \(-0.030\pm0.016\) \\
\(\vphantom{\bm{\mathcal{RUN}}}\bm{\mathcal{RUN}}\,\bm{3}\)  & \(0.031\pm0.016\) & \(-0.056\pm0.006\) & \(0.179\pm0.005\) & \(0.011\pm0.016\) \\
\(\vphantom{\bm{\mathcal{RUN}}}\bm{\mathcal{RUN}}\,\bm{4}\)  & \(0.068\pm0.015\) & \(-0.048\pm0.006\) & \(0.172\pm0.005\) & \(-0.017\pm0.016\) \\
\(\vphantom{\bm{\mathcal{RUN}}}\bm{\mathcal{RUN}}\,\bm{5}\)  & \(0.028\pm0.016\) & \(-0.063\pm0.005\) & \(0.172\pm0.005\) & \(-0.014\pm0.016\) \\
\(\vphantom{\bm{\mathcal{RUN}}}\bm{\mathcal{RUN}}\,\bm{6}\)  & \(0.035\pm0.016\) & \(-0.060\pm0.006\) & \(0.168\pm0.005\) & \(-0.027\pm0.016\) \\
\(\vphantom{\bm{\mathcal{RUN}}}\bm{\mathcal{RUN}}\,\bm{7}\)  & \(0.054\pm0.016\) & \(-0.052\pm0.006\) & \(0.175\pm0.005\) & \(0.014\pm0.016\) \\
\(\vphantom{\bm{\mathcal{RUN}}}\bm{\mathcal{RUN}}\,\bm{8}\)  & \(0.067\pm0.016\) & \(-0.057\pm0.006\) & \(0.171\pm0.005\) & \(-0.003\pm0.016\) \\
\(\vphantom{\bm{\mathcal{RUN}}}\bm{\mathcal{RUN}}\,\bm{9}\)  & \(0.065\pm0.016\) & \(-0.055\pm0.006\) & \(0.171\pm0.005\) & \(-0.007\pm0.016\) \\
\(\vphantom{\bm{\mathcal{RUN}}}\bm{\mathcal{RUN}}\,\bm{10}\) & \(0.061\pm0.016\) & \(-0.048\pm0.006\) & \(0.168\pm0.005\) & \(-0.034\pm0.016\) \\
\(\vphantom{\bm{\mathcal{RUN}}}\bm{\mathcal{RUN}}\,\bm{11}\) & \(0.078\pm0.016\) & \(-0.057\pm0.006\) & \(0.160\pm0.005\) & \(-0.050\pm0.016\) \\
\(\vphantom{\bm{\mathcal{RUN}}}\bm{\mathcal{RUN}}\,\bm{12}\) & \(0.017\pm0.016\) & \(-0.064\pm0.005\) & \(0.172\pm0.005\) & \(-0.001\pm0.016\) \\
\(\vphantom{\bm{\mathcal{RUN}}}\bm{\mathcal{RUN}}\,\bm{13}\) & \(0.100\pm0.015\) & \(-0.050\pm0.006\) & \(0.172\pm0.005\) & \(-0.017\pm0.015\) \\
\(\vphantom{\bm{\mathcal{RUN}}}\bm{\mathcal{RUN}}\,\bm{14}\) & \(0.056\pm0.015\) & \(-0.059\pm0.005\) & \(0.171\pm0.005\) & \(-0.024\pm0.015\) \\
\(\vphantom{\bm{\mathcal{RUN}}}\bm{\mathcal{RUN}}\,\bm{15}\) & \(0.076\pm0.016\) & \(-0.064\pm0.005\) & \(0.172\pm0.005\) & \(0.010\pm0.016\) \\
\hline
\(\bar{m}\) & \(0.060\pm0.004\) & \(-0.056\pm0.001\) & \(0.170\pm0.001\) & \(-0.015\pm0.004\) \\
\(S\) & \(1.787\pm0.121\) & \(-1.686\pm0.043\) & \(61.262\pm0.456\) & \(-5.281\pm1.458\) \\
\hline
\(I\) & \multicolumn{4}{c}{\(-16.095\pm0.177\)} \\
\(R\) & \multicolumn{4}{c}{\(0.894\pm0.010\)} \\
\end{tabular}
\end{ruledtabular}
\end{table}

\begin{table}[t]
\caption{Experimental Results of inequalities $\mathcal{I}_2$ sets in $\ket{D_6^{\,3}}$ at $F=0.638\pm0.005$. \(m_{00}\), \(m_{11}\), \(m_{0011}\), and \(m_{1111}\) are extracted from the six-body expectation of 15 operators according to Eq.\eqref{eq:run_marginal_correlators}---Eq.\eqref{eq:S_m_relation}. Under experimental configuration with parametric photon no filtering, 3000 mW pump power, ppKTP crystal temperature stabilized at \(40^\circ\mathrm{C}\) and walk-off compensation (KDP) at \(6^\circ\) offset. Each measurement basis accumulated approximately 4000 events over a 600-second integration time. Error bars are estimated using Poisson Monte Carlo parametric bootstrapping of the coincidence counts.}
\label{tab:PI_DI_1}
\setlength{\tabcolsep}{2.5pt}
\renewcommand{\arraystretch}{0.85}
\begin{ruledtabular}
\small
\begin{tabular}{crrrr}
\multicolumn{1}{c}{\(\vphantom{\bm{\mathcal{RUN}}}\)Run}
& \multicolumn{1}{c}{\(m_{00}\)}
& \multicolumn{1}{c}{\(m_{11}\)}
& \multicolumn{1}{c}{\(m_{0011}\)}
& \multicolumn{1}{c}{\(m_{1111}\)} \\
\hline
\(\vphantom{\bm{\mathcal{RUN}}}\bm{\mathcal{RUN}}\,\bm{1}\)  & \(-0.024\pm0.016\) & \(0.070\pm0.007\) & \(0.182\pm0.006\) & \(-0.074\pm0.016\) \\
\(\vphantom{\bm{\mathcal{RUN}}}\bm{\mathcal{RUN}}\,\bm{2}\)  & \(-0.010\pm0.016\) & \(0.049\pm0.007\) & \(0.175\pm0.006\) & \(-0.085\pm0.016\) \\
\(\vphantom{\bm{\mathcal{RUN}}}\bm{\mathcal{RUN}}\,\bm{3}\)  & \(-0.023\pm0.016\) & \(0.074\pm0.007\) & \(0.171\pm0.006\) & \(-0.089\pm0.016\) \\
\(\vphantom{\bm{\mathcal{RUN}}}\bm{\mathcal{RUN}}\,\bm{4}\)  & \(0.006\pm0.015\) & \(0.074\pm0.007\) & \(0.188\pm0.006\) & \(-0.098\pm0.015\) \\
\(\vphantom{\bm{\mathcal{RUN}}}\bm{\mathcal{RUN}}\,\bm{5}\)  & \(-0.018\pm0.016\) & \(0.077\pm0.007\) & \(0.182\pm0.007\) & \(-0.073\pm0.016\) \\
\(\vphantom{\bm{\mathcal{RUN}}}\bm{\mathcal{RUN}}\,\bm{6}\)  & \(-0.012\pm0.015\) & \(0.076\pm0.007\) & \(0.176\pm0.006\) & \(-0.106\pm0.015\) \\
\(\vphantom{\bm{\mathcal{RUN}}}\bm{\mathcal{RUN}}\,\bm{7}\)  & \(-0.044\pm0.015\) & \(0.060\pm0.007\) & \(0.184\pm0.006\) & \(-0.111\pm0.015\) \\
\(\vphantom{\bm{\mathcal{RUN}}}\bm{\mathcal{RUN}}\,\bm{8}\)  & \(0.021\pm0.016\) & \(0.075\pm0.007\) & \(0.197\pm0.006\) & \(-0.093\pm0.016\) \\
\(\vphantom{\bm{\mathcal{RUN}}}\bm{\mathcal{RUN}}\,\bm{9}\)  & \(0.003\pm0.016\) & \(0.082\pm0.007\) & \(0.192\pm0.006\) & \(-0.113\pm0.016\) \\
\(\vphantom{\bm{\mathcal{RUN}}}\bm{\mathcal{RUN}}\,\bm{10}\) & \(-0.031\pm0.016\) & \(0.040\pm0.007\) & \(0.176\pm0.006\) & \(-0.078\pm0.016\) \\
\(\vphantom{\bm{\mathcal{RUN}}}\bm{\mathcal{RUN}}\,\bm{11}\) & \(-0.015\pm0.015\) & \(0.061\pm0.007\) & \(0.183\pm0.006\) & \(-0.102\pm0.015\) \\
\(\vphantom{\bm{\mathcal{RUN}}}\bm{\mathcal{RUN}}\,\bm{12}\) & \(-0.024\pm0.015\) & \(0.066\pm0.007\) & \(0.174\pm0.006\) & \(-0.118\pm0.015\) \\
\(\vphantom{\bm{\mathcal{RUN}}}\bm{\mathcal{RUN}}\,\bm{13}\) & \(-0.017\pm0.015\) & \(0.075\pm0.007\) & \(0.184\pm0.006\) & \(-0.094\pm0.015\) \\
\(\vphantom{\bm{\mathcal{RUN}}}\bm{\mathcal{RUN}}\,\bm{14}\) & \(-0.045\pm0.016\) & \(0.056\pm0.007\) & \(0.175\pm0.006\) & \(-0.096\pm0.016\) \\
\(\vphantom{\bm{\mathcal{RUN}}}\bm{\mathcal{RUN}}\,\bm{15}\) & \(-0.013\pm0.015\) & \(0.061\pm0.007\) & \(0.174\pm0.006\) & \(-0.085\pm0.015\) \\
\hline
\(\bar{m}\) & \(-0.016\pm0.004\) & \(0.066\pm0.002\) & \(0.181\pm0.002\) & \(-0.094\pm0.004\) \\
\(S\) & \(-0.491\pm0.120\) & \(1.990\pm0.053\) & \(65.108\pm0.582\) & \(-33.944\pm1.438\) \\
\hline
\(I\) & \multicolumn{4}{c}{\(-104.935\pm1.731\)} \\
\(R\) & \multicolumn{4}{c}{\(0.777\pm0.013\)} \\
\end{tabular}
\end{ruledtabular}
\end{table}

\begin{table}[t]
\caption{
Experimental Results of the fidelity of eight-photon Dicke state at
$F=0.916\pm0.011$. \(C_2(A)\), \(C_4(A)\), \(C_6(A)\) and \(C_8(A)\) are extracted from the eight-body expectation of 29 operators according to Eq\eqref{eq:D8_c}. Under experimental configuration with parametric photon no filtering, 3080 mW pump power, ppKTP crystal temperature stabilized at 28$^{\circ}$C. Each measurement basis accumulated approximately 700 events over a 21600-second integration time. Error bars are estimated using Poisson Monte Carlo parametric bootstrapping of the coincidence counts.
}
\label{tab:F_D8}
\setlength{\tabcolsep}{4.5pt}
\renewcommand{\arraystretch}{0.92}
\begin{ruledtabular}
\begin{tabular}{crrrr}
\multicolumn{1}{c}{Measurement setting \(A\)}
& \multicolumn{1}{c}{\(C_2(A)\)}
& \multicolumn{1}{c}{\(C_4(A)\)}
& \multicolumn{1}{c}{\(C_6(A)\)}
& \multicolumn{1}{c}{\(C_8(A)\)} \\
\hline
\(X\) & \(0.605\pm0.017\) & \(0.544\pm0.019\) & \(0.580\pm0.018\) & \(0.866\pm0.018\) \\
\(Y\) & \(0.613\pm0.017\) & \(0.545\pm0.019\) & \(0.571\pm0.019\) & \(0.832\pm0.020\) \\
\(Z\) & \(-0.136\pm0.001\) & \(0.078\pm0.001\) & \(-0.124\pm0.002\) & \(0.820\pm0.022\) \\
\(Y-Z\) & \(0.209\pm0.011\) & \(-0.012\pm0.008\) & \(-0.052\pm0.013\) & \(0.252\pm0.038\) \\
\(Y+Z\) & \(0.229\pm0.011\) & \(-0.004\pm0.009\) & \(-0.043\pm0.014\) & \(0.276\pm0.036\) \\
\(X+Y\) & \(0.615\pm0.017\) & \(0.556\pm0.019\) & \(0.590\pm0.018\) & \(0.854\pm0.019\) \\
\(X-Y\) & \(0.605\pm0.017\) & \(0.555\pm0.019\) & \(0.581\pm0.018\) & \(0.862\pm0.018\) \\
\(X-Z\) & \(0.210\pm0.010\) & \(-0.020\pm0.008\) & \(-0.042\pm0.013\) & \(0.289\pm0.037\) \\
\(X+Z\) & \(0.237\pm0.011\) & \(0.003\pm0.010\) & \(-0.024\pm0.014\) & \(0.305\pm0.036\) \\
\(Y+2Z\) & \(0.001\pm0.006\) & \(-0.028\pm0.003\) & \(0.054\pm0.005\) & \(-0.372\pm0.037\) \\
\(Y-2Z\) & \(-0.003\pm0.005\) & \(-0.028\pm0.003\) & \(0.049\pm0.005\) & \(-0.309\pm0.037\) \\
\(X+2Z\) & \(0.004\pm0.006\) & \(-0.026\pm0.004\) & \(0.058\pm0.005\) & \(-0.332\pm0.036\) \\
\(X-2Z\) & \(-0.010\pm0.005\) & \(-0.024\pm0.004\) & \(0.059\pm0.005\) & \(-0.338\pm0.036\) \\
\(2Y-Z\) & \(0.469\pm0.016\) & \(0.268\pm0.018\) & \(0.090\pm0.024\) & \(-0.239\pm0.038\) \\
\(2Y+Z\) & \(0.442\pm0.014\) & \(0.222\pm0.015\) & \(0.026\pm0.021\) & \(-0.306\pm0.034\) \\
\(X+2Y\) & \(0.603\pm0.017\) & \(0.534\pm0.019\) & \(0.571\pm0.018\) & \(0.829\pm0.020\) \\
\(X-2Y\) & \(0.589\pm0.017\) & \(0.529\pm0.019\) & \(0.569\pm0.018\) & \(0.832\pm0.020\) \\
\(2X+Y\) & \(0.584\pm0.017\) & \(0.511\pm0.019\) & \(0.551\pm0.018\) & \(0.837\pm0.020\) \\
\(2X-Y\) & \(0.577\pm0.018\) & \(0.512\pm0.020\) & \(0.551\pm0.019\) & \(0.856\pm0.019\) \\
\(2X+Z\) & \(0.458\pm0.015\) & \(0.251\pm0.016\) & \(0.065\pm0.022\) & \(-0.286\pm0.035\) \\
\(2X-Z\) & \(0.415\pm0.014\) & \(0.192\pm0.015\) & \(-0.007\pm0.021\) & \(-0.308\pm0.035\) \\
\(X+Y+Z\) & \(0.335\pm0.013\) & \(0.092\pm0.013\) & \(-0.071\pm0.018\) & \(-0.021\pm0.037\) \\
\(X+Y-Z\) & \(0.337\pm0.013\) & \(0.093\pm0.013\) & \(-0.073\pm0.018\) & \(0.011\pm0.037\) \\
\(X-Y+Z\) & \(0.361\pm0.013\) & \(0.119\pm0.014\) & \(-0.047\pm0.019\) & \(-0.034\pm0.038\) \\
\(X-Y-Z\) & \(0.322\pm0.013\) & \(0.075\pm0.012\) & \(-0.082\pm0.017\) & \(0.016\pm0.038\) \\
\(X+Y+2Z\) & \(0.113\pm0.009\) & \(-0.035\pm0.007\) & \(0.024\pm0.010\) & \(0.049\pm0.040\) \\
\(X+Y-2Z\) & \(0.101\pm0.008\) & \(-0.044\pm0.006\) & \(0.043\pm0.009\) & \(0.017\pm0.039\) \\
\(X-Y+2Z\) & \(0.110\pm0.008\) & \(-0.052\pm0.005\) & \(0.017\pm0.009\) & \(0.058\pm0.039\) \\
\(X-Y-2Z\) & \(0.070\pm0.007\) & \(-0.052\pm0.004\) & \(0.040\pm0.007\) & \(-0.057\pm0.039\) \\
\hline
\multicolumn{1}{c}{Fidelity of \(\ket{D_8^{\,4}}\)}
& \multicolumn{4}{c}{\(0.916\pm0.011\)} \\
\end{tabular}
\end{ruledtabular}
\end{table}

\begin{table}[t]
\caption{
Experimental results of permutation-invariant Bell inequality $\mathcal{I}_3$ sets in $\ket{D_8^{\,4}}$ at $F=0.916\pm0.011$. \(m_{00}\), \(m_{11}\), \(m_{0011}\), and \(m_{1111}\) are extracted from the eight-body expectation values of measured operators according to Eq.\eqref{eq:run_marginal_correlators}---Eq.\eqref{eq:S_m_relation}. Under experimental configuration with parametric photon no filtering, 3080 mW pump power, ppKTP crystal temperature stabilized at 28$^{\circ}$C. Each measurement basis accumulated approximately
680 eight-fold coincidence events over a 21600-second integration time.
Error bars are estimated using Poisson Monte Carlo parametric bootstrapping of the coincidence counts.
}
\label{tab:PI_D8}
\setlength{\tabcolsep}{5pt}
\renewcommand{\arraystretch}{1.1}
\begin{ruledtabular}
\begin{tabular}{crrrr}
\multicolumn{1}{c}{\(\vphantom{\bm{\mathcal{RUN}}}\)Run}
& \multicolumn{1}{c}{\(m_{00}\)}
& \multicolumn{1}{c}{\(m_{11}\)}
& \multicolumn{1}{c}{\(m_{0011}\)}
& \multicolumn{1}{c}{\(m_{1111}\)} \\
\hline
\(\vphantom{\bm{\mathcal{RUN}}}\bm{\mathcal{RUN}}\,1\)  & \(-0.039\pm0.037\) & \(-0.014\pm0.008\) & \(0.089\pm0.007\) & \(-0.033\pm0.008\) \\
\(\vphantom{\bm{\mathcal{RUN}}}\bm{\mathcal{RUN}}\,2\)  & \(-0.053\pm0.037\) & \(-0.019\pm0.008\) & \(0.088\pm0.007\) & \(-0.031\pm0.008\) \\
\(\vphantom{\bm{\mathcal{RUN}}}\bm{\mathcal{RUN}}\,3\)  & \(-0.085\pm0.040\) & \(-0.015\pm0.009\) & \(0.098\pm0.009\) & \(-0.009\pm0.009\) \\
\(\vphantom{\bm{\mathcal{RUN}}}\bm{\mathcal{RUN}}\,4\)  & \(-0.040\pm0.038\) & \(0.003\pm0.009\) & \(0.095\pm0.008\) & \(-0.026\pm0.009\) \\
\(\vphantom{\bm{\mathcal{RUN}}}\bm{\mathcal{RUN}}\,5\)  & \(-0.024\pm0.038\) & \(-0.013\pm0.008\) & \(0.091\pm0.007\) & \(-0.031\pm0.008\) \\
\(\vphantom{\bm{\mathcal{RUN}}}\bm{\mathcal{RUN}}\,6\)  & \(-0.063\pm0.038\) & \(0.006\pm0.009\) & \(0.111\pm0.008\) & \(-0.023\pm0.009\) \\
\(\vphantom{\bm{\mathcal{RUN}}}\bm{\mathcal{RUN}}\,7\)  & \(-0.033\pm0.038\) & \(-0.014\pm0.009\) & \(0.102\pm0.008\) & \(-0.016\pm0.009\) \\
\(\vphantom{\bm{\mathcal{RUN}}}\bm{\mathcal{RUN}}\,8\)  & \(-0.081\pm0.039\) & \(-0.026\pm0.008\) & \(0.097\pm0.007\) & \(-0.027\pm0.008\) \\
\(\vphantom{\bm{\mathcal{RUN}}}\bm{\mathcal{RUN}}\,9\)  & \(-0.009\pm0.039\) & \(-0.011\pm0.008\) & \(0.100\pm0.007\) & \(-0.036\pm0.008\) \\
\(\vphantom{\bm{\mathcal{RUN}}}\bm{\mathcal{RUN}}\,10\) & \(-0.005\pm0.037\) & \(-0.003\pm0.009\) & \(0.101\pm0.008\) & \(-0.023\pm0.009\) \\
\(\vphantom{\bm{\mathcal{RUN}}}\bm{\mathcal{RUN}}\,11\) & \(-0.011\pm0.038\) & \(-0.002\pm0.009\) & \(0.108\pm0.008\) & \(-0.021\pm0.009\) \\
\(\vphantom{\bm{\mathcal{RUN}}}\bm{\mathcal{RUN}}\,12\) & \(0.026\pm0.039\) & \(-0.003\pm0.009\) & \(0.102\pm0.008\) & \(-0.027\pm0.009\) \\
\(\vphantom{\bm{\mathcal{RUN}}}\bm{\mathcal{RUN}}\,13\) & \(0.020\pm0.038\) & \(-0.017\pm0.008\) & \(0.081\pm0.007\) & \(-0.034\pm0.008\) \\
\(\vphantom{\bm{\mathcal{RUN}}}\bm{\mathcal{RUN}}\,14\) & \(-0.034\pm0.039\) & \(-0.010\pm0.009\) & \(0.105\pm0.008\) & \(-0.022\pm0.009\) \\
\(\vphantom{\bm{\mathcal{RUN}}}\bm{\mathcal{RUN}}\,15\) & \(0.001\pm0.038\) & \(-0.033\pm0.008\) & \(0.092\pm0.007\) & \(-0.014\pm0.008\) \\
\(\vphantom{\bm{\mathcal{RUN}}}\bm{\mathcal{RUN}}\,16\) & \(-0.033\pm0.037\) & \(-0.012\pm0.008\) & \(0.103\pm0.007\) & \(-0.022\pm0.008\) \\
\(\vphantom{\bm{\mathcal{RUN}}}\bm{\mathcal{RUN}}\,17\) & \(0.002\pm0.039\) & \(-0.012\pm0.009\) & \(0.092\pm0.008\) & \(-0.019\pm0.009\) \\
\(\vphantom{\bm{\mathcal{RUN}}}\bm{\mathcal{RUN}}\,18\) & \(-0.048\pm0.039\) & \(-0.028\pm0.009\) & \(0.100\pm0.008\) & \(-0.002\pm0.009\) \\
\(\vphantom{\bm{\mathcal{RUN}}}\bm{\mathcal{RUN}}\,19\) & \(-0.056\pm0.039\) & \(-0.032\pm0.008\) & \(0.089\pm0.008\) & \(-0.014\pm0.008\) \\
\(\vphantom{\bm{\mathcal{RUN}}}\bm{\mathcal{RUN}}\,20\) & \(-0.045\pm0.039\) & \(-0.015\pm0.009\) & \(0.096\pm0.008\) & \(-0.023\pm0.009\) \\
\(\vphantom{\bm{\mathcal{RUN}}}\bm{\mathcal{RUN}}\,21\) & \(0.040\pm0.039\) & \(-0.005\pm0.009\) & \(0.107\pm0.008\) & \(-0.023\pm0.009\) \\
\(\vphantom{\bm{\mathcal{RUN}}}\bm{\mathcal{RUN}}\,22\) & \(-0.014\pm0.039\) & \(-0.008\pm0.009\) & \(0.100\pm0.008\) & \(-0.024\pm0.009\) \\
\(\vphantom{\bm{\mathcal{RUN}}}\bm{\mathcal{RUN}}\,23\) & \(-0.024\pm0.039\) & \(-0.013\pm0.009\) & \(0.091\pm0.008\) & \(-0.028\pm0.009\) \\
\(\vphantom{\bm{\mathcal{RUN}}}\bm{\mathcal{RUN}}\,24\) & \(-0.054\pm0.038\) & \(-0.013\pm0.009\) & \(0.095\pm0.008\) & \(-0.009\pm0.009\) \\
\(\vphantom{\bm{\mathcal{RUN}}}\bm{\mathcal{RUN}}\,25\) & \(-0.044\pm0.039\) & \(-0.014\pm0.009\) & \(0.087\pm0.008\) & \(-0.026\pm0.009\) \\
\(\vphantom{\bm{\mathcal{RUN}}}\bm{\mathcal{RUN}}\,26\) & \(-0.057\pm0.039\) & \(-0.019\pm0.009\) & \(0.104\pm0.008\) & \(-0.016\pm0.009\) \\
\(\vphantom{\bm{\mathcal{RUN}}}\bm{\mathcal{RUN}}\,27\) & \(-0.022\pm0.039\) & \(-0.019\pm0.008\) & \(0.095\pm0.007\) & \(-0.029\pm0.008\) \\
\(\vphantom{\bm{\mathcal{RUN}}}\bm{\mathcal{RUN}}\,28\) & \(0.031\pm0.038\) & \(-0.004\pm0.009\) & \(0.109\pm0.008\) & \(-0.024\pm0.009\) \\
\hline
\(\bar{m}\) & \(-0.027\pm0.007\) & \(-0.013\pm0.002\) & \(0.097\pm0.002\) & \(-0.023\pm0.002\) \\
\(S\) & \(-1.509\pm0.405\) & \(-0.727\pm0.091\) & \(163.643\pm2.466\) & \(-37.832\pm2.722\) \\
\hline
\(I\) & \multicolumn{4}{c}{\(-55.905\pm0.889\)} \\
\(R\) & \multicolumn{4}{c}{\(1.165\pm0.019\)} \\
\end{tabular}
\end{ruledtabular}
\end{table}

\end{document}